\tikzset{every tree node/.append style={circle, inner sep = 1.5pt, fill = black}}
\tikzstyle{cn} = [edge from parent path={(\tikzparentnode.center) -- (\tikzchildnode)}]
\tikzstyle{nc} = [edge from parent path={(\tikzparentnode) -- (\tikzchildnode.center)}]
\tikzstyle{cc} = [edge from parent path={(\tikzparentnode.center) -- (\tikzchildnode.center)}]
\tikzstyle{nn} = [edge from parent path={(\tikzparentnode) -- (\tikzchildnode)}]
\newcommand{\A}{\mathcal{A}}
\newcommand{\B}{\mathcal{B}}
\newcommand{\C}{\mathcal{C}}
\newcommand{\G}{\mathcal{G}}
\newcommand{\N}{\mathbb{N}}
\newcommand{\T}{\mathcal{T}}
\newcommand{\AC}{\mathrm{AC}}
\newcommand{\NC}{\mathrm{NC}}
\newcommand{\TC}{\mathrm{TC}}
\newcommand{\DET}{\mathrm{DET}}
\renewcommand{\P}{\mathrm{P}}
\newcommand{\ALOGTIME}{\mathrm{ALOGTIME}}
	\newtheorem{theorem}{Theorem}
	\numberwithin{theorem}{section}
	\newtheorem{definition}[theorem]{Definition}
	\newtheorem{lemma}[theorem]{Lemma}
	\newtheorem{proposition}[theorem]{Proposition}
	\newtheorem{example}[theorem]{Example}
\title{A universal tree balancing theorem}
	\author{Moses Ganardi, Markus Lohrey}
	\address{University of Siegen, Germany}
	\email{\{ganardi,lohrey\}@eti.uni-siegen.de}
\begin{document}
		
\begin{abstract}
	We present a general framework for balancing expressions (terms) in form of so called tree straight-line programs. The latter can be seen as circuits over the free term algebra extended by contexts (terms with a hole) and the operations which insert terms/contexts into contexts. It is shown that for every term one can compute in DLOGTIME-uniform TC$^0$ a tree straight-line program of logarithmic depth and size $O(n/\log n)$. This allows reducing the term evaluation problem over an arbitrary algebra $\mathcal{A}$ to the term evaluation problem over a derived two-sorted algebra $\mathcal{F}(\mathcal{A})$. Several applications are presented: (i) an alternative proof for a recent result by Krebs, Limaye and Ludwig on the expression evaluation problem is given, (ii) it is shown that expressions for an arbitrary (possibly non-commutative) semiring can be transformed in DLOGTIME-uniform TC$^0$ into equivalent circuits of logarithmic depth and size $O(n/\log n)$, and (iii) a corresponding result for regular expressions is shown.
	
\end{abstract}

\maketitle

\section{Introduction}
	
Tree balancing is an important algorithmic technique in the area of parallel algorithms and
circuit complexity. 
The goal is to compute from a given tree that represents an algebraic expression
an equivalent expression of logarithmic depth, which then can be evaluated in time $O(\log n)$ on 
a parallel computation model such as  a PRAM. Equivalence of expressions  usually means that the expressions evaluate to the same
element in an underlying algebraic structure. A widely studied example in this context is the Boolean expression balancing problem,
where the underlying algebraic structure is the Boolean algebra $(\{0,1\}, \vee, \wedge, \neg)$.
Instead of computing an equivalent balanced expression, it is often more natural to compute an equivalent circuit (or dag),
which is a succinct representation of a tree, where identical subtrees are represented only once.
For the Boolean expression balancing problem, 
Spira \cite{Spira71} proved that for every Boolean expression of size $n$ there exists an equivalent Boolean circuit of depth
$O(\log n)$ and size $O(n)$, where the size of a circuit is the number of gates and the depth of a circuit is the length of a longest
path from an input gate to the output gate. Brent \cite{Brent74} extended Spira's theorem to expressions over arbitrary semirings and moreover
improved the constant in the $O(\log n)$ bound. 
Subsequent improvements that mainly concern constant factors can be found in \cite{BonetB94,BshoutyCE95}.

In our recent paper \cite{GHJLN17} we developed a new approach for the construction of logarithmic depth circuits from expressions
that works in two steps. The first step is purely syntactic and is formulated in terms of so called tree straight-line programs.
These are usually defined as  context-free tree grammars that generate a unique tree, see \cite{Lohrey15dlt} for a survey. 
Here we prefer an equivalent definition in terms of circuits over an extension of the free term algebra.
Recall that for a fixed
set $\Sigma$ of ranked function symbols (meaning that every symbol $f \in \Sigma$ has a rank which determines the number
of children of an $f$-labelled node), the free term algebra consists of the set $T(\Sigma)$ of all trees (or terms) over $\Sigma$. Every symbol 
$f \in \Sigma$ of rank $r$ is interpreted by the mapping $(t_1, t_2, \ldots, t_r) \mapsto f(t_1, t_2, \ldots, t_r)$.
Our result from  \cite{GHJLN17} can be formulated over a two-sorted extension $\A(\Sigma)$ of the free term algebra, where the two sorts are
(i) the set $T(\Sigma)$ of all trees over $\Sigma$ and (ii) the set $C(\Sigma)$ of all contexts over $\Sigma$. A context is a tree with
a distinguished leaf that is labelled with a parameter symbol $x$. This allows to do composition of a context $s$ with another
context or tree $t$ by replacing the $x$-labelled leaf in $s$ with $t$; the result is denoted by $s(t)$. 
The algebra $\A(\Sigma)$ is the extension of the free term algebra by the following additional operations:
\begin{itemize}
	\item for all $f \in \Sigma$ of rank $r \geq 1$ and $1 \leq i \leq r$ the $(r-1)$-ary operation
	      $\hat{f}_i : T(\Sigma)^{r-1} \to C(\Sigma)$ with $\hat{f}_i(t_1, \ldots, t_{r-1}) =  f(t_1, \ldots, t_{i-1}, x, t_{i}, \ldots, t_{r-1})$.
	\item the substitution $(s,t) \mapsto s(t)$ for $s \in C(\Sigma)$ and $t \in T(\Sigma)$,
	\item the composition $(s,t) \mapsto s(t)$ for $s,t \in C(\Sigma)$.
\end{itemize}
A tree straight-line program is a circuit over the structure $\A(\Sigma)$ that evaluates to an element of $T(\Sigma)$.
In \cite{GHJLN17}, we proved that from a given tree $t \in T(\Sigma)$ of size $n$ one can construct in logarithmic space
(or, alternatively, in linear time)
a tree straight-line program that evaluates to $t$, has depth $O(\log n)$ and size $O(n/\log n)$ (a simple counting argument shows
that the size bound $O(n/\log n)$ is optimal).
This can be seen as a universal balancing result. It does not refer to an interpretation of the symbols from $\Sigma$
and is purely syntactic. 

The second step of our tree balancing approach from \cite{GHJLN17}  depends on the algebraic structure $\mathcal{A}$ over which the initial
tree $t$ is interpreted. Contexts, i.e. elements of $C(\Sigma)$ cannot be evaluated to elements of $\mathcal{A}$, but they naturally
evaluate to unary linear term functions on $\mathcal{A}$. For many classes of algebraic structures it is possible to represent
these unary functions  by tuples over $\mathcal{A}$. For example, if $\mathcal{A}$ is a (not necessarily commutative) semiring, then
a unary linear term function is an affine mapping $x \mapsto a x b + c$, which can be encoded by the tuple $(a,b,c)$. 

For structures
$\mathcal{A}$ that allow such a representation of unary linear term functions one can transform the 
tree straight-line program obtained from the first step 
into a circuit over the  structure $\mathcal{A}$ that is equivalent to the initial tree $t$. Moreover, the depth (resp., size) of this circuit
is still $O(\log n)$ (resp., $O(n/\log n)$) and the computation of the circuit from the tree straight-line program is of very low
complexity; more precisely it can be accomplished in $\TC^0$ (we always refer to the DLOGTIME-uniform variant of $\TC^0$).

The main complexity bottleneck in the approach sketched above is the logspace bound  in the first step.
Logarithmic space is for many applications too high. A good example
is Buss seminal result stating that the Boolean expression evaluation problem belongs to $\NC^1$ \cite{Bus87}. For this result it is crucial
that the Boolean expression is given as a string (for instance its preorder notation) and not as a tree in pointer representation
(for the latter representation, the evaluation problem is logspace-complete). For Boolean expressions of logarithmic depth, the evaluation
problem can be easily solved in logarithmic time on an alternating Turing machine with a random access tape, which shows membership
in $\ALOGTIME = \NC^1$. Hence, one can obtain an alternative proof of Buss result by showing that Boolean expressions can be 
balanced in $\NC^1$. In this paper, we achieve this goal as a corollary of our main result. 

\subsection{Main results}

We show that the first step of our balancing procedure in \cite{GHJLN17}  can be carried out in $\TC^0$ (using an algorithm different
from the one in \cite{GHJLN17}).
More precisely, we show that from a given expression of size $n$ one can construct in $\TC^0$ a tree straight-line program of depth
$O(\log n)$ and size $O(n/\log n)$. 
The tree straight-line program is given in the {\em extended connection representation}, which is crucial for the applications.
Our approach uses the tree contraction procedure of Abrahamson et al.~\cite{AbrahamsonDKP89}. Buss \cite{Buss93} proved that 
tree contraction can be implemented in $\NC^1$. Elberfeld et al.~\cite{ElberfeldJT12} improved this result to $\TC^0$, thereby showing that one can compute
a tree decomposition of width three and logarithmic height from a given tree in $\TC^0$.  

We follow the ideas from \cite{Buss93,ElberfeldJT12} but have to do several modifications, in particular in order to achieve the size bound $O(n/\log n)$. 
To avoid the usually very technical uniformity considerations, we use the characterization of $\TC^0$ by FOM (first-order logic with the
majority quantifier). In a first step, we show how to define in FOM for a given expression a hierarchical decomposition into subexpressions and contexts,
where the depth of the composition is logarithmic in the size of the expression. In a second step, this decomposition is then transformed into a 
tree straight-line program. To achieve the size bound of $O(n/\log n)$ we use a preprocessing of the tree that is based on the tree contraction
approach from \cite{GMT88}.

We present three applications of our universal balancing result:
\begin{itemize}
	\item We present an alternative (and hopefully  simpler) proof
	      of the main result of \cite{KrebsLL17}, which states that the evaluation problem for expressions over an algebra $\mathcal{A}$
	      can be solved in DLOGTIME-uniform $\mathcal{F}(\A)\text{-}\NC^1$. Here, $\mathcal{F}(\A)$ is the extension of $\mathcal{A}$
	      by $\A[x]$, i.e. all linear unary term functions over $\mathcal{A}$,
	      together with the evaluation operation $\mathcal{A}[x] \times \mathcal{A} \to \mathcal{A}$
	      and the composition operation $\mathcal{A}[x] \times \mathcal{A}[x] \to \mathcal{A}[x]$.
	      The class $\mathcal{F}(\A)\text{-}\NC^1$ is defined by log-depth circuits of polynomial size 
	      over the algebra $\mathcal{F}(\A)$ that may also contain Boolean gates (the interplay between Boolean gates
	      and non-Boolean gates is achieved by multiplexer gates). 
	      We prove the result from \cite{KrebsLL17} as follows: Using our universal balancing theorem, we 
	      transform the input expression over the algebra $\mathcal{A}$ into an equivalent 
	      expression over $\mathcal{F}(\A)$ of logarithmic depth and polynomial size; 
	      see also Theorem~\ref{thm:main-algebra}. This first stage of the computation can be 
	      done in $\TC^0$ and hence in Boolean $\NC^1$. In a second state we use a universal 
	      evaluator circuit for the algebra $\mathcal{F}(\A)$ to evaluate the log-depth expression
	      computed in the first stage. 
	\item We show that for every semiring $\mathcal{S}$, one can transform in $\TC^0$ an arithmetic expression of size $n$ into
	      an equivalent arithmetic circuit of size $O(n / \log n)$ and depth $O(\log n)$.
	\item We show that every regular expression of size $n$ can be transformed in $\TC^0$ into an equivalent circuit (that uses the operators
	      $+$, $\cdot$ and ${}^*$) of size $O(n / \log n)$ and depth $O(\log n)$. This strengthens a result from \cite{GruberH08} stating that 
	      every regular expression of size $n$ has an equivalent regular expression of star height $O(\log n)$
	      (the complexity of this transformation and the total height of the resulting expression is not analyzed in \cite{GruberH08}).
\end{itemize}
Let us finally mention that the idea of evaluating expressions via unary linear term functions  can be also found
in \cite{MillerT87,MillerT97}, where the main goal is to develop optimal parallel circuit and term evaluation algorithms
for the EREW PRAM model.

\section{Preliminaries}
	
\subsection{Terms over algebras}
	
\label{sec-alg}
	
Let $S$ be a finite set of {\em sorts}. An {\em $S$-sorted set} $X$ is a family of sets $\{ X_s \}_{s \in S}$; it is finite
if every $X_s$ is finite.
An {\em $S$-sorted signature} $\Sigma$ is a finite $F$-sorted set $\Sigma$ of {\em function symbols}, where
$F \subseteq (S^* \times S)$ is finite.
If $f$ has sort $(s_1 \cdots s_r,s)$, we simply write $f: s_1 \times \dots \times s_r \to s$ and
call $r \in \N$ the {\em rank} of $f$.
An {\em $S$-sorted algebra} $\A$ over $\Sigma$ consists of an $S$-sorted non-empty {\em domain} $A = \{ A_s \}_{s \in S}$
and	{\em operations} $f^{\A}: A_{s_1} \times \dots \times A_{s_r} \to A_s$ for each function symbol
$f: s_1 \times \dots \times s_r \to s$ in $\Sigma$.
If $|S|=1$, an $S$-sorted signature $\Sigma$ is a {\em ranked alphabet} and an $S$-sorted algebra is simply called an {\em algebra}.

We define the $S$-sorted set $T(\Sigma)$ of {\em terms} over $\Sigma$ inductively:
If $f: s_1 \times \dots \times s_r \to s$ is a function symbol in $\Sigma$  where $r \ge 0$
and $t_1, \dots, t_r$ are terms over $\Sigma$ of sorts $s_1, \dots, s_r$, respectively,
then $f(t_1, \dots, t_r)$ is a term over $\Sigma$ of sort $s$.
A term $t$ over an algebra $\A$ is a term over its signature $\Sigma$ and we will also write 
$T(\A)$ for $T(\Sigma)$. The {\em value} $t^\A \in A$ of a term $t \in T(\A)$ is defined inductively:
If $t = f(t_1, \dots, t_r)$, then $t^\A = f^\A(t_1^\A, \dots, t_r^\A)$.
		
We will also consider terms with a hole, also known as {\em contexts}, which we will need only
for the case $|S|=1$, i.e., a ranked alphabet $\Sigma$. 
Let us fix a special symbol $x \not\in \Sigma$ of rank 0 (the parameter).
A context is obtained from a term $t \in T(\Sigma)$ 
by replacing an arbitrary subterm $t'$ by the symbol $x$. The set of all 
contexts over $\Sigma$ is denoted with $C(\Sigma)$, and if $\A$ is an algebra
over $\Sigma$, we also write $C(\A)$ for $C(\Sigma)$.
Formally, $C(\Sigma)$ is inductively 
defined as follows: $x \in C(\Sigma)$ and if $f \in \Sigma$ has rank $r$, $t_1, \ldots, t_{r-1} \in T(\Sigma)$,
$s \in C(\Sigma)$ and $1 \leq i \leq r$, then $f(t_1, \ldots, t_{i-1},s,t_i, \ldots, t_{r-1}) \in C(\Sigma)$.
Given a context $s$ and a term (resp., context) $t$, we can obtain a term (resp., context) $s(t)$ 
by replacing the unique occurrence of $x$ in $s$ by $t$.
		
In an algebra $\A$, a context $t \in C(\A)$ defines a (unary) {\em linear term function}
$t^{\A} : A \to A$ in the natural way: 
\begin{itemize}
	\item If $t = x$ then $t^\A$ is the identity function.
	\item If $t = f(t_1, \ldots, t_{i-1},s,t_i, \ldots, t_{r-1})$ with $t_1, \ldots, t_{r-1} \in T(\Sigma)$,
	      $s \in C(\Sigma)$, then for every $a \in A$: $t^{\A}(a) = f^\A(t_1^\A, \ldots, t^\A_{i-1},s^\A(a),t^\A_i, \ldots, t^\A_{r-1})$.
\end{itemize}

\subsection{Logical structures and graphs}
	
We will view most objects in this paper as logical structures in order to describe
computations on them by formulas of (extensions of) first-order logic in the framework of descriptive complexity \cite{Imm99}.
A {\em vocabulary} $\tau$ is a tuple $(R_1, \dots, R_k)$
of relation symbols $R_i$ with a certain {\em arity} $r_i \in \N$.
A {\em $\tau$-structure} $\G = (V,R_1^\G, \dots, R_k^\G)$ consists of a non-empty {\em domain} $V = V(\G)$ and
relations $R_i^\G \subseteq V^{r_i}$ for $1 \le i \le k$.
The relation symbols are usually identified with the relations themselves.
All structures in this paper are defined over finite domains $V$, and the size $|V|$ is also denoted by $|\G|$.
	
A {\em graph} $\G$ is a structure of the form $\G  = (V,(E_i)_{1 \le i \le k},(P_a)_{a \in A})$,
where all $E_i$ are binary edge relations and all $P_a$ are unary relations.
The elements of $A$
can be viewed as node labels.
If $\bigcup_{i=1}^k E_i$ is acyclic, $\G$ is called a {\em dag}.
A graph $\G$ is {\em $k$-ordered} if for all $u \in V$ and all $1 \le i \le k$ there exists at most one $v \in V$ with
$(u,v) \in E_i$. If $(u,v) \in E_i$ exists, we call $v$ the {\em $i$-th successor} of $u$.
A {\em tree} $\T = (V,(E_i)_{1 \le i \le k},(P_a)_{a \in A})$ is a graph such that $(V,\bigcup_{i=1}^k E_i)$ is a rooted tree in the usual sense
and the edge relations $E_i$ are pairwise disjoint.
We write $u \preceq_\T v$ if $u$ is an ancestor of $v$ in the rooted tree $(V,\bigcup_i E_i)$.
The {\em depth-first (left-to-right) order} on $V$ defines $v$ to be smaller than $w$ if and only if $v$ is an ancestor of $w$
or there exists a node $u$ and numbers $1 \le i < j \le k$ such that the $i$-th child of $u$ is ancestor of $v$ and
the $j$-th child of $u$ is ancestor of $w$.	
		
\subsection{Circuits}
	
We also make use of a more succinct representation of  terms as defined in Section~\ref{sec-alg}, namely as circuits.
Let $\Sigma$ be an $S$-sorted signature with maximal rank $k$.
A {\em circuit} over $\Sigma$ is a $k$-ordered dag $\C$ whose nodes are called {\em gates}.
The set of gates $V$ is implicitly $S$-sorted.
Each gate $v$ is labelled with a function symbol $f: s_1 \times \dots \times s_r \to s$ from $\Sigma$
such that $v$ has sort $s$ and exactly $r$ successors
where the $i$-th successor of $A$ has sort $s_i$ for all $1 \le i \le r$.
Furthermore $\C$ has a distinguished {\em output gate} $v_{\mathrm{out}} \in V$ (labelled by some special symbol).
The {\em depth} of $\C$ is the maximal path length in $\C$.
The {\em value} of $\C$ over an $S$-sorted algebra $\A$ over $\Sigma$ is defined naturally: One evaluates
all gates of $\C$ bottom-up: If all successor gates of a gate $v$
are evaluated then one can evaluate $v$.
Finally, the value of $\C$ is the value of the output variable $v_{\mathrm{out}}$.
	
In a slightly more general definition, we also allow {\em copy gates} to simplify certain constructions.
A copy gate $v$ (labelled by a special symbol) has exactly one successor $w$
and the value of $v$ is defined as the value of $w$.

\subsection{Circuit complexity and descriptive complexity}

We use standard definitions from circuit complexity, see e.g. \cite{Vol99}. 
The main complexity class used in this paper is DLOGTIME-uniform $\TC^0$,
which is the class of languages $L \subseteq \{0,1\}^*$
recognized by DLOGTIME-uniform circuit families of polynomial size and constant depth
with not-gates and  threshold gates of unbounded fan-in.
If instead of general threshold gates only and-gates and or-gates (again of unbounded fan-in)
are allowed, one obtains DLOGTIME-uniform $\AC^0$.
Analogously, one defines $\AC^0$- and $\TC^0$-computable functions $f: \{0,1\}^* \to \{0,1\}^*$
where the circuit outputs a bit string instead of a single bit.
The definition of DLOGTIME-uniformity can be found in \cite{BIS90}. The precise definition is not 
needed in this paper.

Instead of working with DLOGTIME-uniform circuit families, we will use equivalent
concepts from descriptive complexity based on the logics FO (first-order logic) and 
FOM (first-order logic with the majority quantifier)  \cite{Imm99}.
In this setting we assume that the domain of a structure has the form $\{1, \dots, n\}$.
Furthermore, the vocabulary implicitly contains the binary relations $<$ and $\mathrm{BIT}$,
where $<$ is always interpreted as the natural linear order on $\{1, \dots, n\}$
and $\mathrm{BIT}(i,j)$ is true iff the $j$-th bit of $i$ is 1.
We will not explicitly list these relations when defining structures.
The relations $<$ and $\mathrm{BIT}$ allow to access the bits of elements of the domain
and to do arithmetic manipulation with these elements. In particular, addition and multiplication on the numbers
$\{1, \dots, n\}$ are FO-definable using  $<$ and $\mathrm{BIT}$ \cite[Theorem~1.17]{Imm99}.
Furthermore, if $\Sigma$ is a finite alphabet, in a structure $\G$ of size $n$ we can quantify over sequences
$a_1 \cdots a_s \in \Sigma^*$ of length $s = O(\log n)$ by identifying such sequences by numbers of size $n^{O(1)}$,
or tuples over $V(\G)$ of constant length.
Using a suitable encoding, the BIT-predicate allows us to access each symbol $a_i$ in a FO-formula.
		
An {\em FO-computable function} (or {\em FO-query})
maps a structure $\G$ over some vocabulary to a structure $I(\G)$ over a possible different vocabulary
which is definable in $\G$ by a {\em $d$-dimensional interpretation $I$}
using first-order formulas.
That means, the domain $V(I(\G))$ is an FO-definable subset of $V(\G)^d$ and
each $r$-ary relation in $I(\G)$ is an FO-definable subset of $V(\G)^{d \cdot r}$;
for precise definitions we refer the reader to \cite{Imm99}.

If we additionally allow a majority quantifier in the formulas, we obtain {\em FOM-computable} functions.
Roughly speaking, FOM-logic is the extension of FO-logic by the ability to count. Note that the size of 
$I(\G)$ is polynomially bounded in the size of $\G$.

Notice that, formally one also needs to logically define a linear order $<$ and the BIT-predicate on the output structure $I(\G)$.
For $<$ one can always use the lexicographical order on $V(I(\G)) \subseteq V(\G)^d$
whereas the BIT-predicate might not be definable in first-order logic, cf. \cite[Remark~1.32]{Imm99}.
For example, BIT is FO-definable if the domain formula is valid, i.e. $V(I(\G)) = V(\G)^d$ for all $\G$.
Furthermore, since the BIT-predicate is already FOM-definable from $<$, this technicality vanishes for FOM-computable functions \cite[Theorem~11.2]{BIS90}.
		
The connection between descriptive complexity and circuit complexity is drawn as follows.
A non-empty word $a_1 \cdots a_n \in \{0,1\}^+$ can be viewed as a {\em word structure} $(\{1, \dots, n\}, S)$
where the unary relation $S$ contains those positions $i$ where $a_i = 1$.
A structure $\G$ can be encoded by a bit string $\mathrm{bin}(\G) \in \{0,1\}^*$
in such a way that the conversions between $\G$ and the word structure of $\mathrm{bin}(\G)$ are FO-computable \cite{Imm99}.

It is known that a function $f: \{0,1\}^+ \to \{0,1\}^+$ is FO-computable (respectively, FOM-computable) if and only if
it is computable in DLOGTIME-uniform $\AC^0$ (respectively, DLOGTIME-uniform $\TC^0$).
Hence we can describe $\AC^0$- and $\TC^0$-computations on the binary encoding of a structure
by logical formulas on the structure itself.
	
\section{Representations for trees and dags}
	
It is known that the circuit complexity of algorithmic problems for trees highly depends on the representation of the trees.
For example, for trees given in the standard pointer representation, reachability is complete for deterministic logarithmic space \cite{CoMc87}.
In the {\em ancestor representation}, which is the extension of a tree $\T$ by its ancestor relation $\preceq_\T$,
queries like reachability, least common ancestors and the depth-first order become first-order definable.
Note that a term $t \in T(\Sigma)$ can be represented by a $k$-ordered tree $\T = (V,(E_i)_{1 \le i \le k},(P_a)_{a \in \Sigma})$
where $k$ is the maximal rank of a symbol in $\Sigma$.
Furthermore, each node has a unique label which determines the number of its children, i.e. $\T$ is a {\em ranked tree}.
	
\begin{lemma}[{\cite[Lemma~4.1]{ElberfeldJT12}}]
	\label{lem:term-anc}
	There is an FOM-computable function which transforms a given term $t$ (viewed as a string with opening and closing
	parenthesis)
	into the corresponding labelled ordered tree $\T$ in ancestor representation, and vice versa.
\end{lemma}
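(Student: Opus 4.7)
The plan is to handle the two directions separately; in both cases FOM is used only for counting, and the core technicality is making the parenthesis-nesting structure definable.

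For the forward direction (string to ancestor tree), view the input as a word structure $(\{1, \ldots, n\}, (P_a)_a)$ with a unary predicate for each function symbol and for each of the delimiter symbols (the two parentheses and the comma). Take the domain of the output tree $\T$ to be the FO-definable set of positions labelled by a function symbol, and inherit the label predicates. The key auxiliary relation is matching parentheses: an opening paren at position $p$ matches the closing paren at position $q$ iff $p < q$, the numbers of opens and closes in $[p, q]$ coincide, and this balance never drops to zero strictly between $p$ and $q$. Both conditions are threshold statements about sums of characteristic values and are therefore FOM-definable. With matching in hand, set $u \preceq_\T v$ iff $u = v$, or the symbol at $u$ has positive rank and the match of position $u+1$ is strictly greater than $v$. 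To recover the $k$-ordered structure, declare $u$ to be the $i$-th child of $v$ iff $u$ is a child of $v$ with respect to $\preceq_\T$ and exactly $i-1$ other children of $v$ precede $u$ in string order; this is again an FOM count.

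For the reverse direction (tree to string), each node $u$ contributes at most $4$ characters to the output string: its label, an opening paren if $u$ has children, a closing paren if $u$ has children, and a separating comma if $u$ is not the leftmost child of its parent. Take the output domain to be an FO-definable subset of $V(\T) \times \{1,2,3,4\}$, embedded into $V(\T)^2$. The depth-first left-to-right order $<_{\mathrm{dfs}}$ on $V(\T)$ is FO-definable from $\preceq_\T$ together with the child numbering exactly as given in Section~2.2. For each pair $(u,j)$ in the output domain, the symbol at that position is determined by purely local data at $u$ (label, presence of children, sibling position, whether a closing paren is due) and is therefore FO-definable from the input structure; the absolute position in the output string is then obtained by counting, via the majority quantifier, the total contribution of all depth-first predecessors of $u$.

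The main obstacle is the forward direction: the matching-parenthesis relation is not FO-definable, as it is essentially the Dyck-$1$ problem and inherently requires counting, which is precisely why FOM rather than plain FO is needed. Once matching is available, the ancestor relation, the child numbering, and the reverse conversion are all routine FOM-definable sums and case distinctions.
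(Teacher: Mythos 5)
The paper gives no proof of this lemma at all: it is imported verbatim as Lemma~4.1 of Elberfeld, Jakoby and Tantau \cite{ElberfeldJT12}, so there is no in-paper argument to compare yours against. Your proof is the standard one (and, as far as I can tell, essentially the one in \cite{ElberfeldJT12}): FOM-define the matching-parenthesis relation by counting, read off ancestry from the span of each node's argument list, recover the child numbering by counting preceding siblings, and for the converse direction compute each output position as a prefix sum of per-node character contributions. This is sound, and you correctly identify that the majority quantifier is genuinely needed for the matching relation.

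Two small imprecisions worth fixing. First, in your definition of $u \preceq_\T v$ you only require that the match of position $u+1$ be strictly greater than $v$; you must also require $u < v$ (equivalently $u+1 < v$), since otherwise every function-symbol position to the left of $u$ would be declared a descendant of $u$. You should also add the clause for rank-$0$ symbols, where there is no position $u+1$ to match and $u \preceq_\T v$ only for $v = u$. Second, in the reverse direction the closing parenthesis contributed by a node $u$ is emitted at the end of $u$'s subtree, so its absolute position is not a count over the depth-first predecessors of $u$ alone: you must count, for each node $v$, which of its up-to-four characters land before the given output slot, and for closing parentheses the relevant condition is that the entire subtree of $v$ precedes that slot (i.e.\ $v <_{\mathrm{dfs}} u$ and $v \not\preceq_\T u$), not merely $v <_{\mathrm{dfs}} u$. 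Both are routine repairs and remain FOM-definable, so the proof goes through.
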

	
For ordered dags of logarithmic depth, we propose a representation scheme
which allows to access paths of logarithmic length.
It is similar to the {\em extended connection languages} of circuit families
in the context of uniform circuit complexity \cite{Ruz81}.

A path in a $k$-ordered graph $\G$ can be specified by its start node
and a so called {\em address string} over $\{1, \dots, k\}$.
Formally, for a string $\rho \in \{1, \dots, k\}^*$ 
and a node $u \in V(\G)$ we define the node $\rho(u)$ (it may be undefined) inductively as follows: 
If $\rho = \varepsilon$ then $\rho(u) = u$. Now assume that $\rho = \pi \cdot d$ with $d \in \{1, \dots, k\}$ and the node $v = \pi(u)$ is defined.
Then $\rho(u)$ is the $d$-th successor of $v$, if it is defined, otherwise $\rho(u)$ is undefined.
The {\em extended connection representation}, briefly {\em EC-representation}, of $\G$, denoted by $\mathrm{ec}(\G)$, is the extension of $\G$
by the relation consisting of all so called {\em EC-tuples} $(u, \rho, v)$ where
$u,v \in V(\G)$ and $\rho \in \{1, \dots, k\}^*$ is an address string of length at most $\log_k |\G| -1$ such that $\rho(u) = v$.
Note that there are at most $|\G|$ many such address strings, which therefore can be identified 
with numbers from $1$ to $|\G|$. Hence, 
we can view the set of EC-tuples as a ternary relation over $V(\G)$.
As remarked above, we can access any position of the address string in a first-order formula using the BIT-predicate.
For trees we have:
	
\begin{lemma}
	\label{lem:anc-to-ec}
	There is an FOM-computable function which converts the ancestor representation of a $k$-ordered tree $\T$
	into its EC-representation $\mathrm{ec}(\T)$.
\end{lemma}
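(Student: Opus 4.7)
The plan is to define the ternary EC-tuple relation directly by a single FOM-formula on the domain $V(\T)$ of the ancestor representation, using $\preceq_\T$, the edge relations $E_1, \ldots, E_k$, and the BIT-predicate to decode the address string. Since every address string $\rho \in \{1,\ldots,k\}^{\le \log_k n - 1}$ is identified with a number of size at most $n$, both its length $|\rho|$ and its $i$-th digit $d_i$ are FO-accessible via BIT and FO-definable arithmetic.

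The key observation is that whenever $u \preceq_\T v$, the set $\{w : u \preceq_\T w \preceq_\T v\}$ is linearly ordered by $\preceq_\T$, so every node on this unique $u$-to-$v$ path is determined by its depth difference to $u$. In FOM the depth of a node $w$ can be expressed via the counting term $\mathrm{depth}(w) = |\{z \in V : z \prec_\T w\}|$, and then the statement ``$w$ is the node on the path from $u$ to $v$ at distance $j$ from $u$'' becomes the FOM-formula
\[
u \preceq_\T w \preceq_\T v \;\wedge\; \mathrm{depth}(w) = \mathrm{depth}(u) + j.
\]
This formula picks out a unique node whenever such a $w$ exists.

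With these ingredients, the EC-tuple predicate $\mathrm{EC}(u,\rho,v)$ is defined as follows: either $\rho$ encodes the empty address and $u = v$; or $u \preceq_\T v$, $\mathrm{depth}(v) - \mathrm{depth}(u) = |\rho|$, and for every $i$ with $0 \le i < |\rho|$ there exist $w, w' \in V$ with $u \preceq_\T w \prec_\T w' \preceq_\T v$, $\mathrm{depth}(w) = \mathrm{depth}(u) + i$, $\mathrm{depth}(w') = \mathrm{depth}(u) + i + 1$, and $(w,w') \in E_{d_{i+1}}$, where $d_{i+1} \in \{1,\ldots,k\}$ is the $(i{+}1)$-st digit of the number encoding $\rho$, read off via BIT. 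The interpretation returning $\mathrm{ec}(\T)$ keeps the domain and the relations of the ancestor representation and adds the new ternary relation defined by this formula.

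The only non-FO feature used is the counting required for depths, which is precisely what FOM adds to FO, so the entire construction lies in FOM. The main bit of care needed is the encoding/decoding of address strings: verifying that $|\rho|$ and each digit $d_i$ are obtained from the BIT-predicate in FO, and that the universal quantifier over the digit index $i$ ranges only over numbers in $\{0,\ldots,n-1\}$, which is fine. I do not expect any deeper obstacle, because once depths are computable the path of $\rho$ starting at $u$ is uniquely determined step by step from the ancestor relation.
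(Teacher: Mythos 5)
Your proof is correct and follows essentially the same route as the paper's: both locate the $i$-th node on the unique $\preceq_\T$-path from $u$ to $v$ by an FOM counting term and then check that consecutive nodes are joined by the edge relation named by the $i$-th digit of $\rho$, decoded via BIT. The only cosmetic difference is that you count absolute depths $|\{z : z \prec_\T w\}|$ whereas the paper counts $|\{w : u \preceq w \preceq v_i\}|$; these are interchangeable.
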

	
\begin{proof}
	Let $u,v$ be nodes and $\rho = d_1 \cdots d_{s-1} \in \{1, \dots, k\}^*$ be an address string of logarithmic length.
	Then $(u,\rho,v)$ is an EC-tuple of $\T$ if and only if for all $1 \le i \le s-1$ there exist nodes $v_i, v_{i+1}$
	(which must be unique)
	such that
	\begin{itemize}
		\item $v_{i+1}$ is the $d_i$-th successor of $v_i$,
		\item $|\{ w \in V \mid u \preceq w \preceq v_i \}| = i$, and
		\item $|\{ w \in V \mid v_{i+1} \preceq w \preceq v \}| = s-i$,
	\end{itemize}
	which is FOM-definable using the ancestor relation on $\T$.
\end{proof}
	
Let $\G$ be a $k$-ordered dag and $v_0$ be a node in $\G$. The {\em unfolding} of $\G$ from $v_0$, denoted by $\mathrm{unfold}(\G,v_0)$,
is defined as follows: Its node set is the (finite) set of paths $(v_0, v_1, \dots, v_n) \in V(\G)^+$ starting in $v_0$.
If $v_{n+1}$ is the $i$-th successor of $v_n$ in $G$,
then $(v_0, v_1, \dots, v_{n+1})$ is the $i$-th successor of $(v_0, v_1, \dots, v_n)$ in the unfolding.
The labels of a node $(v_0, v_1, \dots, v_n)$ in the unfolding are the labels of $v_n$ in $G$.
Note that the size of $\mathrm{unfold}(G,v_0)$ can be exponential in the depth of $G$.
	
\begin{lemma}
	\label{lem:unfold}
	For any $c > 0$ there exists an FOM-computable function which, given a $k$-ordered dag $\G$ of size $n$
	and depth $\le c \cdot \log n$ in EC-representation,
	and a node $v_0$, outputs the ancestor representation of the tree $\mathrm{unfold}(\G,v_0)$.
\end{lemma}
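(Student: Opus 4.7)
The plan is to encode each node of $\mathrm{unfold}(\G, v_0)$ by the unique address string $\rho \in \{1,\ldots,k\}^*$ leading to it from $v_0$. Since $\G$ has depth at most $c \log n$, every such $\rho$ has length at most $c \log n$, so it fits into a tuple in $V(\G)^d$ for a constant $d = \lceil c \log_2 k \rceil + 1$: one coordinate records $|\rho|$, and the remaining coordinates store $\rho$ as a base-$n$ number of at most $c \log_2 k$ digits. The BIT-predicate then lets one read off any individual symbol of $\rho$ inside a first-order formula, and these tuples will form the domain of the output structure.

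The core step is to express, in FO over the EC-representation of $\G$, when an encoded string $\rho$ describes a valid path from $v_0$. Since the EC-relation gives addresses only up to length $\log_k n - 1$, I would split $\rho$ into a constant number $s = \lceil c \log k \rceil$ of consecutive factors $\rho_1 \rho_2 \cdots \rho_s$, each of length at most $\log_k n - 1$. Then $\rho(v_0)$ is defined iff there exist (necessarily unique) nodes $u_1, \ldots, u_s$ with $u_0 := v_0$ and $(u_{i-1}, \rho_i, u_i) \in \mathrm{ec}(\G)$ for $1 \le i \le s$, in which case $u_s = \rho(v_0)$. This is FO-expressible using $s$ existential quantifiers over $V(\G)$ together with BIT-based extraction of the factors $\rho_i$ from the encoding of $\rho$.

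Once the endpoint map $\rho \mapsto \rho(v_0)$ is FO-definable, the remaining ingredients of the ancestor representation are routine: the domain formula selects those $\rho$ for which $\rho(v_0)$ exists; labels are inherited via $P_a(\rho(v_0))$; the $i$-th successor of $\rho$ is the encoding of $\rho \cdot i$ whenever that is still in the domain; and the ancestor relation is just the prefix relation on the encodings, FO-definable from BIT and the length coordinate. The whole construction is an FO-interpretation and hence a fortiori FOM-computable.

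The main obstacle is conceptual rather than technical: one has to notice that the depth bound $c \log n$, combined with the EC-representation's address-length bound of $\log_k n - 1$, means that any full root-to-node path is covered by only a \emph{constant} number of EC-edges, which is exactly what keeps the reachability test first-order. Everything else reduces to straightforward BIT-arithmetic bookkeeping on the address-string encoding.
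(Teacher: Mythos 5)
Your proposal is correct and follows essentially the same route as the paper's (much terser) proof: identify the nodes of $\mathrm{unfold}(\G,v_0)$ with address strings of length $O(\log n)$, test validity of a path in FO via the EC-relation, and read off successors and the ancestor relation as string extension and the prefix relation. The one step the paper leaves implicit---that a root-to-node path of length $c\log n$ decomposes into a constant number of EC-addressable segments of length at most $\log_k n - 1$, so that the endpoint map $\rho \mapsto \rho(v_0)$ is first-order definable---is exactly the point you make explicit, and your encoding of $\rho$ into a constant-length tuple over $V(\G)$ with BIT-access matches the convention fixed in the paper's preliminaries.
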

	
\begin{proof}
	A node in the unfolding is an address string $\rho \in \{1, \dots, k\}^*$ of length $\le c \cdot \log |G|$,
	such that $\rho(v_0)$ exists. By an FO-formula one can test whether $\rho(v_0)$ exists and also compute this node.
	The $i$-th successor of an address string $\rho$ is the address string $\rho i$.
	The ancestor relation is the prefix relation on the set of address strings. 
\end{proof}
	
In combination with Lemma~\ref{lem:term-anc} this yields:
	
\begin{lemma}
	\label{lem:circuit-unfold}
	For any $c > 0$ there exists an FOM-computable function which, given a circuit $\C$ of size $n$
	and depth $\le c \cdot \log n$ in EC-representation,
	outputs an equivalent term $t$.
\end{lemma}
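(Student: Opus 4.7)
The plan is to compose the two earlier lemmas on unfoldings and terms. First, I would invoke Lemma~\ref{lem:unfold} on $\C$ starting from its output gate $v_{\mathrm{out}}$ to obtain, in FOM, the ancestor representation of $\T = \mathrm{unfold}(\C, v_{\mathrm{out}})$. Since the depth of $\C$ is bounded by $c \cdot \log n$, all address strings relevant to the unfolding have length $\le c \cdot \log n$, so $\T$ has polynomial size and the hypothesis of Lemma~\ref{lem:unfold} is met. Second, I would feed the ancestor representation of $\T$ into the reverse direction of Lemma~\ref{lem:term-anc} to produce the desired term string. Since FOM is closed under composition, the overall transformation is FOM-computable.

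If $\C$ contained no copy gates, $\T$ would already be a ranked labelled ordered tree over $\Sigma$ whose underlying term equals the value of $\C$ in $T(\Sigma)$, and the two steps above would finish the proof. The one point of care is that copy gates carry no symbol from $\Sigma$ and have a single successor, so they must be contracted away before $\T$ can be read as a term. I would handle this by restricting to the FO-definable subset $V'$ of nodes of $\T$ whose label is not the copy marker. For $u \in V'$ whose underlying gate has rank $r$ and for $1 \le i \le r$, I would define the $i$-th $V'$-successor of $u$ to be the unique $V'$-node reached from the $i$-th successor of $u$ in $\T$ by descending through a (possibly empty) maximal chain of copy-labelled nodes. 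Such a chain has length at most $c \cdot \log n$, so this successor is FO-definable from the ancestor relation on $\T$ using the BIT-predicate. The ancestor relation on $V'$ is just the restriction of $\preceq_\T$, and this restricted structure is the ancestor representation of a ranked labelled ordered tree over $\Sigma$ with the correct value.

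I expect no serious obstacle: the argument is essentially a bookkeeping composition of the previously established FOM-computable conversions, and the depth bound $c \cdot \log n$ makes both the unfolding step and the copy-gate contraction uniformly first-order definable. The only mildly technical point is verifying that after restriction to $V'$ the resulting labelled ordered tree is still ranked correctly, which follows because each non-copy gate keeps its original rank and receives the right number of $V'$-children by construction.
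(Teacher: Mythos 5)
Your proposal is correct and matches the paper's own (implicit) proof, which is exactly the composition of Lemma~\ref{lem:unfold} applied from the output gate with the reverse direction of Lemma~\ref{lem:term-anc}. The extra treatment of copy gates is harmless but not needed here, since the default notion of circuit has none and the paper deals with their elimination separately in Lemma~\ref{lem:copy-gates}.
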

	
Vice versa, one can compact an ordered tree $\T$ to its {\em minimal dag} $\mathrm{dag}(\T)$. 
It is the up to isomorphism unique smallest dag $\G$ such that
$\T$ is isomorphic to  $\mathrm{unfold}(\G,v)$ for some $v$. One can identify the nodes
of $\mathrm{dag}(\T)$ with the isomorphism classes of the subtrees of $\T$.
	
\begin{lemma}
	\label{lem:min-dag}
	There exists an FOM-computable function which maps a $k$-ordered tree $\T$ in ancestor representation
	to $\mathrm{dag}(\T)$ in EC-representation.
\end{lemma}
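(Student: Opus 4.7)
The plan is to realize $\mathrm{dag}(\T)$ concretely inside $V(\T)$ by selecting from each subtree-isomorphism class the depth-first smallest representative. That is, I will define the domain of $\mathrm{dag}(\T)$ as $\{u \in V(\T) : \forall v <_{\mathrm{dfs}} u,\ T_v \not\cong T_u\}$; every canonical $u$ inherits its label from $\T$, and the $i$-th successor of $u$ in the dag is the canonical representative of $T_{u_i}$, where $u_i$ is the $i$-th child of $u$ in $\T$. Given a FOM test for the relation $T_u \cong T_v$, the domain, labels, and edges are each FOM via a single quantifier over $V(\T)$.

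The main obstacle is that isomorphism test, which I expect to be the only non-routine ingredient. I will attach to each node an \emph{extended label} $(\lambda(u), S(u))$, where $S(u) \subseteq \{1,\dots,k\}$ records the child positions present at $u$; this refinement is needed because $\T$ is merely $k$-ordered, not ranked. I then claim that $T_u \cong T_v$ iff $|T_u| = |T_v|$ and, for every $1 \le j \le |T_u|$, the $j$-th node of $T_u$ in depth-first order has the same extended label as the $j$-th node of $T_v$. Correctness follows because the DFS sequence of extended labels uniquely encodes an ordered tree: the current extended label fixes the arity and the positions of the children, after which the children's subtrees are parsed recursively. The test is FOM: the descendants of $u$ form an interval in the global DFS order, so the $j$-th node of $T_u$ is the node whose global DFS rank equals $\mathrm{rank}(u) + j - 1$, and both ranks and subtree sizes are expressible by counting over the ancestor relation.

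The EC-tuples then come almost for free from the observation that $\mathrm{unfold}(\mathrm{dag}(\T),u)$ is isomorphic to $T_u$ for every canonical representative $u$. Following an address string $\rho$ of length $s \le \log_k |V(\mathrm{dag}(\T))|-1$ from $u$ inside the dag traces the same sequence of child positions as inside $\T$ and lands at the canonical representative of the subtree reached there. Concretely, $(u,\rho,v)$ is an EC-tuple of $\mathrm{dag}(\T)$ iff $u$ is a canonical representative, the node $w := \rho(u)$ exists in $\T$, and $v$ is the canonical representative of $T_w$. Since $s = O(\log n)$, the string $\rho$ is encoded by a polynomially bounded number whose digits are accessed via BIT; the existence and identity of $w$ is FOM by the argument used in Lemma~\ref{lem:anc-to-ec} (a universal quantifier over the positions of $\rho$, an existential pair of intermediate nodes per position, and checks based on the ancestor relation and descendant counts); and mapping $w$ to its canonical representative is FOM by the isomorphism test. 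Assembling the domain, label, edge, and EC-relation formulas into a single one-dimensional interpretation yields the desired FOM-computable function.
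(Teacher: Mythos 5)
Your proposal is correct and follows essentially the same route as the paper's proof: pick a canonical representative of each subtree-isomorphism class via an FOM-definable minimum, define the dag's edges through the children's representatives, and obtain the EC-tuples of $\mathrm{dag}(\T)$ by following the address string inside $\T$ and canonicalizing the endpoint. The only difference is that you spell out the subtree-isomorphism test (via DFS sequences of labels extended by child-position sets), which the paper leaves implicit, and you use the depth-first order rather than the built-in order to choose representatives; both choices are immaterial.
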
	
	
\begin{proof}
	Using Lemma~\ref{lem:anc-to-ec} we convert the ancestor representation of $\T$ into its EC-representation.
	With the help of the depth-first order on $V(\T)$, it is FOM-definable
	whether the subtrees rooted in two given nodes $u,v$ are isomorphic.
	For a node $v \in V(\T)$ let $\mathrm{min}(v)$ be the first node (with respect to the built-in order on $V(\T)$)
	such that the subtrees below $v$ and $\mathrm{min}(v)$ are isomorphic. The mapping $\mathrm{min}$ is also FOM-definable.
	Then the node set of $\mathrm{dag}(\T)$ can be identified with $V' = \{ \mathrm{min}(v) \mid v \in V(\T) \}$. A pair 
	$(u',v') \in V' \times V'$ belongs to $E_i^{\mathrm{dag}(\T)}$
	if there exists  $u \in V(\T)$ such that $(u',u) \in E_i^\T$ 
	and $v' = \min(u)$. The set of EC-tuples of $\mathrm{dag}(\T)$ is the set of tuples 
	$(u', \rho, v')$ such that there exists an EC-tuple $(u,\rho,v)$ of $\T$ with $\mathrm{min}(u) = u'$
	and $\mathrm{min}(v) = v'$.
\end{proof}
	
For an arbitrary FOM-computable function $I$ on $k$-ordered graphs, it is not clear whether
the function $\mathrm{ec}(\G) \mapsto \mathrm{ec}(I(\G))$ is FOM-computable as well.
On the other hand, this is possible for so called {\em guarded transductions}.
A  ({\em $m$-dimensional}) {\em connector} has the form
$\gamma: \{1, \ldots, m\} \to \{1, \ldots, k, =\} \times \{1, \ldots, m\}$.
Given a $k$-ordered graph $\G$ and two tuples $\overline{u} = (u_1, \ldots, u_m), \overline{v} = (v_1, \ldots, v_m) \in V(G)^m$,
we say that the connector $\gamma$ {\em connects} $\overline{u}$ to $\overline{v}$ if for all $1 \le j \le m$ the following holds:
\begin{itemize}
	\item If $\gamma(j) = (d,i)$ for some $1 \le d \le k$, then $(u_i,v_j) \in E_d^\G$.
	\item If $\gamma(j) = (=,i)$, then $u_i = v_j$.
\end{itemize}
Notice that $\overline{u}$ and $\gamma$ uniquely determine $\overline{v}$.
Also note that if $k$ and $m$ are constants (as in the lemma below), then a connector 
can be specified with $O(1)$ many bits. Hence, 
a sequence of connectors of length $O(\log |\G|)$ needs $O(\log |\G|)$ bits and can be identified with a tuple 
over $V(\G)$ of fixed length.
	
\begin{lemma} \label{lem:connectors}
	Let $k$ and $m$ be constants.
	Given a $k$-ordered graph $\G$ in EC-representation,
	tuples $\overline{u}, \overline{v} \in V(\G)^m$
	and a sequence $\gamma^{(1)} \cdots \gamma^{(s)}$ of connectors of length $s = O(\log |\G|)$,
	it is FO-definable whether there exists a (necessarily unique) sequence of tuples $\overline{v}^{(1)}, \ldots, \overline{v}^{(s+1)} \in V(\G)^m$
	such that $\overline{v}^{(1)} = \overline{u}$, $\overline{v}^{(s+1)} = \overline{v}$, and
	$\gamma^{(i)}$ connects $\overline{v}^{(i)}$ to $\overline{v}^{(i+1)}$ for all $1 \le i \le s$.
	If so, the tuple sequence is FO-computable in the sense that the $(m+1)$-ary relation 
	$R = \{ (t, \overline{v}^{(t)}) \mid 1 \leq t \leq s+1 \}$
	is FO-computable.
\end{lemma}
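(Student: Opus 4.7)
The plan is to reduce the $s = O(\log|\G|)$ composition steps to a constant number of single EC-lookups. Let $L = \lfloor \log_k |\G| \rfloor - 1$ be the maximum address-string length supported by the EC-relation, and split the connector sequence $\gamma^{(1)} \cdots \gamma^{(s)}$ into $c' = \lceil s/L \rceil = O(1)$ consecutive blocks, each of length at most $L$. If the transition $\overline{v}^{(t')} \mapsto \overline{v}^{(t'')}$ induced by a single block (and by any prefix of a block) is FO-definable, then composing $c'$ such FO-queries is again FO; this immediately yields the existence statement by comparing the result with $\overline{v}$, and also yields the relation $R$ by locating $t \in \{1, \ldots, s+1\}$ inside its block and applying the block transition for the preceding complete blocks together with a prefix of the current one.

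For a single block of length $\ell \le L$, observe that each output coordinate $j \in \{1, \ldots, m\}$ arises deterministically from an input coordinate $\sigma_j$ by following an address string $\alpha_j \in \{1, \ldots, k\}^{\le \ell}$, obtained by a backward trace over the block: starting in state $j$, the connector $\gamma^{(t'+r)}$ either maps the current coordinate to a new one in $\{1, \ldots, m\}$ and prepends a character to $\alpha_j$ (edge case) or does the same without extending $\alpha_j$ (equality case). Since the block connectors together form a single constant-size tuple over $V(\G)$ (accessible via BIT), the trace has constant state space $\{1, \ldots, m\}$, and $\alpha_j$, being of length $\le L$, is identifiable with an element of $\{1, \ldots, |\G|\}$, I can existentially quantify the entire trace together with $\alpha_j$ and verify by a universally quantified position-by-position check that the trace is consistent with the block connectors and that $\alpha_j$ correctly records the edge-labels encountered. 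A single EC-lookup then produces $v^{(t'')}_j = \alpha_j(v^{(t')}_{\sigma_j})$.

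The chief technical delicacy lies in the last step: the address string $\alpha_j$ must be assembled from the trace by \emph{skipping} the equality positions, while the verification must remain a local, quantifier-free check on each position. I would handle this by quantifying not only $\alpha_j$ (as a domain element representing the compact address) but also an auxiliary alignment — a marking of each block position as either an edge-step contributing a specific character of $\alpha_j$ or an equality-step — all packaged in a constant-size tuple over $V(\G)$. Consistency of the alignment with both the trace and with the BIT-encoding of $\alpha_j$ can then be expressed by FO-formulas using only constantly many position variables and BIT-lookups, which is exactly what makes the whole construction stay in FO rather than climb up to FOM.
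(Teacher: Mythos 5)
Your proposal is correct and follows essentially the same route as the paper's proof: for each output coordinate you existentially quantify the backward trace through the connectors as a single logarithmic-length string over a constant alphabet (hence a domain element accessible via BIT), verify it position-by-position against the connectors, project out the equality steps to obtain an address string, and finish with an EC-lookup. The only substantive difference is that you explicitly split the sequence into $O(1)$ blocks of length at most $\lfloor\log_k|\G|\rfloor-1$ so that each projected address string is guaranteed to lie within the length bound of the EC-relation — a technicality the paper's proof glosses over by performing a single EC-lookup for the entire trace even though its projected address string may exceed that bound.
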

	
\begin{proof} Let $\overline{u} = (u_1, \ldots, u_m)$.
	The FO-formula says that for all 
	$2 \le t \le s+1$  there exists a tuple  $\overline{w} = (w_1, \ldots w_m) \in V(\G)^m$ 
	such that for all $1 \le j \le m$ there is 
	a sequence $(j_1,d_1) \cdots (j_{t-1},d_{t-1}) \in (\{1, \ldots, m\} \times \{1, \dots, k, =\})^*$ 
	(which is necessarily unique)
	with the following properties:
	\begin{itemize}
		\item $\gamma^{(i)}(j_{i+1}) = (d_{i},j_{i})$ for all $1 \le i \le t-2$, $\gamma^{(i)}(j) = (d_{t-1},j_{t-1})$ and
		\item  $(u_{j_1}, \pi_t, w_j)$ is an EC-tuple of  $\G$, where the address string
		      $\pi_t \in \{1, \ldots, k\}^*$ is the projection of 
		      $d_1 \cdots d_{t-1}$ to the subalphabet $\{1, \ldots, k\}$.
	\end{itemize}
	Moreover, in case $t = s+1$ we must have $\overline{w} = \overline{v}$. The relation $R$ from the lemma
	contains all tuples $(t,\overline{w})$ and $(1, \overline{u})$.
\end{proof}
	 
A {\em graph transduction} $I$ computes from a $k$-ordered graph $\G$
a $k'$-ordered graph $I(\G)$ whose node set is a subset of $V(\G)^m \times \{1, \dots, c\}$ for some constants $m,c$ 
(we can assume that $\{1, \ldots, c \} \subseteq V(\G)$).
A graph transduction $I$ is {\em guarded} if for every $k$-ordered graph $\G$ and every edge $((\overline{u}, a),(\overline{v},b))$ in $I(\G)$ there exists a connector $\gamma$ which connects $\overline{u}$ to $\overline{v}$.
The idea is that for a given path $(\overline{v}^{(1)},a^{(1)}) (\overline{v}^{(2)},a^{(2)}) \cdots (\overline{v}^{(s)},a^{(s)})$
in $I(\G)$, the connectors $\gamma^{(1)} \gamma^{(2)} \cdots \gamma^{(s-1)}$ describe a forest of paths in $\G$
with its roots in $\overline{v}^{(1)}$, see Figure~\ref{fig:guarded-transduction}.
Based on this forest we can construct the EC-tuples of $I(\G)$ from the EC-tuples of $\G$.
	
\begin{figure}
	\begin{tikzpicture}[->,>=stealth]
		\tikzstyle{p} = [draw, fill = black, circle, inner sep = 0, minimum size = 3pt]
		\node [p] (v11) {};
		\node [p, right = 1.5cm of v11] (v21) {};
		\node [p, right = 1.5cm of v21] (v31) {};
		\node [p, right = 1.5cm of v31] (v41) {};
				
		\node [above = .2cm of v11] {\footnotesize $\overline{v}^{(1)}$};
		\node [above = .2cm of v21] {\footnotesize $\overline{v}^{(2)}$};
		\node [above = .2cm of v31] {\footnotesize $\overline{v}^{(3)}$};
		\node [above = .2cm of v41] {\footnotesize $\overline{v}^{(4)}$};
		
		\node [p, below = .7cm of v11] (v12) {};
		\node [p, below = .7cm of v12] (v13) {};
		\node [p, below = .7cm of v21] (v22) {};
		\node [p, below = .7cm of v22] (v23) {};
		\node [p, below = .7cm of v31] (v32) {};
		\node [p, below = .7cm of v32] (v33) {};
		\node [p, below = .7cm of v41] (v42) {};
		\node [p, below = .7cm of v42] (v43) {};
				
		\node [below right = .2cm and .5cm of v13] {\footnotesize $\gamma^{(1)}$};
		\node [below right = .2cm and .5cm of v23] {\footnotesize $\gamma^{(2)}$};
		\node [below right = .2cm and .5cm of v33] {\footnotesize $\gamma^{(3)}$};
		
		\draw (v11) edge node [above] {\tiny $=$} (v21);
		\draw (v13) edge node [above] {\tiny $1$} (v22);
		\draw (v13) edge node [above] {\tiny $2$} (v23);
				
		\draw (v21) edge node [above] {\tiny $3$} (v31);
		\draw (v22) edge node [above] {\tiny $=$} (v32);
		\draw (v22) edge node [above] {\tiny $2$} (v33);
		
		\draw (v32) edge node [above] {\tiny $1$} (v41);
		\draw (v32) edge node [above] {\tiny $3$} (v42);
		\draw (v33) edge node [above] {\tiny $1$} (v43);
				
		\draw[rounded corners=5pt] ($(v11) + (-.2,.2)$) rectangle ($(v13) + (.2,-.2)$);
		\draw[rounded corners=5pt] ($(v21) + (-.2,.2)$) rectangle ($(v23) + (.2,-.2)$);
		\draw[rounded corners=5pt] ($(v31) + (-.2,.2)$) rectangle ($(v33) + (.2,-.2)$);
		\draw[rounded corners=5pt] ($(v41) + (-.2,.2)$) rectangle ($(v43) + (.2,-.2)$);
			
	\end{tikzpicture}
	\caption{If $I$ is a guarded transduction, a path in $I(\G)$ describes a sequence of connectors.}
	\label{fig:guarded-transduction}
\end{figure}
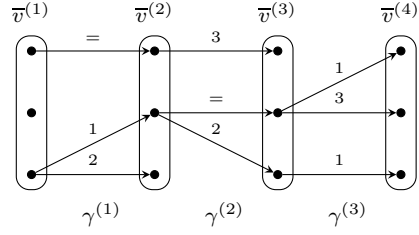
	
\begin{lemma}  \label{lem:guarded}
	For every FOM-computable guarded graph transduction $I$
	there exists an FOM-computable function mapping $\mathrm{ec}(\G)$ to $\mathrm{ec}(I(\G))$ for all $k$-ordered graphs $\G$.
\end{lemma}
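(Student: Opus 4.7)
The plan is to observe that the domain, edge relations, and vertex labels of $I(\G)$ are all FOM-definable in $\G$ (hence in $\mathrm{ec}(\G)$) directly from the assumption that $I$ is an FOM-computable graph transduction; so the only nontrivial task is to FOM-define the ternary EC-relation of $I(\G)$ from $\mathrm{ec}(\G)$.

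Fix a candidate EC-tuple $((\overline{u},a),\rho,(\overline{v},b))$ of $I(\G)$ with address string $\rho = d_1 \cdots d_s \in \{1,\dots,k'\}^*$. Since $|V(I(\G))| \le c \cdot |\G|^m$, the length bound gives $s = O(\log |\G|)$. The idea is to encode the corresponding path in $I(\G)$ by two short witnesses living over $\G$: a sequence of connectors $\gamma^{(1)} \cdots \gamma^{(s)}$ (each of constant bit-length, supplied by guardedness), and a sequence of intermediate labels $a = a^{(1)}, a^{(2)}, \dots, a^{(s+1)} = b \in \{1,\dots,c\}$. Both sequences fit into $O(\log |\G|)$ bits and can therefore be existentially quantified as constant-arity tuples over $V(\G)$ and inspected position-by-position using BIT.

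Given these witnesses, Lemma~\ref{lem:connectors} applied to $\mathrm{ec}(\G)$ furnishes an FO-definable relation $R = \{(t, \overline{v}^{(t)}) : 1 \le t \le s+1\}$ recording the unique sequence of $m$-tuples traced out by the connectors starting at $\overline{u}$; in particular one requires $\overline{v}^{(1)} = \overline{u}$ and $\overline{v}^{(s+1)} = \overline{v}$. It then remains to verify, for every $1 \le i \le s$, that $(\overline{v}^{(i+1)}, a^{(i+1)})$ is the $d_i$-th successor of $(\overline{v}^{(i)}, a^{(i)})$ in $I(\G)$. Because the edge relations of $I(\G)$ are FOM-definable in $\G$ and $d_i$ ranges over the fixed finite set $\{1,\dots,k'\}$, this is a uniform FOM-condition in the already-computed $R$, the quantified labels $a^{(i)}$, and the bits of $\rho$, yielding an FOM-definition of the EC-relation of $I(\G)$.

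Correctness in the forward direction is exactly where guardedness enters: for any genuine path of length $s$ in $I(\G)$, each of its edges comes by hypothesis with some connector witnessing it in $\G$, so choosing one such connector per edge produces the required sequence. I expect the main obstacle, and indeed the whole point of the guardedness hypothesis, to be the bit-budget issue: without it, checking a path of length $\Theta(\log |\G|)$ in $I(\G)$ would require existentially quantifying $\Theta(\log |\G|)$ many $m$-tuples from $V(\G)$, which an FOM-formula cannot afford; encoding the path as a connector sequence and invoking Lemma~\ref{lem:connectors} is precisely the device that keeps all witnesses within $O(\log |\G|)$ bits.
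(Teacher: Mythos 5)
Your proposal is correct and follows essentially the same route as the paper's proof: existentially quantify an $O(\log|\G|)$-bit sequence of connectors (provided by guardedness) together with the label components $a^{(1)},\dots,a^{(s+1)}$, recover the unique tuple sequence via the relation $R$ of Lemma~\ref{lem:connectors}, and then check edge-by-edge that each step realizes the $d_t$-th successor relation of $I(\G)$. Your closing remark about the bit-budget being the entire point of guardedness matches the paper's motivation exactly.
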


\begin{proof}
	It suffices to compute the EC-tuples of $I(\G)$.
	Assume that the the output vocabulary has $k'$ edge relations $E_1, \dots, E_{k'}$.
	Let $(\overline{u},a)$ and $(\overline{v},b)$ be two nodes in $I(\G)$ and $\rho' = d_1' \cdots d_{s}' \in \{1, \dots, k'\}^*$
	be an address string of length at most $\log_{k'} |I(\G)| -1$.
	We claim that one can express by an FO-formula whether $((\overline{u},a),\rho',(\overline{v},b))$ is an EC-tuple of $I(\G)$. This is the case
	if and only if there exist nodes $(\overline{v}^{(1)},a^{(1)}), \dots, (\overline{v}^{(s+1)},a^{(s+1)})$ in $I(\G)$ such that
	$(\overline{v}^{(1)},a^{(1)}) =  (\overline{u},a)$, $(\overline{v}^{(s+1)},a^{(s+1)})=(\overline{v},b)$, and
	\begin{equation} \label{edge-I(G)}
		((\overline{v}^{(t)},a^{(t)}),(\overline{v}^{(t+1)},a^{(t+1)})) \in E_{d_t'}^{I(\G)}
	\end{equation}
	for all $1 \le t \le s$.
	Since $I(\G)$ is $k'$-ordered, these nodes must be unique.
			
	Our FO-formula says that there exists a sequences $a^{(1)} \cdots a^{(s+1)} \in \{1,\ldots,c\}^{s+1}$
	and connectors $\gamma^{(1)}, \ldots, \gamma^{(s)}$ such that there exists  
	a (unique) sequence of tuples $\overline{v}^{(1)}, \ldots, \overline{v}^{(s+1)} \in V(\G)^m$ 
	with $\overline{v}^{(1)} = \overline{u}$, $\overline{v}^{(s+1)} = \overline{v}$, and
	$\gamma^{(i)}$ connects $\overline{v}^{(i)}$ to $\overline{v}^{(i+1)}$ for all $1 \le i \le s$.
	The existence of the sequence $\overline{v}^{(1)}, \ldots, \overline{v}^{(s+1)}$ is expressed using
	Lemma~\ref{lem:connectors}. Moreover, if this sequence exists we can also express whether 
	\eqref{edge-I(G)} holds for all
	$1 \leq t \leq s$ using the FO-computable relation $R$ from Lemma~\ref{lem:connectors}.
\end{proof}
	
Let us remark that in this paper we only need graph transductions where $m = 1$.
The more general definition will be used in a forthcoming paper.
Finally, we will need the following lemma:
	
\begin{lemma}
	\label{lem:copy-gates}
	For any $c > 0$ there exists an FOM-computable function which maps a circuit with copy gates
	of size $n$ and depth $\le c \cdot \log n$ to an equivalent circuit without copy gates with the same depth bound,
	where both circuits are given in EC-representation.
\end{lemma}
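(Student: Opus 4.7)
The plan is to unfold $\C$ to a polynomial-size tree (where eliminating copy gates is trivial), perform the elimination inside the tree, and then recompact back to a minimal dag. This will produce a copy-gate-free circuit in EC-representation whose depth is at most the depth of $\C$.

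Since the depth of $\C$ is at most $c \log n$, Lemma~\ref{lem:unfold} lets us FOM-compute the tree $\T = \mathrm{unfold}(\C, v_{\mathrm{out}})$ in ancestor representation; its size is $k^{c \log n} = n^{O(1)}$. In $\T$ every copy node has exactly one child (since a copy gate of $\C$ has exactly one successor), so the chain of unique children starting at any copy node terminates at a unique non-copy descendant. I define the contracted tree $\T'$ by deleting the copy nodes of $\T$: set $V(\T') = \{u \in V(\T) : u \text{ is non-copy}\}$, let $\preceq^{\T'}$ be the restriction of $\preceq^{\T}$ to $V(\T')$, and set $(u,w) \in E_i^{\T'}$ iff there exists $w' \in V(\T)$ with $(u,w') \in E_i^\T$ such that $w' \preceq^\T w$, $w$ is non-copy, and every node $v$ with $w' \preceq^\T v \prec^\T w$ is a copy node; node labels in $\T'$ are inherited from $\T$. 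All these relations are FO-definable from the ancestor representation of $\T$, so $\T'$ in ancestor representation is FOM-computable from $\C$. Finally, Lemma~\ref{lem:min-dag} converts $\T'$ into $\mathrm{dag}(\T')$ in EC-representation; this is the output circuit, its output gate being the image of the root of $\T'$ under the canonical map.

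Correctness is essentially bookkeeping: in any algebra $\A$ a copy gate evaluates to the value of its unique successor, so $\C$, $\T$, $\T'$ and $\mathrm{dag}(\T')$ all evaluate to the same element; $\mathrm{dag}(\T')$ is copy-gate-free because $\T'$ is; and the depth satisfies $\mathrm{depth}(\mathrm{dag}(\T')) \le \mathrm{depth}(\T') \le \mathrm{depth}(\T) = \mathrm{depth}(\C) \le c \log n$, since contracting copy nodes and then dag-compacting can only shorten paths. I am implicitly assuming $v_{\mathrm{out}}$ itself is non-copy (matching the convention that the output is marked by a special output symbol); otherwise one first shortcuts the chain of copy gates above the root. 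The one point requiring care is that the intermediate object $\T$ may be polynomially much larger than $\C$; this detour is admissible in FOM precisely because the logarithmic-depth hypothesis on $\C$ keeps $|\T|$ polynomial, which is exactly the situation Lemma~\ref{lem:unfold} is designed to handle.
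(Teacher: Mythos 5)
Your proof is correct, but it takes a genuinely different route from the paper. The paper never leaves the dag: it contracts each maximal chain of copy gates directly in $\C$ (the first non-copy gate on the $E$-path out of a copy gate is FO-definable from the EC-representation), obtaining a circuit $\C'$ on the non-copy gates of $\C$, and then computes the EC-tuples of $\C'$ from those of $\C$ by existentially guessing a bit mask $z$ with $|z|=|\rho|$ that marks which prefixes $\pi$ of an address string $\rho$ satisfy ``$\pi(A)$ is a non-copy gate,'' and checking that $\rho'$ is the subsequence of $\rho$ at the marked positions. You instead go through the unfolding: blow $\C$ up to the polynomial-size tree $\T$ via Lemma~\ref{lem:unfold} (admissible exactly because of the log-depth hypothesis), contract the copy chains there, where it is elementary in the ancestor representation, and recompact with Lemma~\ref{lem:min-dag}. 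Your route buys reuse of two existing lemmas and avoids any hand-rolled surgery on EC-tuples; the paper's route buys the fact that the output circuit lives on a subset of the original gate set, so its size trivially does not grow. That last point is the one thing I would ask you to add a sentence about: although the lemma as stated imposes no size bound on the output, it is invoked in the proof of Theorem~\ref{theorem-reg-exp} where the $O(n/\log n)$ size bound must be preserved. Your construction does preserve it --- two nodes of $\T$ that end at the same gate of $\C$ root isomorphic subtrees, the contraction is label-determined and hence respects this, so $\mathrm{dag}(\T')$ has at most as many gates as $\C$ has non-copy gates reachable from $v_{\mathrm{out}}$ --- but this deserves to be said explicitly, since without it the detour through a polynomially larger intermediate object looks like it could lose the size bound. (Also, strictly $\mathrm{depth}(\T)\le\mathrm{depth}(\C)$ rather than equality, since $\mathrm{depth}(\C)$ ranges over all paths, not only those from $v_{\mathrm{out}}$; this only helps you.)
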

	
\begin{proof}
	Let $\C$ be a circuit with copy gates. Let $E$ be the binary relation consisting of all pairs
	$(A,B)$, where $A$ is a copy gate and $B$ is the unique successor of $A$.
	For each copy gate $A \in V$ we define the first non-copy gate on the unique $E$-path starting in $A$, 
	which is first-order definable using the EC-representation.
	By contracting all such paths we can define on all non-copy gates of $\C$ an equivalent circuit $\C'$ without copy-gates.
			
	The EC-tuples of $\C'$ can also be defined in FOM:
	Let $A$ and $B$ be non-copy gates and $\rho' \in \{1, \dots, k\}^*$ be a string of length at most $\log_k |\C'| - 1$
	where $k$ is the maximal rank of a function symbol in $\B$.
	Then $(A,\rho',B)$ is an EC-tuple in $\C'$ if and only if there exists an EC-tuple $(A,\rho,B)$ in $\C$
	such that $\rho'$ is obtained from $\rho$ by omitting those symbols which describe an edge to a non-copy gate on the path $(A,\rho,B)$.
	Formally, we guess a ``bit mask'' $z \in \{0,1\}^*$ with $|z| = |\rho|$ using a single existential quantifier
	and test whether $\rho'$ is obtained from $\rho$ by removing the positions marked with a 0-bit in $z$.
	Then, for each non-empty prefix $\pi$ of $\rho$ we test whether $\pi(A)$ is a non-copy gate
	if and only if $z$ has a 1-bit at position $|\pi|$.
\end{proof}
	
\section{Hierarchical tree definitions}

In the following we will show how to construct a hierarchical definition of a given tree
which has logarithmic depth. Throughout this section all trees are implicitly given in ancestor representation.
The idea is to decompose a tree in a well-nested way into (i) subtrees, (ii) contexts (trees with a hole) and (iii) single nodes.
From such a decomposition, it is easy to derive a tree straight-line program; this will be done in Section~\ref{sec-small-TSLP}.
The advantage of hierarchical decompositions over  tree straight-line programs is that the former perfectly fit into
the descriptive complexity framework: There is a natural representation of
a hierarchical decomposition  by two relations -- a unary one and a binary one --
on the node set of the tree.
	
A {\em pattern} $p$ in a $k$-ordered tree $\T$ is either a single node $v \in V(\T)$, called a {\em subtree pattern},
or a pair of nodes $(v,w) \in V(\T)^2$, called a {\em context pattern}, such that $w$ is a proper descendant of $v$.
A subtree pattern $v$ {\em covers} all descendants of $v$ (including $v$),
whereas a context pattern $(v,w)$ {\em covers} all descendants of $v$ which are not descendants of $w$.
The set of nodes covered by a pattern $p$ is denoted by $V[p]$ and $\T[p]$ is the subtree of $\T$ induced by $V[p]$.
The {\em root} of $p$ is the root of $\T[p]$ and its {\em size} is $|\T[p]|$.
We call $q$ a {\em subpattern} of $p$, denoted by $q \le p$, if $V[q] \subseteq V[p]$,
which partially orders the set of all patterns in a tree.
Note that the root of $\T$ is the largest pattern with respect to $\le$.
Two patterns $p,q$ are {\em disjoint} if $V[p] \cap V[q] = \emptyset$.
A set $P$ of patterns in $\T$ is a {\em hierarchical definition} of $\T$ if
\begin{itemize}
	\item $P$ contains the largest pattern (the root of $\T$), and
	\item $P$ is {\em well-nested},
	      i.e. any two patterns $p,q \in P$ are disjoint
	      or comparable ($p \le q$ or $q \le p$).
	      		
\end{itemize}
The pair $(\T,P)$ is also called a hierarchical definition, which is formally represented as the logical structure
$(\T,P \cap V(\T), P \cap V(\T)^2)$.
	
One can view a hierarchical definition itself as a tree where the patterns are its nodes.
We say that $q \in P$ is a {\em direct subpattern} of $p \in P$ in $P$,
denoted by $q \lessdot p$,
if $q < p$ and there exists no $r \in P$ with $q < r < p$.
The {\em pattern tree} of $P$ is the tree with node set $P$ where the children of a pattern
are its direct subpatterns, ordered by the depth-first order on their roots.
The height of the pattern tree is the {\em depth} of $P$, denoted by $\mathrm{depth}(P)$.
Furthermore, each pattern $p$ in the pattern tree is annotated by its {\em branching tree}, which is 
defined as follows:
The {\em boundary} $\partial p$ of $p$ is $\partial p = V[p] \setminus \bigcup_{q \lessdot p} V[q]$,
i.e. the set of nodes covered by $p$ but not by any of its (direct) subpatterns.
The {\em branching tree} of a pattern $p \in P$ is obtained from $\T[p]$
by contracting the direct subpatterns to single nodes labelled by a special symbol (in order to distinguish them
from boundary nodes), i.e. its node set is
$\partial p \cup \{ q \mid q \lessdot p \}$.
The {\em width} of $P$ is the maximal size of a branching tree of a pattern $p \in P$,
denoted by $\mathrm{width}(P)$.

\tikzstyle{d1} = [line width = 2pt]
\tikzstyle{d2} = [line width = 4pt]
\tikzstyle{d3} = [line width = 6pt]
\tikzstyle{d4} = [line width = 8pt]
\tikzstyle{d5} = [line width = 10pt]
	
\tikzstyle{s1} = [inner sep = 1pt]
\tikzstyle{s2} = [inner sep = 2pt]
\tikzstyle{s3} = [inner sep = 3pt]
\tikzstyle{s4} = [inner sep = 4pt]
\tikzstyle{s5} = [inner sep = 5pt]
	
\definecolor{col1}{rgb}{0.2,0.2,0.8}
\definecolor{col2}{rgb}{0.8,0.2,0.2}
\definecolor{col3}{HTML}{3bc7ce}
\definecolor{col4}{rgb}{0.8,0.2,0.6}
\definecolor{col5}{HTML}{99be1e}
\definecolor{col6}{HTML}{ea953a}
\definecolor{col7}{HTML}{4286e4}

\begin{figure}
	\centering
	\raisebox{-0.5\height}{
		\begin{tikzpicture}[sibling distance = 5pt, level distance = 15pt,
				every tree node/.style={s5, circle, fill = col7},
			edge from parent/.style = {draw, d5, col7, edge from parent path={(\tikzparentnode.center) -- (\tikzchildnode.center)}}]
			\Tree
			[. \node [label={above:\scriptsize $a$}] (a) {};
				[. \node [label={above:\scriptsize \strut $b$}] (b) {};
					[. \node [label={above:\scriptsize \strut $c$}] (c) {};
						[. \node [label={below:\scriptsize \strut $d$}] (d) {}; ]
						[. \node [label={above:\scriptsize $e$}] (e) {};
							[. \node [label={below:\scriptsize \strut $f$}] (f) {}; ]
							[. \node [label={below:\scriptsize \strut $g$}] (g) {}; ]
						]
					]
					[. \node [label={above:\scriptsize $h$}] (h){};
						[. \node [label={below:\scriptsize \strut $i$}] (i) {}; ]
						[. \node [label={below:\scriptsize \strut $j$}] (j) {}; ]
					]
				]
				[. \node [label={above:\scriptsize \strut $k$}] (k) {};
					[. \node [label={above:\scriptsize $l$}] (l) {};
						[. \node [label={above:\scriptsize \strut $m$}] (m) {};
							[. \node [label={below:\scriptsize \strut $n$}] (n) {}; ]
							[. \node [label={below:\scriptsize \strut $o$}] (o) {}; ]
						]
						[. \node [label={below:\scriptsize \strut $p$}] (p) {};
						]
					]
					[. \node [label={above:\scriptsize \strut $q$}] (q) {};
						[. \node [label={below:\scriptsize \strut $r$}] (rr) {}; ]
						[. \node [label={above:\scriptsize \strut $s$}] (s) {};
							[. \node [label={below:\scriptsize \strut $t$}] (t) {}; ]
							[. \node [label={below:\scriptsize \strut $u$}] (u) {}; ]
						]
					]
				]
			]
					
			\tikzstyle{t1} = [circle, draw = col5, fill = col5, s3]
			\node [t1] (1) at (b) {};
			\node [t1] (2) at (c) {};
			\node [t1] (3) at (e) {};
			\node [t1] (4) at (f) {};
			\node [t1] (5) at (g) {};
			\node [t1] (6) at (h) {};
			\node [t1] (7) at (i) {};
			\node [t1] (8) at (j) {};
			\foreach \x/\y in {1/2,2/3,3/4,3/5,1/6,6/7,6/8} {
				\draw (\x.center) edge [d3,col5] (\y.center);
			}
					
			\tikzstyle{t2} = [circle, draw = col4, fill = col4, s2]
			\node [t2] (1) at (c) {};
			\node [t2] (2) at (e) {};
			\node [t2] (3) at (f) {};
			\node [t2] (4) at (g) {};
			\foreach \x/\y in {1/2,2/3,2/4} {
				\draw (\x.center) edge [d2,col4] (\y.center);
			}
			
			\tikzstyle{t3} = [circle, draw = col1, fill = col1, s1]
			\node [t3] (1) at (e) {};
			\node [t3] (2) at (f) {};
			\draw (1.center) edge [d1,col1] (2.center);
			
			\node [t3] (1) at (h) {};
			\node [t3] (2) at (i) {};
			\draw (1.center) edge [d1,col1] (2.center);
			
			\tikzstyle{t4} = [circle, draw = col6, fill = col6, s4]
			\node [t4] (1) at (k) {};
			\node [t4] (2) at (l) {};
			\node [t4] (3) at (m) {};
			\node [t4] (4) at (n) {};
			\node [t4] (5) at (o) {};
			\node [t4] (6) at (p) {};
			\node [t4] (7) at (q) {};
			\node [t4] (8) at (rr) {};
			\node [t4] (9) at (s) {};
			\node [t4] (10) at (t) {};
			\foreach \x/\y in {1/2,2/3,3/4,3/5,2/6,1/7,7/8,7/9,9/10} {
				\draw (\x.center) edge [d4,col6] (\y.center);
			}
					
			\tikzstyle{t5} = [circle, draw = col3, fill = col3, s3]
			\node [t5] (0) at (k) {};
			\node [t5] (1) at (l) {};
			\node [t5] (2) at (m) {};
			\node [t5] (3) at (n) {};
			\node [t5] (4) at (o) {};
			\node [t5] (5) at (p) {};
			\foreach \x/\y in {0/1,1/2,2/3,2/4,1/5} {
				\draw (\x.center) edge [d3,col3] (\y.center);
			}
					
			\tikzstyle{t6} = [circle, draw = col2, fill = col2, s2]
			\node [t6] (1) at (l) {};
			\node [t6] (2) at (m) {};
			\node [t6] (3) at (n) {};
			\node [t6] (4) at (p) {};
			\foreach \x/\y in {1/2,2/3,1/4} {
				\draw (\x.center) edge [d2,col2] (\y.center);
			}
					
			\node [t3] (1) at (m) {};
			\node [t3] (2) at (n) {};
			\draw (1.center) edge [d1,col1] (2.center);
			
			\node [t3] (1) at (s) {};
			\node [t3] (2) at (t) {};
			\draw (1.center) edge [d1,col1] (2.center);
		\end{tikzpicture}
	}
	\raisebox{-0.5\height}{
		\tikzstyle{branching-tree} = [nn, every tree node/.style={draw = none, circle, fill = white, inner sep = 0, minimum size = 0pt},
		sibling distance = 8pt, level distance = 10pt]
		\tikzstyle{direct-subpattern} = [minimum size = 5pt]
		\begin{tikzpicture}[every node/.style={draw}]
			\node [draw = col7] (a) { \begin{tikzpicture}[branching-tree]
				\Tree
				[. \node {\footnotesize $a$};
					[. \node {\footnotesize $*$};
						[. \node {\footnotesize $d$}; ]
					]
					[. \node {\footnotesize $*$};
						[. \node {\footnotesize $u$}; ]
					]
				]
				\end{tikzpicture}
			};
			\node [draw = col5] (b) [below left = 0pt and 5pt of a] { \begin{tikzpicture}[branching-tree]
				\Tree
				[. \node {\footnotesize $b$};
					[. \node {\footnotesize $*$}; ]
					[. \node {\footnotesize $*$};
						[. \node {\footnotesize $j$}; ]
					]
				]
				\end{tikzpicture}
			};
			\node [draw = col6] (c) [below right = 0pt and 5pt of a] { \begin{tikzpicture}[branching-tree]
				\Tree
				[. \node {\footnotesize $*$};
					[. \node {\footnotesize $q$};
						[. \node {\footnotesize $r$}; ]
						[. \node {\footnotesize $*$}; ]
					]
				]
								
				\end{tikzpicture}
			};
			\node (b1) [draw = none, below left = 8pt of b] {};
			\node (b2) [draw = none, below right = 8pt of b] {};
			\node (c1) [draw = none, below left = 8pt of c] {};
			\node (c2) [draw = none, below right = 8pt of c] {};
				
			\draw (a.south) ++ (-10pt,0) edge (b);
			\draw (a.south) ++ (10pt,0) edge (c);
			\draw (b.south) ++ (-10pt,0) edge (b1);
			\draw (b.south) ++ (10pt,0) edge (b2);
			\draw (c.south) ++ (-10pt,0) edge (c1);
			\draw (c.south) ++ (10pt,0) edge (c2);
		\end{tikzpicture}
	}
	\caption{A hierarchical definition with its pattern tree.
		Each pattern in the pattern tree is labelled by its branching tree.
	The symbol $*$ represents a direct subpattern.}
	\label{fig:hier-def}
\end{figure}
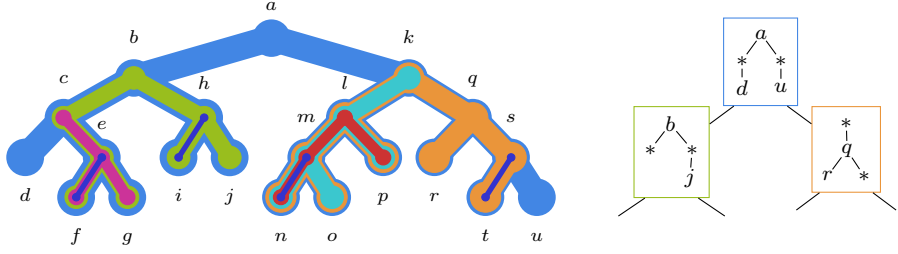

\begin{example}
	Figure~\ref{fig:hier-def} shows an example of a hierarchical definition $P$
	and a top part of its pattern tree, which has height 4.
	The largest pattern in blue has two direct subpatterns
	and its boundary is $\{ a,d,u \}$.
	Hence the branching tree of the largest pattern has size 5,
	which is also the width of $P$.
\end{example}

\subsection{Hierarchical definitions via tree contraction}
	
This section is the core of the paper. 
Using the tree construction technique of 
Abrahamson et al.~\cite{AbrahamsonDKP89} we construct a 
hierarchical definition for a given binary tree. Here, a binary tree is a $2$-ordered
tree $\T = (V,E_1, E_2, (P_a)_{a \in A})$ where every node $u \in V$ is either a leaf
(i.e., there is no $v$ with $(u,v) \in E_1 \cup E_2$) or has a left and a right child (i.e., there exist $v_1, v_2 \in V$ with
$(u,v_1) \in E_1$ and $(u,v_2) \in E_2$). Such trees are also called full binary trees. 
	
The unary relations $P_a$ that
define the node labels are not important in this section and can be completely ignored; only in Section~\ref{sec-n/logn} the number
of node labels will be relevant.

Let $\T$ be a binary tree with at least two leaves.
The basic operation of tree contraction is called {\em prune-and-bypass}.
Let $w$ be a leaf node, $v$ its parent node and $u$ be the parent node of $v$.
Applying the prune-and-bypass operation to $w$ means: both $v$ and $w$ are removed
and the sibling $w'$ of $w$ becomes a new child of $u$.
We say that the edges $(u,v)$, $(v,w)$ and $(v,w')$ are {\em involved} in this prune-and-bypass step.
In our definition the operation can only be applied to leaves
of depth at least 2 so that the root is never removed.

To verify the correctness of our (parallel) tree contraction algorithm, we first present a sequential tree-contraction algorithm.
A pattern $p$ in $\T$ is {\em hidden} in a context pattern $(u,v)$ if $p$ is a subpattern of $(u,v)$ but does not cover $u$.
Starting with $P_0 = \emptyset$ and $\T_0 = \T$, we maintain the following invariants:
(1) each pattern $p \in P_i$ is hidden in some edge of $\T_i$ (interpreted as patterns in $\T$)
and (2) $P_i$ is well-nested.
We obtain $\T_{i+1}$ from $\T_i$ by pruning-and-bypassing an arbitrary leaf node $w$ in $\T_i$ (of depth at least 2).
Let $u$ be the grandparent node of $w$ and $w'$ be the sibling of $w$ in $\T_i$.
The {\em contraction pattern} $p$ formed in this prune-and-bypass step is the maximal subpattern $p$
which is hidden in $(u,w')$.
It is the pattern $p = (u',w')$ where $u'$ is the child of $u$ that belongs to the path in $\T$ from $u$ down to $w$.
We add $p$ to $P_i$ to obtain $P_{i+1}$.
	
Clearly, property (1) is preserved because the edge $(u,w')$ is introduced in $\T_{i+1}$
and all patterns which are hidden in some involved edge in $\T_i$ are hidden in the edge $(u,w')$ in $\T_{i+1}$.
By property (1) every pattern $q \in P_i$ which intersects the contraction pattern $p$ is hidden in one of the three edges
involved in the prune-and-bypass operation, therefore $q$ is a subpattern of $p$.
This proves that $P_{i+1}$ is indeed well-nested.
By adding the largest pattern in $\T$ to any set $P_i$, we clearly obtain a hierarchical definition for $\T$.
	
Now we proceed with the parallel tree-contraction algorithm.
Notice that we can apply the prune-and-bypass operation to a set of leaves in parallel
if no edge is involved in more than one prune-and-bypass operation.
We apply the prune-and-bypass operation only 
to {\em internal} leaves, i.e. leaves which are not the left- or the right-most leaf in the tree.
This implies that leaves which are children of the root node are not pruned, i.e., every pruned leaf
has a grandparent as required above.\footnote{Elberfeld et al.~\cite{ElberfeldJT12} enforce this
	by adding at the very beginning a fresh root with a fresh leaf as its left child, and the original tree as its right subtree. Here,
we want to avoid adding new nodes to the tree.}
Let $\T_0$ be the input tree $\T$ with $n$ internal leaves and hence $n+2$ leaves and $2(n+2)-1$ nodes.
We label the internal leaves by the numbers $1, \dots, n$ from left to right.
It may be helpful for the reader to think of the leaf numbers in their binary encodings.
We construct a sequence of trees $\T_0, \dots, \T_m$ as follows.
\begin{itemize}
	\item If $\T_{2i}$ has size two, the algorithm terminates.
	\item If $\T_{2i}$ has at least one internal leaf, we prune-and-bypass all internal leaves in $\T_{2i}$ with an odd number that are left children to obtain the tree $\T_{2i+1}$.
	      Then we prune-and-bypass all internal leaves in $\T_{2i+1}$ with an odd number that are right children
	      and relabel the remaining internal leaves (divide leaf number by 2) to obtain the tree $\T_{2i+2}$.
\end{itemize}
Notice that $\T_{2i}$ contains exactly those internal leaves whose number in $\T_0$ is divided by $2^i$.
Hence the algorithm terminates after $m = 2 (\lfloor \log_2 n \rfloor + 1)$ rounds.
In Figure~\ref{fig:contract-compress} we illustrate the tree contraction algorithm.
The leaves which are pruned and bypassed are colored together with their parent nodes and the involved edges.
	
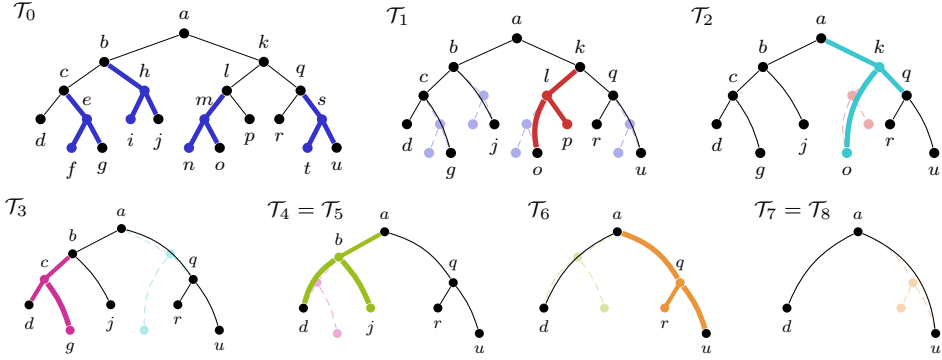
\begin{figure}
			
	\tikzset{sibling distance = 1.5pt, level distance = 12pt}
	\tikzstyle{inv} = [draw=none,fill=none]
			
	\centering
	{
		\begin{tikzpicture}[nn,scale=0.9]
			\Tree
			[. \node [label=\footnotesize $a$] (a) {};
				[. \node [label=\footnotesize $b$] {};
					[. \node [label=\footnotesize $c$] {};
						[. \node [label=below:\footnotesize $d$] {}; ]
						\edge [nc, d1, col1];
						[. \node [label=\footnotesize $e$, col1] {};
							\edge [cc, d1, col1];
							[. \node [label=below:\footnotesize $f$, col1] {}; ]
							\edge [cn, d1, col1];
							[. \node [label=below:\footnotesize $g$] {}; ]
						]
					]
					\edge [nc, d1, col1];
					[. \node [label=\footnotesize $h$, col1] {};
						\edge [cc, d1, col1];
						[. \node [label=below:\footnotesize $i$, col1] {}; ]
						\edge [cn, d1, col1];
						[. \node [label=below:\footnotesize $j$] {}; ]
					]
				]
				[. \node [label=\footnotesize $k$] {};
					[. \node [label=\footnotesize $l$] {};
						\edge [nc, d1, col1];
						[. \node [label=\footnotesize $m$, col1] {};
							\edge [cc, d1, col1];
							[. \node [label=below:\footnotesize $n$, col1] {}; ]
							\edge [cn, d1, col1];
							[. \node [label=below:\footnotesize $o$] {}; ]
						]
						[. \node [label=below:\footnotesize $p$] {}; ]
					]
					[. \node [label=\footnotesize $q$] {};
						[. \node [label=below:\footnotesize $r$] {}; ]
						\edge [nc, d1, col1];
						[. \node [label=\footnotesize $s$, col1] {};
							\edge [cc, d1, col1];
							[. \node [label=below:\footnotesize $t$, col1] {}; ]
							\edge [cn, d1, col1];
							[. \node [label=below:\footnotesize $u$] {}; ]
						]
					]
				]
			]
			\node [above left = 0pt and 50pt of a] {\footnotesize $\T_0$};
		\end{tikzpicture}
	}
	{
		\begin{tikzpicture}[nn,scale=0.9]
			\tikzstyle{hide} = [draw = col1!40, densely dashed]
			\tikzstyle{bypass} = [draw = none, fill = col1!40]
			\Tree
			[. \node [label=\footnotesize $a$] (a) {};
				[. \node [label=\footnotesize $b$] (b) {};
					[. \node [label=\footnotesize $c$] (c) {};
						[. \node [label=below:\footnotesize $d$] {}; ]
						\edge [hide];
						[. \node [bypass] {};
							\edge [hide];
							[. \node [bypass] {}; ]
							\edge [hide];
							[. \node [label=below:\footnotesize $g$] (g) {}; ]
						]
					]
					\edge [hide];
					[. \node [bypass] {};
						\edge [hide];
						[. \node [bypass] {}; ]
						\edge [hide];
						[. \node [label=below:\footnotesize $j$] (j) {}; ]
					]
				]
				[. \node [label=\footnotesize $k$] {};
					\edge [nc, d1, col2];
					[. \node [label=above:\footnotesize $l$, col2] (l) {};
						\edge [hide];
						[. \node [bypass] {};
							\edge [hide];
							[. \node [bypass] {}; ]
							\edge [hide];
							[. \node [label=below:\footnotesize $o$] (o) {}; ]
						]
						\edge [cc, d1, col2];
						[. \node [label=below:\footnotesize $p$, col2] {}; ]
					]
					[. \node [label=\footnotesize $q$] (q)  {};
						[. \node [label=below:\footnotesize $r$] {}; ]
						\edge [hide];
						[. \node [bypass] {};
							\edge [hide];
							[. \node [bypass] {}; ]
							\edge [hide];
							[. \node [label=below:\footnotesize $u$] (u) {}; ]
						]
					]
				]
			]
			\node [above left = 0pt and 35pt of a] {\footnotesize $\T_1$};
			\draw (c) edge [bend left = 10] (g);
			\draw (b) edge [bend left = 10] (j);
			\draw (l) edge [cc, d1, col2, bend right = 20] (o);
			\draw (q) edge [bend left = 10] (u);
		\end{tikzpicture}
	}
	{
		\begin{tikzpicture}[nn,scale=0.9]
			\tikzstyle{hide} = [draw = col2!40, densely dashed]
			\tikzstyle{bypass} = [draw = none, fill = col2!40]
			\Tree
			[. \node [label=\footnotesize $a$] (a) {};
				[. \node [label=\footnotesize $b$] (b) {};
					[. \node [label=\footnotesize $c$] (c) {};
						[. \node [label=below:\footnotesize $d$] {}; ]
						\edge [inv];
						[. \node [inv] {};
							\edge [inv];
							[. \node [inv] {}; ]
							\edge [inv];
							[. \node [label=below:\footnotesize $g$] (g) {}; ]
						]
					]
					\edge [inv];
					[. \node [inv] {};
						\edge [inv];
						[. \node [inv] {}; ]
						\edge [inv];
						[. \node [label=below:\footnotesize $j$] (j) {}; ]
					]
				]
				\edge [nc, d1, col3];
				[. \node [label=\footnotesize $k$, col3] (k) {};
					\edge [hide];
					[. \node (l) [bypass] {};
						\edge [inv];
						[. \node [inv] {};
							\edge [inv];
							[. \node [inv] {}; ]
							\edge [inv];
							[. \node [label=below:\footnotesize $o$, col3] (o) {}; ]
						]
						\edge [hide];
						[. \node [bypass] {}; ]
					]
					\edge [cn, d1, col3];
					[. \node [label=\footnotesize $q$] (q)  {};
						[. \node [label=below:\footnotesize $r$] {}; ]
						\edge [inv];
						[. \node [inv] {};
							\edge [inv];
							[. \node [inv] {}; ]
							\edge [inv];
							[. \node [label=below:\footnotesize $u$] (u) {}; ]
						]
					]
				]
			]
			\node [above left = 0pt and 35pt of a] {\footnotesize $\T_2$};
			\draw (c) edge [bend left = 10] (g);
			\draw (b) edge [bend left = 10] (j);
			\draw (l) edge [hide, bend right = 20] (o);
			\draw (k) edge [cc, d1, col3, bend right = 20] (o);
			\draw (q) edge [bend left = 10] (u);
		\end{tikzpicture}
	}
	\hspace{5pt}
	\tikzset{sibling distance = 2pt}
	{
		\begin{tikzpicture}[nn,scale=0.8]
			\tikzstyle{hide} = [draw = col3!40, densely dashed]
			\tikzstyle{bypass} = [draw = none, fill = col3!40]
			\Tree
			[. \node [label=\footnotesize $a$] (a) {};
				[. \node [label=\footnotesize $b$] (b) {};
					\edge [nc, d1, col4];
					[. \node [label=\footnotesize $c$, col4] (c) {};
						\edge [cn, d1, col4];
						[. \node [label=below:\footnotesize $d$] {}; ]
						\edge [inv];
						[. \node [inv] {};
							\edge [inv];
							[. \node [inv] {}; ]
							\edge [inv];
							[. \node [label=below:\footnotesize $g$, col4] (g) {}; ]
						]
					]
					\edge [inv];
					[. \node [inv] {};
						\edge [inv];
						[. \node [inv] {}; ]
						\edge [inv];
						[. \node [label=below:\footnotesize $j$] (j) {}; ]
					]
				]
				\edge [hide];
				[. \node [bypass] (k) {};
					\edge [inv];
					[. \node (l) [inv] {};
						\edge [inv];
						[. \node [inv] {};
							\edge [inv];
							[. \node [inv] {}; ]
							\edge [inv];
							[. \node [bypass] (o) {}; ]
						]
						\edge [inv];
						[. \node [inv] {}; ]
					]
					\edge [hide];
					[. \node [label=\footnotesize $q$] (q)  {};
						[. \node [label=below:\footnotesize $r$] {}; ]
						\edge [inv];
						[. \node [inv] {};
							\edge [inv];
							[. \node [inv] {}; ]
							\edge [inv];
							[. \node [label=below:\footnotesize $u$] (u) {}; ]
						]
					]
				]
			]
			\node [above left = 0pt and 30pt of a] {\footnotesize $\T_3$};
			\draw (c) edge [cc, d1, col4, bend left = 10] (g);
			\draw (b) edge [bend left = 10] (j);
			\draw (a) edge [bend left = 20] (q);
			\draw (k) edge [hide, bend right = 20] (o);
			\draw (q) edge [bend left = 10] (u);
		\end{tikzpicture}
	}
	{
		\begin{tikzpicture}[nn,scale=0.8]
			\tikzstyle{hide} = [draw = col4!40, densely dashed]
			\tikzstyle{bypass} = [draw = none, fill = col4!40]
			\Tree
			[. \node [label=\footnotesize $a$] (a) {};
				\edge [nc, d1, col5];
				[. \node [label=above:\footnotesize $b$,col5] (b) {};
					\edge [hide];
					[. \node [bypass] (c) {};
						\edge [hide];
						[. \node [label=below:\footnotesize $d$] (d) {}; ]
						\edge [inv];
						[. \node [inv] {};
							\edge [inv];
							[. \node [inv] {}; ]
							\edge [inv];
							[. \node [bypass] (g) {}; ]
						]
					]
					\edge [inv];
					[. \node [inv] {};
						\edge [inv];
						[. \node [inv] {}; ]
						\edge [inv];
						[. \node [label=below:\footnotesize $j$,col5] (j) {}; ]
					]
				]
				\edge [inv];
				[. \node [inv] (k) {};
					\edge [inv];
					[. \node (l) [inv] {};
						\edge [inv];
						[. \node [inv] {};
							\edge [inv];
							[. \node [inv] {}; ]
							\edge [inv];
							[. \node [inv] (o) {}; ]
						]
						\edge [inv];
						[. \node [inv] {}; ]
					]
					\edge [inv];
					[. \node [label=\footnotesize $q$] (q)  {};
						[. \node [label=below:\footnotesize $r$] {}; ]
						\edge [inv];
						[. \node [inv] {};
							\edge [inv];
							[. \node [inv] {}; ]
							\edge [inv];
							[. \node [label=below:\footnotesize $u$] (u) {}; ]
						]
					]
				]
			]
			\node [above left = 0pt and 10pt of a] {\footnotesize $\T_4=\T_5$};
			\draw (c) edge [hide, bend left = 10] (g);
			\draw (b) edge [nc, d1, col5, bend left = 10] (j);
			\draw (a) edge [bend left = 20] (q);
			\draw (q) edge [bend left = 10] (u);
			\draw (b) edge [cn, d1, bend right = 20, col5] (d);
		\end{tikzpicture}
	}
	{
		\begin{tikzpicture}[nn,scale=0.8]
			\tikzstyle{hide} = [draw = col5!40, densely dashed]
			\tikzstyle{bypass} = [draw = none, fill = col5!40]
			\Tree
			[. \node [label=\footnotesize $a$] (a) {};
				\edge [hide];
				[. \node [bypass] (b) {};
					\edge [inv];
					[. \node [inv] (c) {};
						\edge [inv];
						[. \node [label=below:\footnotesize $d$] (d) {}; ]
						\edge [inv];
						[. \node [inv] {};
							\edge [inv];
							[. \node [inv] {}; ]
							\edge [inv];
							[. \node [inv] (g) {}; ]
						]
					]
					\edge [inv];
					[. \node [inv] {};
						\edge [inv];
						[. \node [inv] {}; ]
						\edge [inv];
						[. \node [bypass] (j) {}; ]
					]
				]
				\edge [inv];
				[. \node [inv] (k) {};
					\edge [inv];
					[. \node (l) [inv] {};
						\edge [inv];
						[. \node [inv] {};
							\edge [inv];
							[. \node [inv] {}; ]
							\edge [inv];
							[. \node [inv] (o) {}; ]
						]
						\edge [inv];
						[. \node [inv] {}; ]
					]
					\edge [inv];
					[. \node [label=\footnotesize $q$, col6] (q)  {};
						\edge [cc, d1, col6];
						[. \node [label=below:\footnotesize $r$, col6] {}; ]
						\edge [inv];
						[. \node [inv] {};
							\edge [inv];
							[. \node [inv] {}; ]
							\edge [inv];
							[. \node [label=below:\footnotesize $u$] (u) {}; ]
						]
					]
				]
			]
			\node [above left = 0pt and 20pt of a] {\footnotesize $\T_6$};
			\draw (b) edge [hide, bend left = 10] (j);
			\draw (a) edge [nc, d1, col6, bend left = 20] (q);
			\draw (q) edge [cn, d1, col6, bend left = 10] (u);
			\draw (b) edge [hide, bend right = 20] (d);
			\draw (a) edge [bend right = 20] (d);
		\end{tikzpicture}
	}
	{
		\begin{tikzpicture}[nn,scale=0.8]
			\tikzstyle{hide} = [draw = col6!40, densely dashed]
			\tikzstyle{bypass} = [draw = none, fill = col6!40]
			\Tree
			[. \node (a) [label=\footnotesize $a$] {};
				\edge [inv];
				[. \node [inv] (b) {};
					\edge [inv];
					[. \node [inv] (c) {};
						\edge [inv];
						[. \node [label=below:\footnotesize $d$] (d) {}; ]
						\edge [inv];
						[. \node [inv] {};
							\edge [inv];
							[. \node [inv] {}; ]
							\edge [inv];
							[. \node [inv] (g) {}; ]
						]
					]
					\edge [inv];
					[. \node [inv] {};
						\edge [inv];
						[. \node [inv] {}; ]
						\edge [inv];
						\edge [inv];
						[. \node [inv] (j) {}; ]
					]
				]
				\edge [inv];
				[. \node [inv] (k) {};
					\edge [inv];
					[. \node (l) [inv] {};
						\edge [inv];
						[. \node [inv] {};
							\edge [inv];
							[. \node [inv] {}; ]
							\edge [inv];
							[. \node [inv] (o) {}; ]
						]
						\edge [inv];
						[. \node [inv] {}; ]
					]
					\edge [inv];
					[. \node [bypass] (q)  {};
						\edge [hide];
						[. \node [bypass] {}; ]
						\edge [inv];
						[. \node [inv] {};
							\edge [inv];
							[. \node [inv] {}; ]
							\edge [inv];
							[. \node [label=below:\footnotesize $u$] (u) {}; ]
						]
					]
				]
			]
			\node [above left = 0pt and 5pt of a] {\footnotesize $\T_7=\T_8$};
			\draw (a) edge [hide, bend left = 20] (q);
			\draw (q) edge [hide, bend left = 10] (u);
			\draw (a) edge [bend left = 30] (u);
			\draw (a) edge [bend right = 20] (d);
		\end{tikzpicture}
	}
	\caption{Example for the tree contraction algorithm. Alternatingly, left and right internal leaves are pruned-and-bypassed.
		The contraction patterns introduced in $\T_i$ are hidden in the edges of $\T_{i+1}$, e.g. the pattern $(m,o)$ introduced in $\T_0$
		is hidden in the edge $(l,o)$ in $\T_1$.}
	\label{fig:contract-compress}
\end{figure}

\begin{lemma}
	\label{lem:tree-contract-seq}
	There is an FOM-computable function which maps a binary tree $\T$ and a number $0 \le i \le m$ to $\T_i$. 
\end{lemma}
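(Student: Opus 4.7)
The plan is to express $\T_i$ as an FOM-interpretation of $\T$ by defining its node set and edge relations via formulas in the ancestor relation of $\T$. I will handle even rounds $i = 2j$ and odd rounds $i = 2j+1$ separately, since after an even round the parallel prune-and-bypass step is complete and the internal leaves of $\T_{2j}$ are cleanly those of $\T$ with $2^j \mid \mathrm{num}(w)$, where $\mathrm{num}(w)$ denotes the $1,\dots,n$ index of an internal leaf.

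For even rounds, I first fix the surviving \emph{leaves}: the two boundary leaves always survive, and an internal leaf $w$ survives in $\T_{2j}$ iff $2^j \mid \mathrm{num}(w)$. The index $\mathrm{num}(w)$ is FOM-computable by counting internal leaves preceding $w$ in depth-first order, and divisibility by a power of two is FO-definable using $\mathrm{BIT}$. Next I would establish the key structural claim: a non-leaf $u$ of $\T$ survives in $\T_{2j}$ iff both subtrees of $u$ in $\T$ still contain a surviving leaf, equivalently, $u$ is the $\T$-LCA of two surviving leaves lying in its two distinct subtrees. The ``only if'' direction is clear, since any surviving internal node must branch between surviving descendants. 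The ``if'' direction is proved by induction on the round number, observing that a prune-and-bypass step eliminates $u$ only when one of its sides has already been contracted to a single leaf, which contradicts the assumption. Granting this, the node set of $\T_{2j}$ is FOM-definable, and the edge relations follow: $(u,v)$ is an edge iff both $u,v$ are surviving, $u$ is a strict $\T$-ancestor of $v$, and no surviving node lies strictly between them on the $\T$-path; the left/right label is determined by which $\T$-child of $u$ is an ancestor of $v$.

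For odd rounds, $\T_{2j+1}$ differs from $\T_{2j}$ only by removing those internal leaves $w$ with $\nu_2(\mathrm{num}(w)) = j$ that are left children in $\T_{2j}$. The parent of such a $w$ in $\T_{2j}$ is its deepest proper $\T$-ancestor among the nodes surviving in $\T_{2j}$ (which by the structural claim is an LCA of two surviving leaves), and $w$ is a left child iff it lies in that ancestor's left $\T$-subtree. This condition is FOM-definable on top of the survival predicate for $\T_{2j}$, so the surviving leaves of $\T_{2j+1}$, namely boundary leaves, internal leaves with $2^{j+1} \mid \mathrm{num}(w)$, plus those with $\nu_2(\mathrm{num}(w)) = j$ that are right children in $\T_{2j}$, are FOM-definable, and $\T_{2j+1}$ is built from them by the same induced-tree recipe as above.

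The main obstacle is the structural claim that the surviving non-leaf nodes of $\T_{2j}$ are exactly the $\T$-LCAs of pairs of surviving leaves from distinct subtrees. The induction hinges on tracking how a bypass can move the ``representative'' of a subtree upward but never across the branching point of a node whose two sides still each hold a surviving leaf. Once this is established, all required predicates, namely divisibility by a power of two, depth-first counting, LCA, and ``no surviving node strictly between'', are standard FOM queries on the ancestor representation of $\T$, so the interpretation producing $\T_i$ is immediate.
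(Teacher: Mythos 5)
Your proposal is correct and follows essentially the same route as the paper: you characterize the surviving leaves of $\T_i$ arithmetically via divisibility of the leaf index by powers of two (plus the right-child condition for odd rounds) and recover the inner nodes as least common ancestors of surviving leaves, which is exactly the paper's ``main observation'' that LCAs are preserved under prune-and-bypass. You merely spell out two points the paper leaves implicit, namely the inductive proof of the structural claim and the FO-definability of the left/right-child status in $\T_{2j}$ from the ancestor representation of $\T$.
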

	
\begin{proof}
	Similar proofs are given in \cite{Buss93,ElberfeldJT12}.
	The main observation is that the least common ancestor of two nodes in $\T_i$
	is the same as their least common ancestor in $\T_0$.
	Therefore it suffices to compute the set of leaves of $\T_i$, which directly also yields the inner nodes
	as the least common ancestors of any two leaves.
			
	First, the number $m$ is FOM-definable using the BIT-predicate ($\lfloor \log n \rfloor + 1$ is the largest
	number $i$ with $\mathrm{BIT}(n,i) = 1$).
	The leaves of the tree $\T_i$ are the left- and rightmost leaf of $\T_0$, together with the internal leaves.
	The internal leaves of a tree $\T_{2i}$ are the internal leaves of $\T_0$ whose number in $\T_0$ is divided by $2^i$.
	The internal leaves of $\T_{2i+1}$ are the internal leaves of $\T_{2i+2}$ and all internal leaves of $\T_{2i}$
	which are right children.
\end{proof}
	
As in the sequential tree contraction algorithm we obtain a hierarchical definition
by taking the set of all contraction patterns which are formed in every prune-and-bypass operation
together with the largest pattern.
We call this hierarchical definition $\mathrm{CP}(\T)$.
Figure~\ref{fig:hier-def} shows the hierarchical definition obtained from
the example in Figure~\ref{fig:contract-compress}. 
The blue pattern is the largest pattern (the subtree pattern $a$).
	
Explicitly written down, we have
\[
	\mathrm{CP}(\T) = \{ a, (e,g), (h,j), (m,o), (s,u), (l,o), (k,q), (c,d), (b,d), (k,u) \}.
\]
	
\begin{proposition}
	\label{prop:hierarchical-def-binary}
	There is an FOM-computable function which maps a binary tree $\T$ to $(\T,\mathrm{CP}(\T))$,
	which is a hierarchical definition of depth $O(\log n)$ and width at most 5.
\end{proposition}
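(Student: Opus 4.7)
The plan is to verify the three claims in turn: FOM-computability, depth $O(\log n)$, and width at most $5$.

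For FOM-computability, I use Lemma~\ref{lem:tree-contract-seq} to access every intermediate tree $\T_i$. A context pattern $(u',w') \in V(\T)^2$ lies in $\mathrm{CP}(\T)$ iff there exist a step $i$ and a leaf $w$ of $\T_i$ such that $w$ is pruned at step $i+1$, $v$ is the parent and $u$ the grandparent of $w$ in $\T_i$, $w'$ is the sibling of $w$ in $\T_i$, and $u'$ is the child of $u$ in $\T$ lying on the $\T$-path to $w$. All these conditions are FOM-expressible using the ancestor relation on $\T$ together with the FOM-definable edges of $\T_i$; the root subtree pattern is added trivially. Well-nestedness of $\mathrm{CP}(\T)$ has already been established in the discussion of the sequential algorithm preceding the proposition.

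For the depth bound, every non-root pattern in $\mathrm{CP}(\T)$ is formed at a unique step, and its direct subpatterns in the pattern tree must have been formed strictly earlier. The direct subpatterns of a pattern $p = (u', w')$ formed at step $i+1$ are precisely the (nontrivial) content patterns of the three involved edges $(u,v), (v,w), (v,w')$ of $\T_i$; these edges existed at step $i$, so their contents are earlier patterns. Crucially, patterns formed \emph{after} $p$ cannot lie strictly below $p$ either: each such later pattern is the content of an edge of some $\T_j$ with $j > i+1$, and that edge either contains the edge $(u,w')$ of $\T_{i+1}$ (so the new pattern is $\ge p$) or is disjoint from it. Consequently any chain in the pattern tree has length at most $m+2 = O(\log n)$.

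For the width bound, I analyze the branching tree of each pattern. For a non-root contraction pattern $p = (u',w')$, let $q_1, q_2, q_3$ denote the possibly trivial content patterns of the three involved edges. A direct set computation verifies the identity $V[p] = V[q_1] \cup V[q_2] \cup V[q_3] \cup \{v,w\}$, using that $w$ is a leaf of $\T$: indeed any leaf of any $\T_i$ is a leaf of $\T$, because otherwise pruning any descendant leaf in an earlier round would have already removed the node. Hence $\partial p \subseteq \{v,w\}$, and combined with at most three direct subpatterns the branching tree of $p$ has at most $5$ nodes. For the root subtree pattern, the direct subpatterns are the at-most-two content patterns of the edges leaving the root in $\T_m$, and the boundary consists of the root together with the leftmost and rightmost leaves of $\T$, again giving at most $5$ nodes. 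The main subtle point is correctly identifying the direct subpatterns of each $p$, which I handle via the formation-time observation above.
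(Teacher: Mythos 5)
Your proof is correct and follows essentially the same route as the paper's: FOM-definability via Lemma~\ref{lem:tree-contract-seq}, depth bounded by the number of contraction rounds (because direct subpatterns are always formed strictly earlier and later patterns either contain $p$ or are disjoint from it), and width $5$ from the boundary $\{v,w\}$ plus at most three direct subpatterns hidden in the three involved edges (respectively three boundary nodes and at most two direct subpatterns for the root pattern). The additional verifications you supply, namely the identity $V[p]=V[q_1]\cup V[q_2]\cup V[q_3]\cup\{v,w\}$ and the observation that every leaf of $\T_i$ is a leaf of $\T$, are left implicit in the paper but are correct.
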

	
\begin{proof}
	The FOM-definition of $\mathrm{CP}(\T)$ follows easily from Lemma~\ref{lem:tree-contract-seq}.
	In every round of the algorithm all new contraction patterns are pairwise disjoint.
	Furthermore every new contraction pattern is maximal, i.e. it is not a subpattern of a previously introduced contraction pattern,
	because every previously introduced contraction patterns is hidden in some edge.
	Hence, the depth of $\mathrm{CP}(\T)$ is bounded by the number of rounds, which is $O(\log n)$.
	It remains to show that $(\T,\mathrm{CP}(\T))$
	has width at most 5.
			
	Every edge $(u,w')$ in a tree $\T_i$ which is not contained in $\T$ originates from an earlier prune-and-bypass operation,
	which implies that $\mathrm{CP}(\T)$ contains the maximal subpattern $p$ hidden in $(u,w')$.
	Let $w$ be the pruned leaf and $v$ be the parent node of $w$. Thus, the three edges involved in the prune-and-bypass operation
	are $(u,v)$, $(v,w)$ and $(v,w')$.
	The direct subpatterns of $p$ must be hidden in the three patterns $(u,v)$, $(v,w)$ and $(v,w')$.
	This proves that $p$ has at most three direct subpatterns (if say $(u,v)$ is an edge of $\T = \T_0$ then there 
	is no contraction pattern yet hidden in $(u,v)$, and similarly for $(v,w)$ and $(v,w')$; 
	hence the number of direct subpatterns of $p$ can be smaller than three).
	Furthermore $p$ has exactly two boundary nodes, namely the leaf $w$ and its parent node $v$.
	The largest pattern has three boundary nodes (the root of the tree and the outermost leaves)
	and at most two direct subpatterns.
	Hence the width of $\mathrm{CP}(\T)$ is bounded by 5.
\end{proof}
	
\subsection{Compression to size $n / \log n$} \label{sec-n/logn}
We improve Proposition~\ref{prop:hierarchical-def-binary}
by constructing a hierarchical definition in which many patterns are equivalent in a strong sense.
This will be crucial for proving the size bound $O(n / \log n)$ for  tree straight-line programs in Section~\ref{sec-small-TSLP}.
The result from this section will be only needed for our applications in 
Section~\ref{sec-regular} (but not Section~\ref{sec-finite-alg}).
		
For a pattern $p \in P$ the set $P[p] = \{ q \in P \mid q \le p \}$
forms a hierarchical definition of $\T[p]$.
Two patterns $p_1,p_2 \in P$ are {\em equivalent}
if the structures $(\T[p_1],P[p_1])$ and $(\T[p_2],P[p_2])$ are isomorphic.
Alternatively, $p_1$ is equivalent to $p_2$ if the subtrees of the pattern tree rooted in $p_1$ and $p_2$ are isomorphic.
The goal is to construct an FOM-definable hierarchical definition
in which there are at most $O(n/\log n)$ inequivalent patterns.
We follow the method of \cite{GMT88}, in which the authors describe a parallel tree contraction algorithm
which uses $O(n/\log n)$ processors on an EREW PRAM. 
The idea is to decompose the input tree into $O(n/\log n)$ many patterns of size $O(\log n)$.
		
We briefly summarize the notions and results from \cite{GMT88}.
Let $\T$ be a binary tree with $n$ nodes and let $1 < m \le n$ be an integer.
An inner node $v$ in $\T$ is {\em $m$-critical} if
$\lceil |\T[v]|/m \rceil \neq \lceil |\T[w]| / m \rceil$
for all children $w$ of $v$, which is equivalent to saying that there exists a multiple $m'$ of $m$
such that $|\T[w]| \le m' < |\T[v]|$ for all children $w$ of $v$.
Consider the set $C$ of all $m$-critical nodes and the subgraph of $\T$ induced by $V(\T) \setminus C$.
Each of its connected components is a tree $\T[p]$ for some pattern $p$
(this is implicitly stated in \cite[Lemma~9.2.1]{GMT88}).
These patterns $p$ are called {\em $m$-bridges}.\footnote{Our definition slightly deviates from the one
	given in \cite{GMT88} where a bridge also contains the neighbouring critical nodes as ``attachments''.}
It was proven in \cite{GMT88} that each $m$-bridge has size at most $m$ 
and that the number of $m$-critical nodes in $\T$ is at most $2n/m-1$.

\begin{proposition} \label{prop-n/log n}
	For every constant $\ell$, 
	there is an FOM-computable function which maps a binary tree $\T$ of size $n$ and with $\ell$ node labels
	to a hierarchical definition $(\T,P)$
	of constant width, depth $O(\log n)$, and with $O(n/\log n)$ inequivalent patterns.
\end{proposition}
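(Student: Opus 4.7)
The plan is to combine Proposition~\ref{prop:hierarchical-def-binary} with the $m$-bridge decomposition of \cite{GMT88} at $m = c \log n$, where $c$ is a constant depending on $\ell$, chosen small enough that the number of pairwise nonisomorphic $\ell$-labelled binary trees of size at most $m$ is $O(n/\log^2 n)$; this is possible because that count is exponential in $m$ with a base depending only on $\ell$. The hierarchical definition $P$ will consist of three layers: a macro layer $P_{\mathrm{outer}}$ capturing the structure of critical nodes and bridges, one pattern per bridge, and an inner layer $P_{\mathrm{inner}}$ obtained by running Proposition~\ref{prop:hierarchical-def-binary} independently inside every bridge.

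For the construction itself, I would first note that subtree sizes $|\T[v]|$ are FOM-definable as counts of descendants, so the $m$-critical nodes and $m$-bridges are FOM-computable. By \cite{GMT88} this produces $O(n/m) = O(n/\log n)$ critical nodes and $O(n/m)$ bridges, each of size at most $m$ and each a single (subtree or context) pattern of $\T$. For every bridge $B$, viewed in its own ancestor representation, I invoke Proposition~\ref{prop:hierarchical-def-binary} to obtain $P_B$ of depth $O(\log m)$ and width at most $5$; the union over all $B$ forms $P_{\mathrm{inner}}$. In parallel I form a macro-tree $\T'$ of size $O(n/m)$ whose vertices are the critical nodes together with one per bridge, with parent pointers inherited from $\T$. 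Since each bridge is a single pattern and hence has at most one critical descendant, $\T'$ is again binary, and Proposition~\ref{prop:hierarchical-def-binary} applied to $\T'$ yields a hierarchical definition of depth $O(\log(n/m)) = O(\log n)$ and constant width; translating its patterns back to $\T$ (bridge vertices of $\T'$ become the corresponding bridge patterns, critical vertices become themselves) gives $P_{\mathrm{outer}}$. The union $P = P_{\mathrm{outer}} \cup \{B : B \text{ a bridge}\} \cup P_{\mathrm{inner}}$ is well-nested because outer patterns treat each bridge atomically and inner patterns stay inside a single bridge, and its depth and width are $O(\log n)$ and $O(1)$.

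The main obstacle is bounding the number of equivalence classes in $P$. Outer and bridge patterns together contribute only $O(n/m) = O(n/\log n)$ classes, so the real task is $P_{\mathrm{inner}}$. Because Proposition~\ref{prop:hierarchical-def-binary} is deterministic on the ancestor representation, any two isomorphic bridges $B \cong B'$ produce isomorphic hierarchical definitions $P_B \cong P_{B'}$, and patterns at matching positions inside them are equivalent with respect to $P$, since every subpattern in $P$ of an inner pattern of $B$ lies inside $B$. Hence the number of inequivalent inner patterns is at most (number of isomorphism types of $\ell$-labelled binary trees of size $\le m$) times (patterns per bridge), which by the choice of $m$ is $O(n/\log^2 n) \cdot O(m) = O(n/\log n)$. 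The remaining technical subtleties are (a) executing Proposition~\ref{prop:hierarchical-def-binary} inside every bridge simultaneously using a single FOM formula, which works because the algorithm only accesses the ancestor relation restricted to the bridge, and (b) transferring patterns of $\T'$ back to patterns of $\T$ in FOM, which is handled by the guarded-transduction machinery of Lemma~\ref{lem:guarded}.
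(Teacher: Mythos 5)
Your overall strategy is the paper's: decompose $\T$ into $m$-bridges \`a la \cite{GMT88} with $m=\Theta(\log n)$ chosen so that the number of isomorphism types of ($\ell$-labelled) bridges is sublinear in $n/\log n$, apply Proposition~\ref{prop:hierarchical-def-binary} to the contracted macro-tree, use a canonical decomposition inside each bridge, and charge equivalence classes of inner patterns to isomorphism types of bridges. The counting argument at the end is sound and matches the paper's in spirit.

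There is, however, a concrete gap in two places: Proposition~\ref{prop:hierarchical-def-binary} is stated only for \emph{full} binary trees (every node is a leaf or has both a left and a right child), because the prune-and-bypass operation needs every pruned leaf to have a sibling. Neither tree you feed it satisfies this. Your macro-tree $\T'$ contains a \emph{unary} vertex for every context-pattern bridge $(v,w)$ (its single child is the critical node $w$), and it may even be rooted at such a unary vertex when the root of $\T$ is not critical; so the claim ``$\T'$ is again binary'' is false under the paper's definition, and the contraction algorithm does not run on $\T'$ as you have built it. The same problem recurs for $P_{\mathrm{inner}}$: for a context bridge $B=(v,w)$ the induced tree $\T[B]$ has a unary node (the parent of the hole), so Proposition~\ref{prop:hierarchical-def-binary} cannot be invoked on it either. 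The paper avoids the first problem by \emph{not} making bridges vertices: it contracts each bridge to an edge, and augments the critical set $C$ with the root of $\T$ and, for each $v\in C$ with a critical-free left or right subtree, one auxiliary leaf from that subtree; one then checks via \cite[Lemma~9.2.1]{GMT88} that the resulting $\T_C$ is full binary. It avoids the second problem (and your subtlety (a) about running the contraction algorithm inside all bridges simultaneously) by replacing the inner contraction with a much simpler canonical decomposition $Q_p$, consisting of, for every node $v\in V[p]$, the maximal subpattern of $p$ rooted at $v$: this has width $3$, depth $O(m)=O(\log n)$ (linear in the bridge, which is good enough), is trivially FO-definable, and is canonical so the same isomorphism-type counting goes through. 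Your argument can be repaired along these lines, but as written the two applications of Proposition~\ref{prop:hierarchical-def-binary} fail on the trees you construct.
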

	
\begin{proof}
	\begin{figure}
		\centering
		\tikzstyle{crit} = [fill =  red!50, draw = red!50]
		\tikzstyle{b} = [fill = white, draw = black!50]
		\tikzstyle{hide} = [draw = black!30, densely dashed]
		\begin{tikzpicture}[nn, level distance = 12pt, sibling distance = 4pt]
			\Tree
			[. \node [b] (a) {};
				\edge [hide];
				[. \node [crit] (b) {};
					\edge [hide];
					[. \node (c) {};
						\edge [hide];
						[. \node [b] (d) {}; ]
						[. \node (e) {}; ]
					]
					\edge [hide];
					[. \node (f) {};
						\edge [hide];
						[. \node [crit] (g) {};
							\edge [hide];
							[. \node [crit] (h) {};
								\edge [hide];
								[. \node (i) {};
									\edge [hide];
									[. \node [b] (j) {}; ]
									[. \node (k) {}; ]
								]
								\edge [hide];
								[. \node (l) {};
									[. \node (m) {}; ]
									\edge [hide];
									[. \node [b] (n) {}; ]
								]
							]
							\edge [hide];
							[. \node (o) {};
								\edge [hide];
								[. \node [crit] (p) {};
									\edge [hide];
									[. \node (q) {};
										\edge [hide];
										[. \node [b] (r) {}; ]
										[. \node (s) {}; ]
									]
									\edge [hide];
									[. \node (t) {};
										[. \node (u) {}; ]
										\edge [hide];
										[. \node [b] (v) {}; ]
									]
								]
								[. \node (w) {};
									[. \node (x) {}; ]
									[. \node (y) {}; ]
								]
							]
						]
						[. \node (z) {}; ]
					]
				]
				\edge [hide];
				[. \node (aa) {};
					[. \node (bb) {}; ]
					\edge [hide];
					[. \node [b] (cc) {}; ]
				]
			]
		\end{tikzpicture}
		\tikzset{every tree node/.append style={fill=none, draw=none}, edge from parent/.style = {draw=none}}
		\tikzstyle{crit} = [fill = red, draw = red]
		\tikzstyle{b} = [fill = white, draw = black]
		\begin{tikzpicture}[nn, level distance = 12pt, sibling distance = 4pt]
			\Tree
			[. \node [b] (a) {};
				[. \node [crit] (b) {};
					[. \node (c) {};
						[. \node [b] (d) {}; ]
						[. \node (e) {}; ]
					]
					[. \node (f) {};
						[. \node [crit] (g) {};
							[. \node [crit] (h) {};
								[. \node (i) {};
									[. \node [b] (j) {}; ]
									[. \node (k) {}; ]
								]
								[. \node (l) {};
									[. \node (m) {}; ]
									[. \node [b] (n) {}; ]
								]
							]
							[. \node (o) {};
								[. \node [crit] (p) {};
									[. \node (q) {};
										[. \node [b] (rr) {}; ]
										[. \node (s) {}; ]
									]
									[. \node (t) {};
										[. \node (u) {}; ]
										[. \node [b] (v) {}; ]
									]
								]
								[. \node (w) {};
									[. \node (x) {}; ]
									[. \node (y) {}; ]
								]
							]
						]
						[. \node (z) {}; ]
					]
				]
				[. \node (aa) {};
					[. \node (bb) {}; ]
					[. \node [b] (cc) {}; ]
				]
			]
			\draw (a) edge [bend right = 10] (b);
			\draw (a) edge [bend left = 10] (cc);
			\draw (b) edge [bend right = 10] (d);
			\draw (b) edge [bend left = 10] (g);
			\draw (g) edge [bend right = 10] (h);
			\draw (g) edge [bend left = 10] (p);
			\draw (h) edge [bend right = 10] (j);
			\draw (h) edge [bend left = 10] (n);
			\draw (p) edge [bend right = 10] (rr);
			\draw (p) edge [bend left = 10] (v);
		\end{tikzpicture}
		\caption{Removing the $6$-critical nodes (in red) and certain auxiliary nodes (in white), yields a disjoint union of patterns (the set $B$ in the proof of Proposition~\ref{prop-n/log n}),
			which are depicted in black. The binary tree $\T_C$ is obtained by contracting all patterns in $B$.}
		\label{fig:bridges}	
	\end{figure}
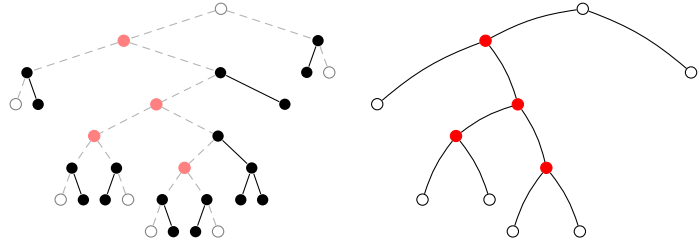
	
	Let $m = \Theta(\log n)$, which will be made explicit in the following. The number $m \leq n$ will be FOM-definable,
	which implies that the set of $m$-critical nodes will be also FOM-definable.
	The idea is to contract all $m$-bridges in $\T$ and apply Proposition~\ref{prop:hierarchical-def-binary}.
	However, the resulting tree is not necessarily a binary tree and may not be rooted in the root of $\T$,
	see Figure~\ref{fig:bridges} for an example.
	Define $C$ to be the set of all $m$-critical nodes together with the root of $\T$.
	Furthermore we add certain leaf nodes to $C$. If the left (resp., right) subtree below a node $v \in C$ contains
	no node in $C$, then we add an arbitrary leaf (e.g. the smallest one with respect to the built-in order on the domain) 
	from the left (resp., right) subtree to $C$.
	Since we add at most two nodes for each $m$-critical node, we still have $|C| = O(n/\log n)$.
	Notice that $V(\T) \setminus C$ is a disjoint union of sets $V[p]$
	for certain patterns $p$ in $\T$.
	Let $B$ be the set of all these patterns, which can be seen to be FOM-definable.
	If we contract all patterns in $B$ to edges we obtain the binary tree $\T_C$ over the node set $C$ of size $O(n/\log n)$.
	Since the set of $m$-critical nodes is FOM-definable, also the set $C$ and the binary tree $\T_C$ are
	FOM-definable.
			
	We can now apply Proposition~\ref{prop:hierarchical-def-binary}
	to obtain a hierarchical definition $\mathrm{CP}(\T_C)$ of depth $O(\log |\T_C|) = O(\log (n / \log n)) = O(\log n)$ and
	width 5.
	Since the size of $\mathrm{CP}(\T_C)$ is at most $O(n/\log n)$, the number of inequivalent patterns is also bounded
	by the same number.
			
	Let us count the number of non-isomorphic trees $\T[p]$ where $p \in B$.
	Let $\ell$ be the number of node labels in $\T$.
	Since every $m$-bridge has size at most $m$, this also holds for all $p \in B$. By inserting a distinguished leaf node to context patterns $p$,
	we can instead count the number of binary trees with at most $m+1$ nodes and $\ell+1$ labels. Using the formula for the Catalan
	number, one can upper-bound the number of such trees by 
	$\frac{4}{3} (4\ell+4)^{m+1} \leq (4\ell+4)^{m+2}$, see e.g. \cite[Lemma~1]{GHJLN17}.
	Hence by choosing $m = \lfloor 1/2 \cdot \log_{4\ell+4}(n) -2\rfloor \in \Theta(\log n)$ (which is indeed FOM-definable)
	($\ell$ is a constant), the number of non-isomorphic trees $\T[p]$ for $p \in B$ is bounded by $\sqrt{n} \in o(n / \log n)$.
			
	For every $p \in B$ we define a canonical well-nested set of patterns $Q_p$ which contains for each covered node $v \in V[p]$
	the maximal subpattern $q \le p$ which is rooted in $v$.
	Clearly, $Q_p$ is a hierarchical definition for $\T[p]$ whose size and depth is bounded by $O(\log n)$.
	Its width is at most 3 because every pattern in $Q_p$ has exactly one boundary node (its root) and at most two direct subpatterns.
	Furthermore, $Q_p$ is FO-definable from $p$ and canonical in the sense that the isomorphism type of the pattern tree of $Q_p$
	is determined by the isomorphism type of $\T[p]$. 
	Hence the number of inequivalent patterns in 
	$Q = \bigcup_{p \in B} Q_p$ is bounded by the number of patterns of size at most $m$, which by the above calculation
	is bounded by $o(n / \log n)$.
			
	Now we claim that $P = \mathrm{CP}(\T_C) \cup Q$ is a hierarchical definition for $\T$ with the desired properties.
	Clearly the largest pattern is contained in $P$, and both $\mathrm{CP}(\T_C)$ and $Q$ are well-nested.
	Furthermore, since each pattern $p \in B$ is hidden in some edge of $\T_C$, also $P$ is well-nested.
	The depth of $P$ is bounded by $O(\log n)$ and
	the number of inequivalent patterns is $O(n/\log n) + o(n / \log n) = O(n/\log n)$.
	To prove that the width of $P$ is bounded by some constant, we notice that the patterns $p \in B$
	are the maximal subpatterns of $\T$ hidden in some edge of $\T_C$.
	More precisely, if $p \in B$ is a direct subpattern of a pattern $q \in \mathrm{CP}(\T_C)$
	then $p$ must be hidden in an edge of $\T_C[q]$ which is not covered by any subpattern $q'$ of $q$.
	Since the branching tree of $q$ has size at most 5, there are at most 4 such possible edges.
	This proves that the width of $P$ is at most 9.
\end{proof}

\subsection{Non-binary trees}
	        
Now we extend Proposition~\ref{prop-n/log n} to arbitrary $k$-ordered trees, for any constant $k \geq 1$.
	
\begin{proposition}
	\label{prop:hierarchical-def-arb}
	For all constants $k, \ell \geq 1$,
	there is an FOM-computable function which maps a $k$-ordered tree $\T$ of size $n$ and with 
	$\ell$ node labels to a hierarchical definition $(\T,P)$
	of depth $O(\log n)$, constant width, and with $O(n/\log n)$ inequivalent patterns.
\end{proposition}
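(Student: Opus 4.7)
My plan is to FOM-reduce to the binary case of Proposition~\ref{prop-n/log n} via an encoding of $\T$ as a full binary tree. Since $k$ is a constant, I replace each node $v \in V(\T)$ of arity $r \geq 1$ by an $O(1)$-size ``right-comb'' binary gadget rooted at $v$: a right-spine $v = v^{(1)}, v^{(2)}, \ldots, v^{(r)}$, where each $v^{(i)}$ has left child equal to the recursively encoded $i$-th child of $v$ and right child either $v^{(i+1)}$ (if $i < r$) or a fresh dummy leaf (if $i = r$); if $r = 1$ the right child of $v^{(1)} = v$ is just a dummy leaf. Leaves of $\T$ remain leaves. The auxiliary spine nodes $v^{(i)}$ for $i \geq 2$ and the dummy leaves receive fresh labels, yielding a label alphabet of size $\ell + O(k) = O(1)$. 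The resulting full binary tree $\T'$ has $O(n)$ nodes and is FOM-computable from the ancestor representation of $\T$ by identifying each node of $\T'$ with a pair $(v, i)$ for $v \in V(\T)$ and $i \in \{1, \ldots, k+1\}$.

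Next I apply Proposition~\ref{prop-n/log n} to $\T'$, obtaining an FOM-computable hierarchical definition $(\T', P')$ of depth $O(\log n)$, constant width, and with $O(n/\log n)$ inequivalence classes. I then pull $P'$ back to a hierarchical definition $P$ of $\T$. For each pattern $p' \in P'$, the intersection $V[p'] \cap V(\T)$ of its coverage with the original nodes canonically decomposes into at most $O(k) = O(1)$ patterns of $\T$, because each gadget has constant size: the decomposition consists of one context piece rooted at the nearest original ancestor of the root of $p'$, together with a constant number of complete child subtrees of that ancestor. I let $T(p')$ denote this canonical set of patterns and set $P = \{v_{\mathrm{root}}\} \cup \bigcup_{p' \in P'} T(p')$, where $v_{\mathrm{root}}$ is the root of $\T$ viewed as a subtree pattern. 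Well-nestedness of $P$ follows from well-nestedness of $P'$ together with the locality of the decomposition; the depth, width, and inequivalence count of $P$ each grow by at most a constant factor, giving $O(\log n)$, $O(1)$, and $O(n/\log n)$ respectively. FOM-computability is preserved throughout since the pullback is FO-definable from $(\T', P')$.

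The main obstacle is making the pullback precise: the decomposition of $V[p'] \cap V(\T)$ has to be defined in an FOM-computable and equivalence-preserving way, and well-nestedness of the resulting $P$ has to be checked in detail. The key technical observation is that each gadget occupies constant depth and size, so any pattern of $P'$ interacts with only $O(1)$ gadget boundary nodes, which permits a uniform local decomposition into patterns of $\T$ that respects both the ancestor ordering and the equivalence structure inherited from $P'$.
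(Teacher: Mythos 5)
Your high-level route is the same as the paper's: encode $\T$ as a full binary tree $\T'$ by replacing each node with a constant-size spine gadget, apply Proposition~\ref{prop-n/log n} to $\T'$, and pull the resulting hierarchical definition back to $\T$. The gap is in the pullback, and it is not merely a matter of "making it precise": the decomposition you describe is wrong in shape. Consider a context pattern $p'=(v',w')$ of $P'$ whose hole $w'$ lies strictly inside the gadget of an original node $z$ of arity $r'$, say $w'=z^{(j)}$ with $2\le j\le r'$. Then $V[p']\cap V(\T)$ contains $z$ and the complete subtrees of its first $j-1$ children but excises the subtrees of its children $j,\dots,r'$. This region has several holes and in general cannot be written as a disjoint union of $O(1)$ subtree/context patterns of $\T$ at all: e.g.\ if $z$ has children $c_1,c_2,c_3$ and the subtrees of $c_2,c_3$ are removed, the remaining piece $\{z\}\cup V[c_1]$ contains no pattern covering $z$. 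Symmetrically, if $v'=u^{(i)}$ with $i\ge 2$ then $u\notin V[p']$ and the top of the intersection consists of up to $k$ sibling subtrees with no context rooted at "the nearest original ancestor of the root of $p'$", contradicting your claimed canonical form. So $T(p')$ as you define it does not exist.

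The paper's fix is to round each pattern \emph{before} pulling back: replace $V[p']$ by $Z(p')$, the union of all gadget-zones entirely contained in $V[p']$ (discarding the at most two partially covered zones at $v'$ and $w'$), and decompose $Z(p')$ into its at most $1+2(k-1)$ maximal subpatterns of $\T'$ -- up to $k-1$ complete subtrees hanging off $u$ at the top, up to $k-1$ hanging off $z$ at the bottom, and one genuine single-hole context in between. Because these pieces are unions of zones, their $\varphi$-preimages are honest patterns of $\T$. Note also that the remaining claims you assert in one line -- well-nestedness of the union, the constant width bound, and the fact that equivalent patterns of $P'$ yield equivalent pieces -- each require a real argument in the paper (the width bound in particular needs a careful choice of a minimal $p'$ with $p\in S(p')$ and a maximal $q'$ with $q\in S(q')$ to show $q'\lessdot p'$), so even after repairing the decomposition a substantial part of the proof is still missing.
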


\begin{proof}		
	\begin{figure}
				
		\centering
				
		\begin{tikzpicture}[inner sep = 0]
			\coordinate (0);
			\coordinate [below left = 30pt and 60pt of 0] (1);
			\coordinate [right = 10pt of 1] (2) {};
			\coordinate [right = 50pt of 2] (3) {};
			\coordinate [right = 50pt of 3] (4) {};
			\coordinate [right = 10pt of 4] (5);
			\node [draw, fill = black, circle, minimum size = 3pt, above right = 8pt and 25pt of 3] (6) {};
					
			\coordinate [below left = 15pt and 5pt of 2] (7);
			\coordinate [right = 10pt of 7] (8);
					
			\coordinate [below left = 15pt and 30pt of 3] (9) ;
			\coordinate [right = 15pt of 9] (10);
			\coordinate [right = 3pt of 10] (10a);
			\coordinate [right = 30pt of 10a] (11);
			\coordinate [right = 3pt of 11] (11a);
			\coordinate [right = 5pt of 11a] (12);
					
			\coordinate [below left = 15pt and 20pt of 10] (29);
			\coordinate [right = 30pt of 29] (30);
			\coordinate [right = 3pt of 30] (30a);
			\coordinate [right = 10pt of 30] (31);
					
			\coordinate [below left = 15pt and 25pt of 30] (32);
			\coordinate [right = 15pt of 32] (33);
			\coordinate [right = 30pt of 33] (34);
					
			\coordinate [below left = 15pt and 5pt of 4] (13);
			\coordinate [right = 10pt of 13] (14);
					
			\coordinate [below left = 25pt and 30pt of 33] (15);
			\coordinate [right = 10pt of 15] (16);
			\coordinate [right = 15pt of 16] (17);
			\coordinate [right = 40pt of 17] (18);
			\node [draw, fill = black, circle, minimum size = 3pt, above right = 5pt and 5pt of 17] (19) {};
					
			\coordinate [below left = 15pt and 5pt of 16] (20);
			\coordinate [right = 10pt of 20] (21);
			\coordinate [right = 5pt of 21] (22);
			\coordinate [right = 10pt of 22] (23);
					
			\coordinate [below left = 15pt and 5pt of 11] (24);
			\coordinate [right = 13pt of 24] (25);
					
			\coordinate [right = 15pt of 17] (26);
			\coordinate [below left = 15pt and 5pt of 26] (27);
			\coordinate [right = 10pt of 27] (28);
					
			\coordinate [right = 15pt of 26] (35);
			\coordinate [below left = 15pt and 5pt of 35] (36);
			\coordinate [right = 10pt of 36] (37);
					
			\foreach \x/\y in {0/1,1/5,5/0,
				6/3,6/4,
				2/7,2/8,3/9,9/10,10a/11,11a/12,3/12,4/13,4/14,
				33/15,15/16,16/17,17/18,33/18, 16/20, 16/21, 17/22, 17/23, 11/24, 11a/25,
				19/26, 26/27, 26/28, 19/17} {
				\draw (\x.center) edge (\y.center);
			}
					
			\tikzstyle{pat} = [fill = col1!15]
					
			\begin{scope}[on background layer]
				\path [pat] (6.center) -- (3) -- (4) -- cycle;
				\path [pat] (33) -- (15) -- (17) -- (19.center)-- (26) -- (18) -- (33) -- cycle;
				\path [pat] (3) -- (9) -- (12) -- cycle;
				\path [pat] (10) -- (10a) -- (31) -- (29) -- cycle;
				\path [pat] (30) -- (30a) -- (34) -- (32) -- cycle;
				\path [pat] (13) -- (4) -- (14);
				\path [pat] (24) -- (11) -- (11a) -- (25);
				\path [pat] (20) -- (16) -- (21);
				\path [pat] (35) -- (36) -- (37);
			\end{scope}
					
			\foreach \x/\y in {3/9,3/12,9/10, 10a/11, 11a/12,13/4,4/14,24/11,11a/25,20/16,16/21, 10/29, 29/30,
				30a/31, 31/10a, 30/32, 32/33, 33/34, 30a/34, 35/36, 35/37} {
				\draw [line width = 1pt, line cap = round, col1] (\x) -- (\y);
			}
		\end{tikzpicture}
				
		\caption{The shaded pattern contains four maximal subpatterns (framed in blue) which are unions of zones.}
		\label{fig:zones}
				
	\end{figure}
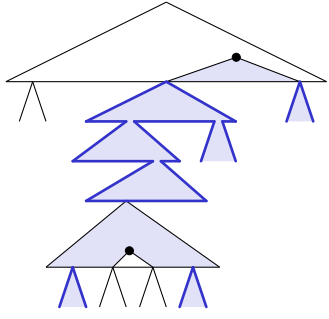
	The idea is that one can embed $\T$ into an FOM-definable binary tree $\T'$
	and transform a hierarchical definition for $\T'$ into one for $\T$.
	More precisely, an {\em embedding} of $\T$ into $\T'$ is an injective function $\varphi: V(\T) \to V(\T')$
	such that $\varphi$ maps the root of $\T$ to the root of $\T'$,
	$u \preceq_T v$ if and only if $\varphi(u) \preceq_{T'} \varphi(v)$
	and $\varphi$ preserves the depth-first order of nodes in $\T$, see \cite{ElberfeldJT12}.
	Nodes which are not in the image of $\varphi$ are labelled by a fresh symbol to distinguish them from nodes in the image of $\varphi$.
	Each node $v \in V(\T)$ defines the {\em zone}
	\[
		V[\varphi(u)] \setminus \bigcup_{w \text{ child of } v} V[\varphi(w)].
	\]
	The set of all zones form a partition of $V(\T')$.
	We require that the size of each zone is bounded in a function of the maximum out-degree $k$,
	which can be done by embedding a node with $r$ children into a chain of at least $r-1$ binary nodes.
			
	From Proposition~\ref{prop:hierarchical-def-binary} we obtain a hierarchical definition $P' = \mathrm{CP}(\T')$ for $\T'$
	of depth $O(\log n)$ and width at most 5, and by Proposition~\ref{prop-n/log n} it has only $O(n / \log n)$ inequivalent patterns.
	Notice that the patterns in $P'$ can intersect arbitrarily with the zones,
	as illustrated in Figure~\ref{fig:zones}.
	We adapt $P'$ in such a way that every pattern is a union of zones.
	Since $\varphi$ respects the ancestor relation and the depth-first order, this directly yields a hierarchical definition of $\T$
	(by taking the preimages under $\varphi$).
			
	For a pattern $p' \in P'$ let $Z(p') \subseteq V[p']$ be the union of all zones which are contained in $V[p']$.
	Notice that $V[p'] \setminus Z(p')$ has constant size (it is contained in at most two zones)
	and that $Z(p')$ can be written (uniquely) as a disjoint union of a constant number of maximal subpatterns of $p'$
	(the constants only depend on the maximum out-degree of $\T$ and can be set to $1 + 2 (k-1)$).
	We denote the set of these subpatterns by $S(p')$, which are framed in blue in Figure~\ref{fig:zones}.
	Define the set $P = \bigcup_{p' \in P'} S(p')$, which is clearly FO-definable from $P'$.
	We claim that (1) $P$ is a hierarchical definition for $\T'$ of depth $O(\log n)$,
	(2) its width is bounded by some constant, and (3) the number of inequivalent patterns in $P$ is $O(n/\log n)$.
			
	Clearly the largest pattern is contained in $P$ and we need to verify that $P$ is well-nested.
	Observe that for all $p',q' \in P'$ we have:
	\begin{itemize}
		\item if $p' \le q'$ then $Z(p') \subseteq Z(q')$, and
		\item if $V[p'] \cap V[q'] = \emptyset$ 
		      then $Z(p') \cap Z(q') = \emptyset$.
	\end{itemize}
	Consider $p \in S(p')$ and $q \in S(q')$.
	If $p'$ and $q'$ are disjoint, then also $p$ and $q$ are disjoint.
	If $p' \le q'$ then $p$ is a subpattern of some pattern $r \in S(q')$.
	If $r = q$ then $p \le q$, otherwise $q$ is disjoint from $r$ and therefore also from $p$.
	This concludes the proof that $P$ is well-nested.
	Furthermore these observations imply that, if $p \in S(p')$ and $q \in S(q')$ such that $p < q$,
	then $p' < q'$.
	Hence the depth of $P$ is bounded by the depth of $P'$.
		
	Next we show that the width of $P$ is bounded by some constant.
	Consider a pattern $p \in P$ and let $p' \in P'$ be the minimal pattern such that $p \in S(p')$.
	Let $p_1', \dots, p_m' \in P'$ be the direct subpatterns of $p'$. Since the width of $P'$ 
	is constant, $m$ is bounded by a constant. Moreover, also the size of the boundary 
	$\partial p'$ is bounded by a constant. 
	Since $\partial p' = V[p'] \setminus \bigcup_{i=1}^m V[p_i']$ and $Z(p') \setminus \bigcup_{i=1}^m Z(p_i')$
	only differ by a constant number of nodes, it follows that the size of 
	$Z(p') \setminus \bigcup_{i=1}^m Z(p_i')$ is bounded by a constant.
	Note that $V[p] \subseteq Z(p')$. Moreover, by the minimality of $p'$, every pattern in 
	$\bigcup_{i=1}^m S(p_i')$ is either disjoint or properly contained in $V[p]$. 
	It follows that 
	the boundary $\partial p$ is contained in $V[p] \setminus \bigcup_{i=1}^m Z(p_i') \subseteq Z(p') \setminus \bigcup_{i=1}^m Z(p_i')$.
	Hence, there is a constant that bounds the size of every boundary $\partial p$ for $p \in P$.
			
	To show that $P$ has bounded width, it 
	remains to show that the number of direct subpatterns of a pattern $p \in P$ is bounded by 
	a constant.
	Consider such a direct subpattern $q \in P$, i.e. $q \lessdot p$.
	Choose $q' \in P'$ maximal such that $q \in S(q')$ and choose $p' \in P'$ minimal such that $p \in S(p')$.
	We already know that $q' < p'$ and we claim that in fact $q' \lessdot p'$ holds.
	Towards a contradiction let $r' \in P'$ with $q' < r' < p'$.
	By the choice of $q'$ and $p'$ we have $p,q \notin S(r')$.
	This means that $q$ is a proper subpattern of some pattern $r \in S(r')$.
	Since both $r$ and $p$ share the subpattern $q$, the patterns $r$ and $p$ are comparable.
	Furthermore, since $r$ is a subpattern of some pattern in $S(p')$,
	we must have $r \le p$.
	We conclude that $q \le r \le p$ and $q \neq r \neq p$, which contradicts $q \lessdot p$.
	This proves that the number of direct subpatterns of $p$ is bounded by $|S(p')|$ (a constant)
	times the number of direct subpatterns of $p'$. The latter is bounded by the width of $P'$, which is a constant.
			
	It remains to show that $P$ has $O(n/\log n)$ inequivalent patterns.
	First, if $p',q' \in P'$ are equivalent then there is an isomorphism $\psi : (\T'[p'], P'[p']) \to (\T'[q'], P'[q'])$.
	This isomorphism induces a bijection between $S(p')$ and $S(q')$
	which maps each pattern $p \in S(p')$ to the isomorphic pattern $\psi(p) \in S(q')$. 
	Since each set $S(p')$ has constant size, it suffices to show that $\psi$ yields an isomorphism
	from $(\T'[p], P[p])$ to $(\T'[\psi(p)], P[\psi(p)])$ for all $p \in S(p')$. 
	Recall that if  $r < p$ for a pattern $r \in S(r')$,  then $r' < p'$ and thus $r' \in P'[p']$, and similarly
	for $\psi(p)$.  It follows that the isomorphism types of $(\T'[p], P[p])$ and $(\T'[\psi(p)], P[\psi(p)])$
	are completely determined by the isomorphism types of $(\T'[p'], P'[p'])$ and $(\T'[q'], P'[q'])$, which are equal.
\end{proof}

\subsection{Normal form}
	
In a final step, we bring the computed hierarchical definition into a normal form.
A hierarchical definition $(\T,P)$ is in {\em normal form} if for each pattern $p \in P$ one of the following
two cases holds:
	
\begin{enumerate}
	\item If $u$ is the root of $p$, then $\partial p = \{u\}$ and every direct subpattern of $p$
	      is a subtree pattern (which must be rooted in a child of $u$). \label{type:1}
	\item $p$ has exactly two direct subpatterns $p_1$ and $p_2$, and 
	      		
	      $V[p]$ is the disjoint union of $V[p_1]$ and $V[p_2]$. \label{type:2}
	      		
\end{enumerate}
Figure~\ref{fig:normal-form} illustrates the four types that a pattern can be decomposed
where the first two patterns have type \ref{type:1} and the latter two have type \ref{type:2}.
\begin{figure}
	\begin{tikzpicture}[inner sep = 0, every node/.style={circle, minimum size = 3pt, fill = black}]
		\node (0) {};
		\node [below left = 15pt and 35pt of 0, fill = col1] (1) {};
		\node [right = 20pt of 1, fill = col1] (2) {};
		\node [right = 20pt of 2, fill = col1] (3) {};
		\node [right = 20pt of 3, fill = col1] (4) {};
				
		\coordinate [below left = 20pt and 8pt of 1] (1a);
		\coordinate [below right = 20pt and 8pt of 1] (1b);
		\coordinate [below left = 20pt and 8pt of 2] (2a);
		\coordinate [below right = 20pt and 8pt of 2] (2b);
		\coordinate [below left = 20pt and 8pt of 3] (3a);
		\coordinate [below right = 20pt and 8pt of 3] (3b);
		\coordinate [below left = 20pt and 8pt of 4] (4a);
		\coordinate [below right = 20pt and 8pt of 4] (4b);
				
		\tikzstyle{pat} = [fill = col1!15]
				
		\begin{scope}[on background layer]
			\path [pat] (1.center) -- (1a) -- (1b) -- cycle;
			\path [pat] (2.center) -- (2a) -- (2b) -- cycle;
			\path [pat] (3.center) -- (3a) -- (3b) -- cycle;
			\path [pat] (4.center) -- (4a) -- (4b) -- cycle;
		\end{scope}
				
		\foreach \x/\y in {0/1,0/2,0/3,0/4} {
			\draw [line width = 1pt, line cap = round] (\x) -- (\y);
		}
		\foreach \x/\y in {1/1a,1/1b,2/2a,2/2b,3/3a,3/3b,4/4a,4/4b} {
			\draw [line width = 1pt, line cap = round, col1] (\x) -- (\y);
		}
	\end{tikzpicture}
	\hfill
	\begin{tikzpicture}[inner sep = 0, every node/.style={circle, minimum size = 3pt, fill = black}]
		\node (0) {};
		\node [below left = 15pt and 35pt of 0, fill = col1] (1) {};
		\node [right = 20pt of 1, fill = col1] (2) {};
		\node [right = 20pt of 2, fill = black!30] (3) {};
		\node [right = 20pt of 3, fill = col1] (4) {};
				
		\coordinate [below left = 20pt and 8pt of 1] (1a);
		\coordinate [below right = 20pt and 8pt of 1] (1b);
		\coordinate [below left = 20pt and 8pt of 2] (2a);
		\coordinate [below right = 20pt and 8pt of 2] (2b);
		\coordinate [below left = 20pt and 8pt of 3] (3a);
		\coordinate [below right = 20pt and 8pt of 3] (3b);
		\coordinate [below left = 20pt and 8pt of 4] (4a);
		\coordinate [below right = 20pt and 8pt of 4] (4b);
				
		\tikzstyle{pat} = [fill = col1!15]
				
		\begin{scope}[on background layer]
			\path [pat] (1.center) -- (1a) -- (1b) -- cycle;
			\path [pat] (2.center) -- (2a) -- (2b) -- cycle;
			\path [pat] (4.center) -- (4a) -- (4b) -- cycle;
		\end{scope}
				
		\foreach \x/\y in {0/1,0/2,0/4} {
			\draw [line width = 1pt, line cap = round] (\x) -- (\y);
		}
		\draw [line width = 1pt, draw=black!30, line cap = round] (0) -- (3);
		\foreach \x/\y in {1/1a,1/1b,2/2a,2/2b,4/4a,4/4b} {
			\draw [line width = 1pt, line cap = round, col1] (\x) -- (\y);
		}
		\foreach \x/\y in {3/3a,3/3b} {
			\draw [line width = 1pt, draw=black!30, line cap = round] (\x) -- (\y);
		}
	\end{tikzpicture}
	\hfill
	\begin{tikzpicture}[inner sep = 0, every node/.style={circle, minimum size = 3pt, fill = col1}]
		\node (1) {};
		\coordinate [below left = 30pt and 15pt of 1] (1a);
		\coordinate [below right = 30pt and 15pt of 1] (1b);
		\node [below = 20pt of 1] (2) {};
		\coordinate [below left = 25pt and 15pt of 2] (2a);
		\coordinate [below right = 25pt and 15pt of 2] (2b);
				
		\tikzstyle{pat} = [fill = col1!15]
				
		\begin{scope}[on background layer]
			\path [pat] (1.center) -- (1a) -- (1b) -- cycle;
			\path [pat] (2.center) -- (2a) -- (2b) -- cycle;
		\end{scope}
		
		\foreach \x/\y in {1/1a,1/1b,2/2a,2/2b} {
			\draw [line width = 1pt, line cap = round, col1] (\x) -- (\y);
		}
				
	\end{tikzpicture}
	\hfill
	\begin{tikzpicture}[inner sep = 0, every node/.style={circle, minimum size = 3pt, fill = col1}]
		\node (1) {};
		\coordinate [below left = 30pt and 15pt of 1] (1a);
		\coordinate [below right = 30pt and 15pt of 1] (1b);
		\node [below = 20pt of 1] (2) {};
		\coordinate [below left = 25pt and 15pt of 2] (2a);
		\coordinate [below right = 25pt and 15pt of 2] (2b);
		\node [below = 15pt of 2] (3) {};
		\coordinate [right = 12pt of 2a] (3a);
		\coordinate [left = 12pt of 2b] (3b);

		\tikzstyle{pat} = [fill = col1!15]
				
		\begin{scope}[on background layer]
			\path [pat] (1.center) -- (1a) -- (1b) -- cycle;
			\path [pat] (2.center) -- (2a) -- (3a) -- (3.center) -- cycle;
			\path [pat] (2.center) -- (2b) -- (3b) -- (3.center) -- cycle;
		\end{scope}
				
		\foreach \x/\y in {1/1a,1/1b,2/2a,2/2b,3/3a,3/3b} {
			\draw [line width = 1pt, line cap = round, col1] (\x) -- (\y);
		}
				
	\end{tikzpicture}
	\caption{Four types in which a pattern can be decomposed in a normal form hierarchical definition.}
	\label{fig:normal-form}
			
\end{figure}
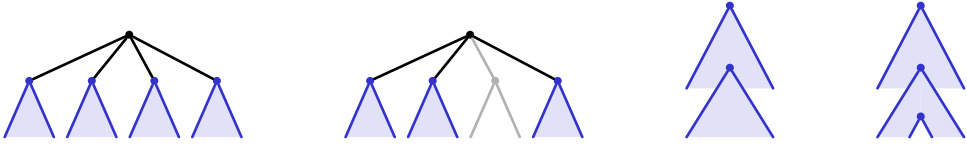
	
\begin{theorem}
	\label{thm:hierarchical-def}
	There is an FOM-computable function which maps a $k$-ordered tree $\T$ of size $n$ to a hierarchical definition $(\T,P)$
	in normal form
	where $P$ has depth $O(\log n)$ and $O(n/\log n)$ inequivalent patterns.
\end{theorem}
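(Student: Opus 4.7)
The plan is to take the constant-width hierarchical definition provided by Proposition~\ref{prop:hierarchical-def-arb} and refine it locally, so that every pattern in the resulting definition satisfies one of the two normal-form conditions.

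First, I would apply Proposition~\ref{prop:hierarchical-def-arb} to obtain $(\T, P_0)$ of depth $O(\log n)$, constant width $w$, and with $O(n/\log n)$ inequivalent patterns. For each pattern $p \in P_0$, the branching tree $B_p$ has at most $w$ nodes, consisting of the boundary $\partial p$ together with one ``star'' marker for each direct subpattern of $p$ in $P_0$. The heart of the argument is to associate to $p$ a canonical refinement $R(p)$: a set of constantly many intermediate patterns nested strictly between $p$ and its direct $P_0$-subpatterns, whose combinatorial shape depends only on the isomorphism type of $B_p$, and such that every pattern in $R(p) \cup \{p\}$ satisfies type~\ref{type:1} or type~\ref{type:2} in the refined hierarchy $P = P_0 \cup \bigcup_{p \in P_0} R(p)$.

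The refinement is constructed by a fixed finite case analysis on the isomorphism types of branching trees of size at most $w$. Two elementary operations suffice: a type~\ref{type:1} step isolates the root of a subtree pattern together with its subtree children (or, in the context case, isolates a depth-one root with the hole as a child), while a type~\ref{type:2} step cuts a pattern vertically at an internal node into a context above and a pattern below. Because $|B_p| \leq w$, a constant number of such cuts yields a local hierarchy of depth $c = c(w)$ in which every intermediate pattern is in normal form. Well-nestedness of $P$ is inherited from $P_0$ because each $R(p)$ lives strictly between $p$ and its $P_0$-children, and the depth bound becomes $c \cdot \mathrm{depth}(P_0) = O(\log n)$.

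For the inequivalent-pattern count: since $R(p)$ depends only on the constant-size isomorphism type of $B_p$ and on the boundary-node labels drawn from the fixed signature, equivalent patterns in $P_0$ induce pattern-tree-isomorphic refinements via corresponding intermediate patterns. Hence the refinements contribute at most $c$ times as many inequivalent patterns as $P_0$ itself, yielding $O(n/\log n)$ in total. FOM-computability of $P$ follows from that of $(\T, P_0)$ together with the fact that each $R(p)$ is FO-definable from $P_0$ using the at most $w$ boundary and direct-subpattern identifiers of $p$. The chief obstacle is verifying the case analysis: one has to check that for every isomorphism type of branching tree of size at most $w$ a normal-form decomposition actually exists. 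Some care is needed here because a type~\ref{type:1} pattern that is a context must be of depth one (the hole is a child of the root), so any longer vertical stretch in $B_p$ must be broken up by type~\ref{type:2} cuts before type~\ref{type:1} peeling can apply.
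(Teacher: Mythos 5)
Your proposal takes essentially the same route as the paper: start from the constant-width hierarchical definition of Proposition~\ref{prop:hierarchical-def-arb} and refine each pattern locally within its constant-size branching tree, so that the depth and the number of inequivalent patterns grow by at most a constant factor and everything stays FO-definable. The case analysis you flag as the chief obstacle is discharged in the paper by a uniform two-step construction --- first add, for every pattern $p$, the maximal subpattern of $p$ rooted at each boundary node and at the root of each direct subpattern (flattening every branching tree to height at most one), then split each context pattern $(v,w)$ having a direct subpattern $(v',w)$ with $v'$ a child of $v$ by inserting $(v,v')$ --- which realizes precisely the type-\ref{type:1} and type-\ref{type:2} steps you describe.
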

	
\begin{proof}
	Let $P$ be the hierarchical definition from Proposition~\ref{prop:hierarchical-def-arb}. Note that $P$ has constant width.
	For each pattern $p \in P$ we introduce new subpatterns corresponding to the subtrees of its branching tree:
	For each node $w \in \partial p$ on the boundary,
	we add the maximal subpattern of $p$ rooted in $w$.
	Furthermore, for each direct subpattern $q \lessdot p$ we add
	the maximal subpattern of $p$ rooted in the root of $q$.
	This ensures that the branching tree of each pattern has height at most 1.
	The depth of $P$ increases at most by a factor of its width.
	Now consider a context pattern $(v,w) \in P$ which has a direct subpattern $(v',w) \in P$ such that $v'$ is a child of $v$.
	To establish normal form it suffices to introduce the pattern $(v,v')$.
	In this step the depth of $P$ increases at most by a factor of two.
			
	Both steps are FO-computable.
	Finally, notice that for any two equivalent patterns we introduce equivalent new subpatterns,
	therefore, the number of inequivalent patterns increases by a constant factor.
\end{proof}

\section{Balancing over free term algebras and arbitrary algebras} \label{sec-small-TSLP}

In this section, we transform the hierarchical decomposition constructed in the previous 
section into a so called tree straight-line programs, or {\em TSLPs} for short.
TSLPs are used as a compressed representation of trees, see \cite{Lohrey15dlt} for a survey. 
Formally, a {\em TSLP} $\G = (N,\Sigma,S,P)$ consists of two disjoint ranked alphabets $N$ and $\Sigma$,
where symbols in $N$ are called {\em nonterminals} and have rank at most one,
a start nonterminal $S \in N$ of rank zero,
and a set of productions $P$.
For each nonterminal $A \in N$ there exists exactly one production $(A \to t) \in P$, 
where $t \in T(\Sigma \cup N)$ if $A$ has rank zero and $t \in C(\Sigma \cup N)$ if $A$ has rank one.
Furthermore the relation $\{(A,B) \in N \times N \mid B \text{ occurs in } t \text{ where } (A \to t) \in P \}$
must be acyclic. These properties ensure that the start nonterminal $S$ derives exactly one term $t \in T(\Sigma)$ by applying
the productions in any order, starting with $S$, as long as possible, see \cite{Lohrey15dlt} for more details (we will give an alternative circuit based
definition below).
A TSLP is in {\em normal form} if  every production has one of the following forms:
\begin{itemize}
	\item $A \to f(A_1, \dots, A_r)$
	\item $A(x) \to f(A_1, \dots, A_{i-1}, x, A_{i+1}, \dots, A_r)$
	\item $A \to B(C)$
	\item $A(x) \to B(C(x))$
\end{itemize}	
We will work here with an alternative definition of TSLPs as circuits over an extended term algebra.
\begin{definition}

	Let $\Sigma$ be a ranked alphabet. The two-sorted algebra $\A(\Sigma)$ consists of the two sorts $T(\Sigma)$ (all terms) and
	$C(\Sigma)$ (all contexts) and the following operations:
	\begin{itemize}
		\item for all $f \in \Sigma$ of rank $r \geq 0$ the operation $f: T(\Sigma)^r \to T(\Sigma)$ that maps  
		      $(t_1, \dots, t_r)$ to  $f(t_1, \dots, t_r)$,
		\item for all $f \in \Sigma$ of rank $r \geq 1$ and $1 \le i \le r$ the operation $\hat f_i: T(\Sigma)^{r-1} \to C(\Sigma)$
		      with $\hat f_i(t_1, \dots, t_{r-1}) = f(t_1, \ldots, t_{i-1}, x, t_{i}, \ldots, t_{r-1})$,
		\item the substitution operation
		      $\mathrm{sub}: C(\Sigma) \times T(\Sigma) \to T(\Sigma)$ with
		      $\mathrm{sub}(s,t) = s(t)$,
		\item the composition operation $\circ: C(\Sigma) \times C(\Sigma) \to C(\Sigma)$ with
		      $\circ(s,t) = s(t)$.
	\end{itemize}

\end{definition}
	
\begin{theorem}[universal balancing theorem]
	\label{thm:main-tslp}
	Given a term $t$ over $\Sigma$ of size $n$,
	one can compute in $\TC^0$ a normal form TSLP for $t$ of size $O(n/\log n)$ and depth $O(\log n)$.
	The TSLP is given as a circuit over $\A(\Sigma)$ in EC-representation.
\end{theorem}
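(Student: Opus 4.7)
The plan is to feed $t$, viewed as a ranked tree $\T$, into Theorem~\ref{thm:hierarchical-def} to obtain a normal form hierarchical definition $(\T,P)$ of depth $O(\log n)$ with $O(n/\log n)$ inequivalent patterns, and to read off a circuit over $\A(\Sigma)$ whose gates are the equivalence classes of patterns of $P$.

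The normal form dictates the label of each gate $A_p$. If $p$ is of type~\ref{type:1} with root $u$ labelled $f$ of rank $r$, then $A_p$ carries the operation $f$ when $p$ is a subtree pattern, and $\hat f_i$ when $p$ is a context pattern whose hole hangs below the $i$-th child of $u$; in either case the successors of $A_p$ are the gates of the direct subpatterns rooted at the remaining children of $u$. If $p$ is of type~\ref{type:2} with direct subpatterns $p_1,p_2$, then $A_p$ carries $\mathrm{sub}$ when $p$ is a subtree (so $p_1$ is a context and $p_2$ a subtree) and $\circ$ when $p$ is a context (so both $p_1,p_2$ are contexts), with successors $A_{p_1},A_{p_2}$. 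Equivalent patterns collapse to a single gate by keeping the lexicographically smallest pattern of each class as representative, leaving $O(n/\log n)$ gates; the output gate is that of the largest pattern. Each gate has at most $k$ successors (where $k$ is the maximal rank of $\Sigma$), so the circuit has size $O(n/\log n)$ and depth bounded by the depth of the pattern tree, i.e.\ $O(\log n)$; a straightforward induction over the pattern tree shows that it evaluates to $t$.

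The remaining work is to carry out the above in FOM and to output the circuit in EC-representation. Equivalence of patterns reduces to isomorphism of subtrees of the pattern tree and is FOM-definable using the depth-first ordering of $V(\T)$, exactly as in the proof of Lemma~\ref{lem:min-dag}; from this, the set of canonical representatives, their labels, and the direct successor relation are all FOM-definable from $(\T,P)$. To produce the EC-tuples I view the whole construction as a graph transduction from $\T$: each gate is encoded by a tuple in $V(\T)^2$ (subtree patterns being padded by a dummy component) and each successor edge is witnessed by a constant-size connector on a suitable enrichment of $\T$. Lemma~\ref{lem:guarded} then supplies the EC-representation, and any padding copy gates are eliminated afterwards by Lemma~\ref{lem:copy-gates}.

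The step I expect to be most delicate is establishing guardedness for the type~\ref{type:2} productions: the tuple encoding a direct subpattern $p_1$ of $p$ may mention the hole endpoint of $p_1$, which is an arbitrary descendant of the root of $p$ rather than an immediate child, so a single $\T$-edge does not witness the move. I handle this by enriching the base graph with a constant number of FOM-definable auxiliary edge relations that record, for every $p\in P$, the roots of its direct subpatterns. These relations remain $k'$-ordered for some constant $k'$ because normal form patterns have bounded width, and with this enrichment the transduction becomes guarded, so Lemma~\ref{lem:guarded} applies and yields the TSLP in EC-representation within $\TC^0$.
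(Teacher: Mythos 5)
Your construction agrees with the paper's up to the point where the circuit over $\A(\Sigma)$ has to be delivered in EC-representation: like the paper, you apply Theorem~\ref{thm:hierarchical-def}, turn patterns into gates whose labels are dictated by the normal form, identify equivalent patterns to get $O(n/\log n)$ gates, and bound the depth by the depth of the pattern tree. The divergence---and the gap---is in the last step. The paper does \emph{not} invoke Lemma~\ref{lem:guarded} here. It observes that the pattern tree is itself a tree over the signature of $\A(\Sigma)$ whose ancestor relation is FO-definable from $(\T,P)$ (namely $p$ is an ancestor of $q$ iff $V[q]\subseteq V[p]$, decidable from the pattern endpoints via $\preceq_\T$), and then applies Lemma~\ref{lem:min-dag}, which takes a tree in \emph{ancestor} representation and outputs its minimal dag in EC-representation. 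That single step performs both the collapsing of equivalent patterns and the computation of the EC-tuples, with no guardedness condition to verify.

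Your route through Lemma~\ref{lem:guarded} does not go through, and the repair you sketch does not work. First, the auxiliary relations ``recording, for every $p\in P$, the roots of its direct subpatterns'' cannot be realized as a constant number of $k'$-ordered binary edge relations on $V(\T)$: in the normal form a single node $v$ is in general the root of unboundedly many nested patterns (e.g.\ a chain of type~\ref{type:2} context patterns $(v,w)\gtrdot (v,u)\gtrdot (v,u')\gtrdot\cdots$ all rooted at $v$), each with a different split node, so the out-degree of $v$ under any such relation is unbounded; the data you need is indexed by the \emph{pair} $(v,w)$, which the connector mechanism cannot express. Second, even granting such an enrichment, Lemma~\ref{lem:guarded} requires the EC-representation of the enriched base graph as input, and Lemma~\ref{lem:anc-to-ec} only produces EC-representations of trees, so you would need a new lemma to obtain it. More fundamentally, a single edge of the pattern dag can join patterns whose defining nodes are at distance $\Theta(n)$ in $\T$, so no scheme built from constant-size connectors over $\T$ can witness it. The fix is to replace your last paragraph by: write down the pattern tree in ancestor representation (FO-definable from $(\T,P)$ as above) and apply Lemma~\ref{lem:min-dag}; this also subsumes your explicit handling of equivalence classes and the subsequent elimination of padding copy gates.
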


\begin{proof}
	In $\TC^0$ we convert $t$ into a $\Sigma$-labelled tree $\T$ in ancestor representation (see Theorem~\ref{lem:term-anc}) and apply Theorem~\ref{thm:hierarchical-def}
	to obtain a hierarchical definition $(\T, P)$, which has depth $O(\log n)$ and $O(n/\log n)$ many inequivalent patterns.
	We can translate $(\T,P)$ directly into a tree $\T'$ in ancestor representation over the two-sorted algebra $\A(\Sigma)$ that evaluates to $t$:
	Patterns of rank zero (resp., one) are nodes of sort $T(\Sigma)$ (resp., $C(\Sigma)$), and the children of a pattern are its 
	direct subpatterns.
	The pattern type (whether the pattern is a subtree pattern or a context pattern) determines the operator of the corresponding node.
	The ancestor relation is FO-definable since pattern $p$ is an ancestor of pattern $q$
	if and only if $\T[q] \subseteq \T[p]$.
	From the tree $\T'$ we can compute in $\TC^0$
	by Lemma~\ref{lem:min-dag} the EC-representation of the minimal dag $\C$, which is a circuit over the structure $\A(\Sigma)$.
	Since the number of inequivalent patterns of the hierarchical definition is $O(n/\log n)$ and its depth is $O(\log n)$,
	the circuit $\C$ has size $O(n/\log n)$ and depth $O(\log n)$.
\end{proof}

\begin{definition}
	
	For an algebra $\A$ over $\Sigma$, the two-sorted algebra $\mathcal{F}(\A)$ extends $\A$
	by a second sort $\A[x]$ containing all linear term functions $p: A \to A$.
	The operations of $\mathcal{F}(\A)$ are the following:
	\begin{itemize}
		\item for all $f \in \Sigma$ of rank $r \ge 0$ the operation $f^{\A} : A^r \to A$,
		\item for all $f \in \Sigma$ of rank $r \geq 1$ and and $1 \le i \le r$ the operation $\hat f_i : A^{r-1} \to \A[x]$
		      that maps $(a_1, \dots, a_{r-1}) \in A^{r-1}$ to the linear term function $f^{\A}(a_1, \ldots, a_{i-1}, x, a_{i}, \ldots, a_{r-1})$,
		\item the substitution operation
		      $\mathrm{sub}: \A[x] \times  A \to A$ with
		      $\mathrm{sub}(p,a) = p(b)$,
		\item the composition operation $\circ: \A[x] \times \A[x] \to \A[x]$ that maps $(p,q)$ to the composition
		      of the mappings $p$ and $q$.
	\end{itemize}
\end{definition}

From Theorem~\ref{thm:main-tslp} we immediately get:

\begin{theorem}
	\label{thm:main-algebra}
	Given a term $t$ over $\A$ of size $n$,
	one can compute in $\TC^0$ an equivalent circuit over $\A[x]$ in EC-representation of size $O(n/\log n)$ and depth $O(\log n)$.
\end{theorem}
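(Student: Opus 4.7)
The plan is to obtain this theorem as an essentially immediate corollary of Theorem~\ref{thm:main-tslp} by reinterpreting the syntactic operations of $\A(\Sigma)$ as the semantic operations of $\mathcal{F}(\A)$. First I would invoke Theorem~\ref{thm:main-tslp} on the input term $t$ to obtain, in $\TC^0$, a circuit $\C$ over $\A(\Sigma)$ in EC-representation of size $O(n/\log n)$ and depth $O(\log n)$ whose output gate has sort $T(\Sigma)$ and whose value is $t$ itself.

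The second step is a purely local relabeling: every gate of $\C$ labelled by a symbol from $\A(\Sigma)$ is relabelled by the correspondingly-named operation of $\mathcal{F}(\A)$. Concretely, a gate labelled $f \in \Sigma$ (of rank $r$) becomes a gate labelled by the algebra operation $f^{\A}$; a gate labelled $\hat f_i$ in $\A(\Sigma)$ becomes a gate computing $\hat f_i : A^{r-1} \to \A[x]$ in $\mathcal{F}(\A)$; a $\mathrm{sub}$-gate is reinterpreted as $\A[x] \times A \to A$; and a $\circ$-gate as $\A[x] \times \A[x] \to \A[x]$. Simultaneously the sort annotations $T(\Sigma)$ and $C(\Sigma)$ are reinterpreted as $A$ and $\A[x]$, respectively. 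This relabeling touches only the unary label predicates of each gate and leaves every edge relation and every EC-tuple untouched, so it is trivially FO-computable from the EC-representation of $\C$; in particular the depth bound $O(\log n)$ and the size bound $O(n/\log n)$ are preserved.

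For correctness, a straightforward induction on the depth of a gate $v$ of $\C$ shows that if the original TSLP-gate $v$ evaluates in $\A(\Sigma)$ to a term $s \in T(\Sigma)$ (resp., a context $s \in C(\Sigma)$), then the relabelled gate evaluates in $\mathcal{F}(\A)$ to $s^{\A} \in A$ (resp., to the linear term function $s^{\A} \in \A[x]$). The induction step uses exactly the compatibility of the two interpretations of the four families of operation symbols: the fact that $(f(t_1,\ldots,t_r))^{\A} = f^{\A}(t_1^{\A},\ldots,t_r^{\A})$, the analogous identity for $\hat f_i$, and the fact that substitution and composition on contexts correspond by definition to application and composition of linear term functions on~$\A$. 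Applied to the output gate, whose value in $\A(\Sigma)$ is $t$ itself, this yields that the relabelled circuit evaluates in $\mathcal{F}(\A)$ to $t^{\A}$, as required.

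There is really no substantive obstacle here: the balancing work has already been done in Theorem~\ref{thm:main-tslp}, the size/depth bounds are inherited verbatim, and the $\TC^0$ bound follows because the relabeling is a trivial FO-transduction composed with the $\TC^0$-computation of Theorem~\ref{thm:main-tslp}. The only point deserving a little care is to spell out the sort-level correspondence between $\A(\Sigma)$ and $\mathcal{F}(\A)$ so that the induction above is unambiguous.
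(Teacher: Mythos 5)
Your proposal is correct and matches the paper's approach: the paper derives Theorem~\ref{thm:main-algebra} directly from Theorem~\ref{thm:main-tslp} with the single phrase ``From Theorem~\ref{thm:main-tslp} we immediately get,'' and the relabeling-plus-induction argument you spell out is exactly the routine content behind that ``immediately.''
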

	
By applying Lemma~\ref{lem:circuit-unfold} we obtain:

\begin{theorem}
	\label{thm:main-algebra-term}
	Given a term over $\A$ of size $n$,
	one can compute in $\TC^0$ an equivalent term over $\A[x]$ of depth $O(\log n)$.
\end{theorem}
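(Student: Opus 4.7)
The plan is to compose the two pieces already assembled in this section. First, apply Theorem~\ref{thm:main-algebra} to the input term $t$ of size $n$ to produce, in $\TC^0$, an equivalent circuit $\C$ over $\A[x]$ in EC-representation with size $O(n/\log n)$ and depth $O(\log n)$. Then unfold $\C$ into an equivalent term by invoking Lemma~\ref{lem:circuit-unfold}.

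To apply Lemma~\ref{lem:circuit-unfold} I need the depth of $\C$ to be at most $c' \cdot \log |\C|$ for some constant $c'$. This is immediate from the output guarantees of Theorem~\ref{thm:main-algebra}: the depth of $\C$ is at most $c \log n$ and its size is $\Theta(n / \log n)$, so $\log |\C| = \log n - \Theta(\log \log n)$, giving $c \log n \le c' \log |\C|$ for $n$ large enough and a suitable constant $c'$ (small cases can be absorbed by a finite table, or the depth bound simply doubled). Lemma~\ref{lem:circuit-unfold} then yields, in $\TC^0$, the term $\mathrm{unfold}(\C, v_{\mathrm{out}})$ in ancestor representation. Since the nodes of the unfolding are paths in $\C$ of length at most the depth of $\C$, this term has depth $O(\log n)$. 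It evaluates to the same element as $\C$ (each unfolded path evaluates its underlying gate), hence to the value of the original term $t$ over $\A[x]$.

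Finally, the composition of two FOM-computable (equivalently, $\TC^0$-computable) transformations is again $\TC^0$-computable, so the overall procedure runs in $\TC^0$. The only nontrivial point is the compatibility between the depth bound delivered by Theorem~\ref{thm:main-algebra} and the hypothesis of Lemma~\ref{lem:circuit-unfold}, which, as noted, is straightforward. There is no real obstacle beyond bookkeeping; the theorem is essentially a corollary obtained by unfolding the log-depth circuit from the previous theorem.
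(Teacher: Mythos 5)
Your proposal is correct and follows exactly the paper's route: the paper states Theorem~\ref{thm:main-algebra-term} as an immediate consequence of Theorem~\ref{thm:main-algebra} followed by Lemma~\ref{lem:circuit-unfold}. Your extra remark reconciling the depth bound $O(\log n)$ with the circuit size $O(n/\log n)$ required by the hypothesis of Lemma~\ref{lem:circuit-unfold} is a valid piece of bookkeeping that the paper leaves implicit.
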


\section{Applications}
	
\subsection{Alternative proof of a result by Krebs, Limaye and Ludwig}
	
Recently, Krebs, Limaye and Ludwig presented a similar result to ours \cite{KrebsLL17}.
We will state their result and reprove it using our balancing theorem.
For an $S$-sorted algebra $\A$ we define the algebra $(\B,\A)$ which extends $\A$ by the Boolean sort $\B=\{0,1\}$.
All operations from $\A$ are inherited to $(\B,\A)$, with the addition of the Boolean disjunction $\vee$, conjunction $\wedge$
and negation $\neg$, and for each sort $s \in S$ a multiplexer function
\[
	\mathrm{mp}_s : \{0,1\} \times A_s \times A_s \to A_s
\]
where $\mathrm{mp}_s(b,d_0,d_1) = d_b$.
A circuit family $(\C_n)_{n \ge 0}$ of circuits over $(\B,\A)$ where $\C_n$ has Boolean input gates $x_1, \dots, x_n$ computes a function $f: \{0,1\}^* \to A$. The class $\A$-$\NC^1$ denotes the class of functions computed by a DLOGTIME-uniform circuit family over $(\B,\A)$
of constant fan-in, polynomial size and logarithmic depth.
	
\begin{theorem} \label{thm-kll}
	For every algebra $\A$ there exists a DLOGTIME-uniform $\mathcal{F}(\A)$-$\NC^1$ circuit family
	which computes the value of a given expression over $\A$.
\end{theorem}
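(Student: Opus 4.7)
The plan is the two-stage simulation sketched in the introduction. In Stage~1, I apply Theorem~\ref{thm:main-algebra-term} to transform the given expression $t \in T(\A)$ of size $n$ into an equivalent term $t'$ over $\A[x]$, a subalgebra of $\mathcal{F}(\A)$, of depth $O(\log n)$ and therefore of polynomial size $n^{O(1)}$. Since this transformation lives in DLOGTIME-uniform $\TC^0 \subseteq \NC^1$, it can be implemented by the purely Boolean part of a DLOGTIME-uniform $\mathcal{F}(\A)$-$\NC^1$ circuit family, producing a bit-encoding of $t'$ that is then consumed by Stage~2.

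Stage~2 is a fixed universal evaluator for terms over $\mathcal{F}(\A)$ of bounded depth $d = O(\log n)$. Its skeleton is the complete $k$-ary tree $U$ of depth $d$, where $k$ is the maximum rank of any operator of $\mathcal{F}(\A)$; note $|U| = k^d = n^{O(1)}$. At each position $p$ of $U$, a small constant-depth gadget takes the Boolean bits describing which operator (if any) the term $t'$ places at $p$, evaluates \emph{every} candidate $\mathcal{F}(\A)$-operator applied to the values already computed at the children of $p$, and then uses the multiplexer gates $\mathrm{mp}_s$ to select the single value corresponding to the actual operator (with a dedicated ``live'' flag used to pass the relevant child's value through untouched when $p$ is not an internal node of $t'$). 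Since the set of operator symbols of $\mathcal{F}(\A)$ is constant and each candidate requires constant fan-in, each gadget has $O(1)$ depth, so Stage~2 has total depth $O(d) = O(\log n)$ and size $n^{O(1)}$. Concatenating the two stages gives the desired DLOGTIME-uniform $\mathcal{F}(\A)$-$\NC^1$ circuit family.

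The main obstacle is the depth bookkeeping: a naive universal evaluator that routes values from arbitrary positions in the encoding via $\log$-depth multiplexer trees at every level would incur an $O(\log^2 n)$ blow-up. I sidestep this by organizing Stage~1 so that its output is already \emph{positionally} laid out: for each position $p$ of the universal skeleton $U$, the output reserves a fixed $O(1)$-bit block describing the operator at $p$, which the gadget at $p$ reads through hard-wired connections rather than through a runtime multiplexer lookup. Producing this positional layout from the term $t'$ given by Theorem~\ref{thm:main-algebra-term} is routinely $\TC^0$-computable, because both the position identifiers (address strings of length $O(\log n)$) and the corresponding operator of $t'$ are FOM-definable from the ancestor representation. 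Uniformity of the composed family then follows from the DLOGTIME-uniformity of Stage~1 together with the fact that Stage~2 is a fixed circuit whose wiring depends only on $n$.
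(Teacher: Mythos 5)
Your proposal is correct and follows essentially the same route as the paper: balance the expression via Theorem~\ref{thm:main-algebra-term} in $\TC^0 \subseteq \NC^1$, then evaluate it with a universal evaluator whose gates are indexed by the address strings $\rho \in \{1,\dots,k\}^*$ of length at most $d = O(\log n)$ (i.e.\ the positions of the complete $k$-ary tree of depth $d$), using multiplexer gates driven by the $\TC^0$-computed node labels to select the correct operator at each position. Your extra remark about precomputing a positional layout of the labels to avoid an $O(\log^2 n)$ depth blow-up is a sensible elaboration of a point the paper leaves implicit, but it does not change the argument.
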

	
\begin{proof}
	For a given input expression one can compute in $\TC^0 \subseteq \NC^1$ an equivalent logarithmic depth
	expression $t$ over $\mathcal{F}(\A)$ by Theorem~\ref{thm:main-algebra-term}.
	It suffices to construct a DLOGTIME-uniform circuit family which evaluates $t$.
	Let $n = |t|$ and assume that the depth of $t$ is at most $d = O(\log n)$.
	Furthermore, let $k$ be the maximal arity of an operation in $\mathcal{F}(\A)$ (this is  a constant).
	We can test in $\TC^0$ whether a given string $\rho \in \{1, \dots, k\}^*$ of length at most $d$
	is a valid address string of a path from the root of $t$
	to some node and, if so, we can compute the node label in $\TC^0$.
	Consider the circuit with gates of the form $v_\rho$, where $\rho \in \{1, \dots, k\}^*$
	is a string of length at most $d$, and $v_\rho$
	computes the value of the addressed node,
	or computes some arbitrary value if $\rho$ is not a valid address string.
	With the help of multiplexer gates and the node label information we can clearly compute $v_\rho$ from the gates $v_{\rho \cdot i}$.
	Clearly, the described circuit has depth $O(\log n)$ and the constructed circuit family can be seen to be DLOGTIME-uniform.
\end{proof}
As shown in \cite{KrebsLL17}, many known results on the complexity of expression evaluation problems can be derived
from Theorem~\ref{thm-kll}. The following list is not exhaustive:
\begin{itemize}
	\item Buss' theorem \cite{Bus87}: The expression evaluation problem for $(\{0,1\}, \wedge, \vee, \neg, 0,1)$ belongs to 
	      DLOGTIME-uniform $\NC^1$. 
	\item More generally, for every fixed finite algebra $\A$ the expression evaluation problem belongs to
	      DLOGTIME-uniform $\NC^1$ \cite{Loh01rta}.
	\item Expression evaluation for the semirings $(\N, +, \cdot, 0,1)$ (resp., $(\mathbb{Z}, +, \cdot, 0,1)$)  belongs to $\#\NC^1$
	      (resp., $\mathrm{GapNC}^1$) \cite{BCGR92}.
\end{itemize}    
 
\subsection{Regular expressions and semirings}

\label{sec-regular}
	
It has been shown in \cite[Theorem~6]{GruberH08} that from a given regular expression of size $n$ one can obtain 
an equivalent regular expression of star height $O(\log n)$. Here, we strengthen this result in several directions: (i) the 
resulting regular expression (viewed as a tree) has depth $O(\log n)$ (and not only star height $O(\log n)$), (ii) it can be 
represented by a circuit with only $O(n / \log n)$ nodes, and the construction can be carried out in $\TC^0$ (or, alternatively
in linear time if we use \cite{GHJLN17}).
	
For a finite alphabet $\Sigma$, let $\mathrm{Reg}(\Sigma)$ be the set of regular languages over $\Sigma$.
It forms an algebra with the constants $a \in \Sigma$, $\emptyset$ and $\{ \varepsilon \}$,
the unary operator $^*$ and the binary operations union $+$ and concatenation $\cdot$. It is also known as the free
Kleene algebra. 
	
\begin{theorem} \label{theorem-reg-exp}
	Given a regular expression $t$ over $\Sigma$, one can compute in $\TC^0$ an equivalent circuit over $\mathrm{Reg}(\Sigma)$ in EC-representation
	of size $O(n/\log n)$ and depth $O(\log n)$.
\end{theorem}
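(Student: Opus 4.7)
The plan is to apply Theorem~\ref{thm:main-tslp} to the tree representation of $t$, obtaining in $\TC^0$ a normal-form TSLP $\C$ for $t$ of size $O(n/\log n)$ and depth $O(\log n)$, presented as an EC-represented circuit over the two-sorted algebra $\A(\Sigma)$; here $\Sigma$ is the signature of regular expressions. The remaining task is to convert $\C$ into an equivalent circuit $\C'$ over $\mathrm{Reg}(\Sigma)$, which may only use $+,\cdot,{}^*$ and constants, preserving the size and depth up to constant factors. Since the conversion will be a local transformation of the EC-represented TSLP that fits the framework of a guarded graph transduction, it will be $\TC^0$-computable by Lemma~\ref{lem:guarded} once the local replacement is specified.

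The conversion eliminates the auxiliary $\A(\Sigma)$-operations $\hat f_i, \mathrm{sub}, \circ$ by giving each rank-1 (context) nonterminal of $\C$ a constant-size ``coefficient representation'' over $\mathrm{Reg}(\Sigma)$. By structural induction on the root-to-hole spine, every context $C(x)$ over $\mathrm{Reg}(\Sigma)$ is either (i) \emph{affine}, $C(L) = aLb+c$, or (ii) \emph{composite}, $C(L) = a\cdot D(L)^*\cdot b + c$ for some affine triple $(a,b,c)$ and some sub-context $D$. In $\C'$ each rank-0 nonterminal becomes a single gate, and each rank-1 nonterminal becomes a constant-size bundle of rank-0 gates for its coefficients (plus, in the composite case, a pointer to $D$'s bundle). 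Basic rank-1 productions $A(x) \to \hat f_i(\cdots)$ produce the bundle directly in $O(1)$ gates; rank-1 compositions $A(x) \to B(C(x))$ merge the bundles by case analysis on affine/composite status; and rank-0 substitutions $A \to B(C)$ evaluate $B$'s bundle at $C$'s value in $O(1)$ gates.

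The hard part will be the case in which both $B$ and $C$ in a composition are composite: the inner sub-context of the resulting $A$ is then $D_B \circ C$, which need not correspond to any existing nonterminal, and naively introducing $D_B \circ C$ as a derived nonterminal and recursing can cascade out of control, since the ``compositeness depth'' of a spine can in principle grow with the TSLP depth. The remedy I would pursue is to exploit the hierarchical structure underlying $\C$ (Proposition~\ref{prop:hierarchical-def-arb} combined with the normal form of Figure~\ref{fig:normal-form}): the constant width of each pattern constrains the shape of each composition so that, after a careful amortized analysis, the total number of derived sub-contexts added to $\C'$ is still $O(n/\log n)$. Combined with the depth bound $O(\log n)$ inherited from $\C$, this yields the claimed circuit.
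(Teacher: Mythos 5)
You have set up the right framework (coefficient representations for context nonterminals, a guarded transduction, Lemma~\ref{lem:guarded}), and you have correctly located the crux: composing two contexts that both contain the hole under a star. But your resolution of that crux is a gap. Your normal form (ii), $C(L)=a\cdot D(L)^*\cdot b+c$ with $D$ an \emph{arbitrary} sub-context, does not give a constant-size representation: each composition of two composite contexts spawns a derived sub-context $D_B\circ C$, which may itself be composite, spawning another, and so on along the star-nesting of the spine. Besides the size accounting you defer to an ``amortized analysis,'' this cascading threatens the \emph{depth} bound: resolving one composition gate becomes a chain whose length is the star-nesting depth of the spine, which can be $\Theta(\log n)$, and nothing in the constant width of the hierarchical definition prevents this (width bounds the branching of the pattern tree, not the star-nesting along a spine). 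You have not shown the construction terminates with $O(n/\log n)$ gates and depth $O(\log n)$, and I do not believe the proposed remedy can be made to work as stated.

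The missing idea is algebraic, not combinatorial. In the free Kleene algebra one has the identity
\[
(\alpha\beta^*\gamma+\delta)^* \;=\; \delta^*\alpha(\beta+\gamma\delta^*\alpha)^*\gamma\delta^* + \delta^*,
\]
which shows that the class of linear term functions of the special form $x\mapsto \alpha(axb+c)^*\gamma+\delta$ --- where the body under the star is \emph{affine}, not an arbitrary sub-context --- is closed under star, and hence (together with closure under union and one-sided concatenation) under composition. Concretely, if $t_i(x)=\alpha_i(a_ixb_i+c_i)^*\gamma_i+\delta_i$ then $t_2(t_1(x))$ can be rewritten, using the identity above, into the same six-coefficient form with $a=a_1$, $b=b_1$ and new $c,\alpha,\gamma,\delta$ built from the twelve old coefficients with $O(1)$ extra gates. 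So every context nonterminal needs only a bundle of at most six rank-$0$ gates, every composition is resolved locally by a constant-size gadget, no derived sub-contexts are ever introduced, and the whole conversion is a single guarded transduction followed by copy-gate elimination (Lemma~\ref{lem:copy-gates}). Your two-case classification should therefore be: (i) hole not under any star, giving $axb+c$; (ii) hole under some star, giving $\alpha(axb+c)^*\gamma+\delta$ --- and the content of the proof is precisely that case (ii) is closed under the operations of the algebra.
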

		
\begin{proof}
	Let $\mathcal{R} = \mathrm{Reg}(\Sigma)$.
	We claim the following for any 
	context $t \in C(\mathcal{R})$.
	If the parameter $x$ is not below any $*$-operator,
	then the linear term function $t^{\mathcal{R}}$ has the form
	\begin{equation} 
		\label{eqn:kleene-3}
		t^{\mathcal{R}}(x) = axb+c \text{ for some } a,b,c \in {\mathcal{R}},
	\end{equation}
	otherwise it has the form
	\begin{equation}
		\label{eqn:kleene-4}
		t^{\mathcal{R}}(x) = \alpha(axb+c)^* \gamma + \delta \text{ for some } a,b,c, \alpha, \gamma, \delta \in {\mathcal{R}}.
	\end{equation}
	The linear term functions  \eqref{eqn:kleene-3} and \eqref{eqn:kleene-4} are closed under union and left/right concatenation with 
	constants from $\mathcal{R}$. If a term  function is of the form  \eqref{eqn:kleene-3} then its star is of the form 
	\eqref{eqn:kleene-4}.
	The only non-trivial part is to prove that the set of term functions of type \eqref{eqn:kleene-4} are closed under $*$ (this shows then also
	closure under composition).
	Let $\beta = axb+c$ and $p(x) = \alpha \beta^* \gamma + \delta$.
	We claim that
	\[
		p(x)^* = (\alpha \beta^* \gamma + \delta)^* \stackrel{\text{$(\dagger)$}}{=} \delta^* \alpha (\beta + \gamma \delta^* \alpha)^* \gamma \delta^*  + \delta^* 
		= \delta^* \alpha (axb + c + \gamma \delta^* \alpha)^* \gamma \delta^* +  \delta^* .
	\]
	Note that this expression is  indeed  of the form \eqref{eqn:kleene-4}.
	To verify the identity $(\dagger)$,
	one should consider $\alpha, \beta, \gamma$ and $\delta$ as letters. The inclusion
	$\delta^* \alpha (\beta + \gamma \delta^* \alpha)^* \gamma \delta^*  + \delta^*  \subseteq (\alpha \beta^* \gamma + \delta)^*$ 
	is obvious. For the other inclusion, one considers a word $w \in (\alpha \beta^* \gamma + \delta)^*$.  We show that
	$w \in \delta^* \alpha (\beta + \gamma \delta^* \alpha)^* \gamma \delta^*  + \delta^*$. The case that $w \in \delta^*$ is clear.
	Otherwise, $w$ contains at least one occurrence of $\alpha$ and $\gamma$, and we can factorize $w$ uniquely as
	$w = w_0 \alpha w_1 \gamma w_2$, where $w_0, w_2 \in \delta^*$. Moreover, we must have $w_1 \in (\beta + \gamma \delta^* \alpha)^*$,
	which shows that $w \in  \delta^* \alpha (\beta + \gamma \delta^* \alpha)^* \gamma \delta^*$.
	
	Note that a term function of type \eqref{eqn:kleene-3} (resp., \eqref{eqn:kleene-4}) can be represented by the three (resp., six) elements
	$a,b,c \in \mathcal{R}$ (resp., $a,b,c, \alpha,\gamma, \delta \in \mathcal{R}$).
					
	By Theorem~\ref{thm:main-algebra} we compute from the input regular expression $t$ in $\TC^0$ an equivalent circuit $\C$ over
	$\mathcal{R}[x]$ in EC-representation.
	We partition $V(\C)$ into three sets: the set $V_0$ of nodes which evaluate to elements in $\mathcal{R}$, 
	the set $V_1$ of nodes that evaluate 
	to a linear term function of type~\eqref{eqn:kleene-3},
	and the set of nodes that evaluate 
	to a linear term function of type~\eqref{eqn:kleene-4}.
	The distinction between nodes that evaluate to elements of $\mathcal{R}$ and nodes that evaluate to elements
	of $\mathcal{R}[x]$ is directly displayed by the node label.
	Furthermore, if a node $u \in V$ has a descendant $v \in V$ labelled by $\hat *$
	such that all nodes on the path from $u$ to $v$ are labelled by $\circ$, except from $v$ itself,
	then $u$ belongs to $V_2$, otherwise to $V_1$. This allows to define $V_0$, $V_1$, and $V_2$ 
	in FO using the EC-representation of the circuit.
			
	Using a guarded transduction we keep every node in $V_0$, replace every node in $V_1$ by 3 nodes (which compute
	the three regular languages $a,b,c$  in \eqref{eqn:kleene-3})
	and replace every node in $V_2$ by 6 nodes (which compute
	the six regular languages $a,b,c,\alpha,\gamma,\delta$ in \eqref{eqn:kleene-3}). 
	Moreover, we can define the wires accordingly.
			
	Let us consider one specific case (the most difficult one) for the definition of the wires.
	Assume that $A = A_1 \circ A_2$ where $A_1, A_2, A \in V_2$,
	$A_i$ was replaced by the six nodes $a_i,b_i,c_i, \alpha_i,\gamma_i, \delta_i$
	and $A$ was replaced by the six nodes $a,b,c, \alpha,\gamma, \delta$.
	Then $A_i$ computes the term function $t_i(x) =  \alpha(a_i x b_i+c_i)^* \gamma_i + \delta_i$ and
	$A$ has to compute the composition 
	\begin{eqnarray*}
		t(x) = t_2(t_1(x)) & = &  \alpha_2 (a_2    [ \alpha_1 (a_1 x b_1+c_1)^* \gamma_1 + \delta_1 ] b_2+c_2)^* \gamma_2 + \delta_2 \\
		& = &  \alpha_2 \, ( \, \underbrace{a_2 \alpha_1}_{\alpha'} \, ( \, \underbrace{a_1 x b_1+c_1}_{\beta'} \, )^* \, \underbrace{\gamma_1 b_2}_{\gamma} \, + \, 
		\underbrace{a_2 \delta_1 b_2 +c_2}_{\delta'} \, )^*\,  \gamma_2 + \delta_2 
	\end{eqnarray*}
	Using the above identity $(\dagger)$, the expression in the last line becomes equivalent to
	\begin{eqnarray*}
		& &  \alpha_2  \, (  {\delta'}^* \alpha' (\beta' + \gamma' {\delta'}^* \alpha')^* \gamma' {\delta'}^*  + {\delta'}^* ) \, \gamma_2 \, + \, \delta_2  \\
		& = &  \alpha_2 {\delta'}^* \alpha' (\beta' + \gamma' {\delta'}^* \alpha')^* \gamma' {\delta'}^* \gamma_2  + \alpha_2 {\delta'}^* \gamma_2 + \delta_2 \\
		& = &  \underbrace{\alpha_2 {\delta'}^* \alpha'}_{\alpha} (a_1 x b_1+ \underbrace{c_1 + \gamma' {\delta'}^* \alpha'}_{c})^* \underbrace{\gamma' {\delta'}^* \gamma_2}_{\gamma}  + 
		\underbrace{\alpha_2 {\delta'}^* \gamma_2 + \delta_2}_{\delta} .
	\end{eqnarray*}
	Hence, we can define $a = a_1$, $b = b_1$ (these are copy gates) and $c, \alpha, \gamma$, and $\delta$ as shown above.
	For the latter, we have to introduce a constant number of additional gates to built up the above terms for $c, \alpha, \gamma$, and $\delta$.
	For instance, we have 
	$$\alpha = \alpha_2 {\delta'}^* \alpha' = \alpha_2 (a_2 \delta_1 b_2 +c_2)^* \alpha_2 \alpha_1,
	$$
	and we need seven more gates to built up this expression from $\alpha_1, \alpha_1, \delta_1, a_2,b_2, c_2$.
	These seven gates are also produced by the guarded transduction.
	From Lemma~\ref{lem:copy-gates} we obtain the desired circuit over $\mathcal{R}$.
\end{proof}
	
	
A {\em semiring} $\mathcal{R} = (R,+,\cdot)$ is a structure with two associative binary operations $+$ and $\cdot$,
such that $a \cdot (b+c) = a \cdot b + a \cdot c$ and $(a+b) \cdot c = a \cdot c + b \cdot c$ for all $a,b,c \in R$.
Notice we do not require a semiring to have a zero- or a one-element.
Using the same strategy as in the proof of Theorem~\ref{theorem-reg-exp} one can show the following result:
\begin{theorem} \label{theorem-semiring}
	Let $\mathcal{R}$ be a semiring. 
	Given an expression $t$ over $\mathcal{R}$, one can compute in $\TC^0$ an equivalent circuit over $\mathcal{R}$ in EC-representation
	of size $O(n/\log n)$ and depth $O(\log n)$.
\end{theorem}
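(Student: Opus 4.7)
The plan is to mimic the proof of Theorem~\ref{theorem-reg-exp} but in a considerably simpler setting, since the absence of the Kleene star means that every linear term function has a uniform affine shape. As in that proof, I would first apply Theorem~\ref{thm:main-algebra} to compute in $\TC^0$ an equivalent circuit $\C$ over $\mathcal{R}[x]$ in EC-representation, of size $O(n/\log n)$ and depth $O(\log n)$. It then remains to convert $\C$ into a circuit over the pure semiring $\mathcal{R}$, without blowing up its size or depth beyond constant factors.

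The key structural observation is that for every context $t \in C(\mathcal{R})$ the induced linear term function has the form $t^{\mathcal{R}}(x) = a x b + c$ for some $a,b,c \in R$, so that such a function is represented by the triple $(a,b,c)$. I would then verify by a short case analysis that this family of functions is closed under the operations of $\mathcal{F}(\mathcal{R})$: the constructors $\widehat{+}_i$ and $\widehat{\cdot}_i$ give triples of the form $(1,1,c)$ or $(a,b,0)$, the substitution $\mathrm{sub}(t,u)$ with $t^{\mathcal{R}}(x) = axb+c$ and $u^{\mathcal{R}} = d$ evaluates to $adb + c$, and the composition of $t_1^{\mathcal{R}}(x) = a_1 x b_1 + c_1$ with $t_2^{\mathcal{R}}(x) = a_2 x b_2 + c_2$ produces the triple
\[
(a_1 a_2,\ b_2 b_1,\ a_1 c_2 b_1 + c_1),
\]
which requires only a constant number of $+$- and $\cdot$-gates to compute from the six ingredient values.

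Next I would partition $V(\C)$ in FO into the set $V_0$ of gates of sort $\mathcal{R}$ and the set $V_1$ of gates of sort $\mathcal{R}[x]$ (this is directly readable from the label). A guarded graph transduction keeps every gate in $V_0$ and replaces every gate $A \in V_1$ by three new gates $a_A, b_A, c_A$ representing its triple, together with a constant number of auxiliary semiring gates implementing the case distinctions above; all new wires connect to the corresponding components of the successor gates of $A$ and are therefore describable by connectors of constant width. Applying Lemma~\ref{lem:guarded} gives an FOM-computable map to the EC-representation of a circuit over $\mathcal{R}$ (possibly with copy gates), and Lemma~\ref{lem:copy-gates} then removes the copy gates without increasing the depth. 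The size and depth bounds are preserved up to the constant blow-up factor inherent in the guarded transduction.

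The main obstacle is the technical nuisance that $\mathcal{R}$ need not have a $0$ or a $1$, so the triple $(a,b,c)$ literally makes sense only when these constants exist. I would handle this by refining the representation to a finite type of normal forms (identity, $ax$, $xb$, $axb$, $ax+c$, $xb+c$, and $axb+c$), tagging each $V_1$-gate with its type in FO and verifying closure of this enlarged set of types under substitution, composition, and the constructors by a finite (still constant-sized) case analysis. Since only constantly many subtypes and constantly many auxiliary gates per original gate are introduced, the guarded transduction remains well-defined and the resulting $\mathcal{R}$-circuit still has size $O(n/\log n)$ and depth $O(\log n)$, as required.
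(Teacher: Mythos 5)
Your proposal is correct and follows essentially the same route as the paper: apply Theorem~\ref{thm:main-algebra}, observe that linear term functions have the affine shape $axb+c$ with possibly missing components (the paper enumerates all eight resulting types, including $x+c$, which your list omits but which your finite case analysis clearly covers), classify the gates by type in FO via the EC-representation, and replace each $\mathcal{R}[x]$-gate by constantly many $\mathcal{R}$-gates through a guarded transduction followed by copy-gate elimination. The paper merely states the eight forms and defers the transduction details to the proof of Theorem~\ref{theorem-reg-exp}, whereas you spell out the composition formula explicitly; the content is the same.
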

	
\begin{proof}
	For a semiring $\mathcal{R}$ one has to observe that every linear term function $t^{\mathcal{R}}$ can be written as
	$t^{\mathcal{R}}(x) = a x b + c$ for semiring elements $a,b,c \in \mathcal{R}$ where any of the elements $a,b,c$ can also be missing.  
	In other words, the right-hand side of $t^{\mathcal{R}}(x)$ can be of one of the following 8 forms for $a,b,c \in \mathcal{R}$:
	$a \cdot x \cdot b+c$, $a \cdot x \cdot b$, $a \cdot x+c$, $x \cdot b+c$, $a \cdot x$, $x \cdot b$, $x+c$, $x$.
	By Theorem~\ref{thm:main-algebra} we compute from the input semiring expression $t$ in $\TC^0$ an equivalent circuit $\C$ over
	$\mathcal{R}[x]$ in EC-representation. We partition $V(\C)$ into $V(\C) = V_0 \cup V_1 \cup \dots \cup V_8$
	where $V_0$ contains all gates which evaluate to a semiring element and $V_1 \cup \dots \cup V_8$ contain
	gates which evaluate to a linear term function, grouped by the 8 possible types listed above.
	It is easy to see that the sets $V_i$ are FO-definable using the EC-representation of $\C$.
	For example, a gate $v$ carries an $a$-coefficient, i.e. it computes a term function of the form $a \cdot x$, $a \cdot x \cdot b$, $a \cdot x+c$ or $a \cdot x \cdot b+c$, if and only if there exists a path $v = v_1, v_2, \dots, v_m$
	such all nodes $v_1, \dots, v_{m-1}$ are labelled by the binary $\mathcal{F}(\mathcal{R})$-operation $\circ$ and $v_m$ is labelled by
	the unary $\mathcal{F}(\mathcal{R})$-operation ${\hat \cdot}_2$ that maps $a \in \mathcal{R}$ to the linear term function $a \cdot x$.
	This allows to carry over the arguments from the
	proof of Theorem~\ref{theorem-reg-exp}.
\end{proof}
	
\section{Future work} 
	
In the recent paper \cite{GanardiHKL17}, we proved a dichotomy theorem for the circuit evaluation problem for non-commutative semirings (which are not required
that have a $0$ or $1$): the problem is in $\DET \subseteq \NC^2$ or $\P$-complete. Moreover, we gave precise algebraic
characterizations for the two corresponding classes of semirings. In a forthcoming paper, we will prove a similar dichotomy theorem for 
the expression evaluation problem for a non-commutative semiring: it is in $\TC^0$ or $\NC^1$-complete. Theorem~\ref{theorem-semiring}	
will play a crucial role in the proof. This shows that our $\TC^0$-balancing procedure can be also used to show that a problem belongs to $\TC^0$,
despite the fact that the circuits we produce have logarithmic depth and not constant depth.
	
An interesting problem in connection with Theorem~\ref{thm:main-algebra} is to determine 
further classes of algebras $\A$ for which one can compute in $\TC^0$ for a given term over $\A$
an equivalent term over $\A$ of logarithmic depth.
	
\bibliographystyle{plain}

\begin{thebibliography}{10}
		
	\bibitem{AbrahamsonDKP89}
	Karl~R. Abrahamson, Norm Dadoun, David~G. Kirkpatrick, and Teresa~M. Przytycka.
	\newblock A simple parallel tree contraction algorithm.
	\newblock {\em J. Algorithms}, 10(2):287--302, 1989.
		
	\bibitem{BIS90}
	David A.~M. Barrington, Neil Immerman, and Howard Straubing.
	\newblock On uniformity within $\text{NC}^1$.
	\newblock {\em Journal of Computer and System Sciences}, 41:274--306, 1990.
		
	\bibitem{BonetB94}
	Maria~Luisa Bonet and Samuel~R. Buss.
	\newblock Size-depth tradeoffs for boolean fomulae.
	\newblock {\em Inf. Process. Lett.}, 49(3):151--155, 1994.
		
	\bibitem{Brent74}
	Richard~P. Brent.
	\newblock The parallel evaluation of general arithmetic expressions.
	\newblock {\em J. ACM}, 21(2):201--206, 1974.
		
	\bibitem{BshoutyCE95}
	Nader~H. Bshouty, Richard Cleve, and Wayne Eberly.
	\newblock Size-depth tradeoffs for algebraic formulas.
	\newblock {\em SIAM J. Comput.}, 24(4):682--705, 1995.
		
	\bibitem{Bus87}
	Samuel~R. Buss.
	\newblock The {Boolean} formula value problem is in {ALOGTIME}.
	\newblock In {\em Proceedings of the 19th {A}nnual {S}ymposium on {T}heory of
		{C}omputing (STOC 87)}, pages 123--131. ACM Press, 1987.

	\bibitem{BCGR92}
	Samuel~R. Buss, Stephen A. Cook, A. Gupta, and V. Ramachandran.
	\newblock An Optimal Parallel Algorithm for Formula Evaluation.
	\newblock {\em SIAM J. Comput.}, 21(4):755--780, 1992.

	\bibitem{Buss93}
	Samuel~R. Buss.
	\newblock Algorithms for boolean formula evaluation and for tree-contraction.
	\newblock {\em Proof Theory, Complexity, and Arithmetic}, pages 95--115, 1993.
		
	\bibitem{CoMc87}
	Stephen~A. Cook and Pierre McKenzie.
	\newblock Problems complete for deterministic logarithmic space.
	\newblock {\em Journal of Algorithms}, 8:385--394, 1987.
		
	\bibitem{ElberfeldJT12}
	Michael Elberfeld, Andreas Jakoby, and Till Tantau.
	\newblock Algorithmic meta theorems for circuit classes of constant and
	logarithmic depth.
	\newblock In {\em Proceedings of {STACS} 2012}, volume~14 of {\em LIPIcs},
	pages 66--77. Schloss Dagstuhl - Leibniz-Zentrum f{\"u}r Informatik, 2012.
		
	\bibitem{GHJLN17}
	Moses Ganardi, Danny Hucke, Artur Jez, Markus Lohrey, and Eric Noeth.
	\newblock Constructing small tree grammars and small circuits for formulas.
	\newblock {\em J. Comput. Syst. Sci.}, 86:136--158, 2017.
		
	\bibitem{GanardiHKL17}
	Moses Ganardi, Danny Hucke, Daniel K{\"{o}}nig, and Markus Lohrey.
	\newblock Circuit evaluation for finite semirings.
	\newblock In {\em Proceedings of the 34th Symposium on Theoretical Aspects of
		Computer Science, {STACS} 2017}, volume~66 of {\em LIPIcs}, pages
	35:1--35:14. Schloss Dagstuhl - Leibniz-Zentrum f{\"u}r Informatik, 2017.
		
	\bibitem{GMT88}
	Hillel Gazit, Gary~L. Miller, and Shang-Hua Teng.
	\newblock Optimal tree contraction in an {EREW} model.
	\newblock In S.~K. Tewksbury, B.~W. Dickinson, and S.~C. Schwartz, editors,
	{\em Concurrent Computations: Algorithms, Architecture and Technology}, pages
	139--156, New York, 1988. Plenum Press.
		
	\bibitem{GruberH08}
	Hermann Gruber and Markus Holzer.
	\newblock Finite automata, digraph connectivity, and regular expression size.
	\newblock In {\em Proceedings of {ICALP} 2008, Part {II}}, volume 5126 of {\em
	Lecture Notes in Computer Science}, pages 39--50. Springer, 2008.
		
	\bibitem{Imm99}
	Neil Immerman.
	\newblock {\em Descriptive complexity}.
	\newblock Graduate texts in computer science. Springer, 1999.
		
	\bibitem{KrebsLL17}
	Andreas Krebs, Nutan Limaye, and Michael Ludwig.
	\newblock A unified method for placing problems in polylogarithmic depth.
	\newblock {\em Electronic Colloquium on Computational Complexity {(ECCC)}},
	24:19, 2017.
		
	\bibitem{Loh01rta}
	Markus Lohrey.
	\newblock On the parallel complexity of tree automata.
	\newblock In Aart Middeldorp, editor, {\em Proceedings of the {12th}
		International Conference on Rewrite Techniques and Applications (RTA 2001),
		Utrecht (The Netherlands)}, volume 2051 of {\em Lecrture Notes in Computer
	Science}, pages 201--215. Springer, 2001.
		
	\bibitem{Lohrey15dlt}
	Markus Lohrey.
	\newblock Grammar-based tree compression.
	\newblock In {\em Proceedings of the 19th International Conference on
		Developments in Language Theory, {DLT} 2015}, volume 9168 of {\em Lecture
	Notes in Computer Science}, pages 46--57. Springer, 2015.
		
	\bibitem{MillerT87}
	Gary~L. Miller and Shang{-}Hua Teng.
	\newblock Dynamic Parallel Complexity of Computational Circuits.
	\newblock In {\em Proceedings of the 19th {A}nnual {S}ymposium on {T}heory of
		{C}omputing (STOC 87)}, pages 254--263. ACM Press, 1987.
		
	\bibitem{MillerT97}
	Gary~L. Miller and Shang{-}Hua Teng.
	\newblock Tree-based parallel algorithm design.
	\newblock {\em Algorithmica}, 19(4):369--389, 1997.
		
	\bibitem{Ruz81}
	Walter~L. Ruzzo.
	\newblock On uniform circuit complexity.
	\newblock {\em Journal of Computer and System Sciences}, 22(3):365 -- 383,
	1981.
		
	\bibitem{Spira71}
	Philip~M. Spira.
	\newblock On time-hardware complexity tradeoffs for boolean functions.
	\newblock In {\em Proc.~4th Hawaii Symp. on System Sciences}, pages 525--527,
	1971.
		
	\bibitem{Vol99}
	Heribert Vollmer.
	\newblock {\em Introduction to Circuit Complexity}.
	\newblock Springer, 1999.
		
\end{thebibliography}

\end{document}